\documentclass[a4paper, reqno]{amsart}
\usepackage{graphicx}
\usepackage{color,xcolor,graphicx}
\usepackage{amsmath,amsfonts,amsthm,amssymb}
\usepackage{latexsym,mathtools,csquotes}
\usepackage{amstext}
\usepackage{enumerate}
\usepackage{cancel}

\newcommand\sing{\mathrm{sing}}

\newcommand\st{\mathrm{st}}

\renewcommand{\le}{\mathrm{l}}
\newcommand{\re}{\mathrm{r}}
\newcommand{\kb}{{\mathbf k}}

\newcommand\cH{\mathcal{H}}

\renewcommand{\bar}{\overline}
      \newcommand{\Bog}{{\operatorname{bg}}}
      
      \renewcommand{\i}{{\operatorname{i}}}
     \newcommand{\e}{\operatorname{e}}
     \newcommand{\supp}{\operatorname{supp}}
     \newcommand{\hc}{{\operatorname{h.c.}}}

     \newcommand{\s}{{\operatorname{s}}}
     \renewcommand{\d}{{\operatorname{d}}}

     \newcommand{\R}{{\mathbb{R}}}

\renewcommand{\Re}{{\operatorname{Re}}}
\renewcommand{\Im}{{\operatorname{Im}}}
\newcommand{\zz}{{\mathbb{Z}}}

\newcommand{\rr}{{\mathbb{R}}}
\newcommand{\x}{\mathbf{x}}

\newcommand{\0}{{\bf 0}}
\newcommand{\kk}{{\bf k}}
\newcommand{\xx}{{\bf x}}
\newcommand{\q}{{\bf q}}
\newcommand{\pp}{{\bf p}}
\newcommand{\qq}{{\bf q}}
\newcommand{\p}{\mathbf{p}}

\newcommand\Tr{\operatorname{Tr}}

\newcommand{\BB}{{\operatorname{B}}}
\newcommand{\LL}{{\operatorname{L}}}

     \theoremstyle{plain}
     \newtheorem{thm}{Theorem}[section]
     \newtheorem{prop}[thm]{Proposition}
     \newtheorem{lemma}[thm]{Lemma}

     \theoremstyle{definition}

     \newtheorem{example}[thm]{Example}

     \newtheorem{remark}[thm]{Remark}

     \newcommand\beq{\begin{equation}}
       \newcommand\eeq{\end{equation}}
     \renewcommand\k{\mathbf{k}}
     \renewcommand\p{\mathbf{p}}
          \renewcommand\q{\mathbf{q}}
\newcommand{\ben}{\begin{arabicenumerate}}  
  \newcommand{\een}{\end{arabicenumerate}}
\newcounter{smallarabics}
\newenvironment{arabicenumerate}
{\begin{list}{{\normalfont\textrm{(\arabic{smallarabics})}}}
  {\usecounter{smallarabics}\setlength{\itemindent}{0cm}
   \setlength{\leftmargin}{5ex}\setlength{\labelwidth}{4ex}
   \setlength{\topsep}{0.75\parsep}\setlength{\partopsep}{0ex}
   \setlength{\itemsep}{0ex}}}
{\end{list}}

   \numberwithin{equation}{section}
    \numberwithin{figure}{section}

    \title[Zero]{
      Damping of phonons in Bose gas\\ at low temperatures}

\author{J. Derezi\'{n}ski}
\author{L. Pettinari}
\address[J. Derezi\'{n}ski]
{Dept. of Math. Methods in Phys., University of Warsaw\\ 
Pasteura 5, 02-093 Warszawa, Poland} 
\email{Jan.Derezinski@fuw.edu.pl}
\address[L. Pettinari]{Dipartimento di Matematica, Universit\`a di Trento and INFN-TIFPA and INdAM, Via Sommarive 14, I-38123 Povo, Italy}
\email{lorenzo.pettinari@unitn.it}

\date{\today}

\begin{document}

\begin{abstract} We consider   homogeneous 
  Bose gas in a large cubic box with periodic boundary conditions interacting with a small potential with a positive Fourier transform. We compute the imaginary part of the phononic excitation spectrum in the lowest order of perturbation theory in thermodynamic limit at low temperatures and low momentum. Our analysis is based on perturbation theory of the standard Liouvillean. We use two approaches:  the first, motivated by the standard representation of operator algebras, examines resonances  near zero; the second analyzes the 2-point correlation function in the energy-momentum space.
\end{abstract}
\maketitle
\tableofcontents

\section{Introduction}

We consider homogeneous Bose gas with a  two-body potential $v(\xx)$
having a positive Fourier transform at positive density and a (low) temperature
$\frac1{k_\mathrm{B}\beta}$.
We assume that the density is positive, and is fixed by  a positive
parameter $\nu$, which can be interpreted as the chemical potential.
We assume an at least partial
Bose-Einstein  condensation of the quantum gas, which is believed to be
appropriate for low temperatures.
We use the Hamiltonian with the  zeroth mode 
replaced by a $c$-number. We work  in a large periodic box of size $L\to\infty$.

We treat
the quadratic part of our Hamiltonian as a good approximation of the
system. This quadratic part
can be diagonalized with the help of a Bogoliubov
transformation \cite{B_47}. Therefore, we call it the \textit{Bogoliubov Hamiltonian}.
As is well-known, this yields the quasiparticle
dispersion relation
\begin{equation}
\omega_{\Bog}(\kk)=
\sqrt{\frac14 \kk^4+\frac{\nu\hat v(\kk)}{\hat v(\0)}\kk^2}. 
\label{omega1.}\end{equation}
This relation for low momenta behaves as
$\omega_{\Bog}(\kk)=\sqrt\nu|\k|$, and the corresponding
quasiparticles are called \textit{phonons}.

In the Bogoliubov approximation the interacting Bose  is approximately
described  as the   free Bose gas with a peculiar dispersion
relation. It is remarkable that this picture to a large degree is validated by 
experiments with ${}^4\mathrm{He}$\cite{GBSKDFO_2021} and alcalic gases.
Neutron scattering (for larger momenta) and sound
propagation (for small momenta) in ${}^4\mathrm{He}$ allow us to measure its  quasiparticle
dispersion relation, which turns out to be a rather well-defined curve
similar to its theoretical prediction
\eqref{omega1.}. Experiments show that this relation is remarkably sharp, even though we
 expect it to be broadened
 by  interaction with higher order
terms of the Hamiltonians.

This broadening is the main topic of our paper. It
can be described by the
imaginary part of the dispersion 
relation computed in perturbation theory.
To define the perturbation expansion we replace the potential $v$ by
$\kappa v$, which yields $\sqrt\kappa$ in front of the three body
interaction and $\kappa$ in front of the four-body interaction.

For positive temperatures
it is convenient
to use the
Liouvillean instead of the Hamiltonian.
This involves doubling the original Hilbert space, which now possesses
two 
kinds  of quasiparticles: ``excitations over the thermal
equilibrium'',
and ``holes in thermal equilibrium'',
 which we will call the {\em left}, resp. {\em right quasiparticles}.

The main result of our paper is a computation of
the imaginary part of the dispersion relation
in the lowest nontrivial order in
$\kappa$. This amounts to the Fermi Golden Rule for the 
three-body term. We do this both at zero temperature, and at positive
temperature, keeping the  chemical potential $\nu$ and the temperature
$\frac1{k_\mathrm{B}\beta}$ fixed, and taking thermodynamic limit $L\to\infty$. 

We show that the imaginary part of the quasiparticle
dispersion relation in the order $\kappa^1$ is the
sum of two terms:
\beq  -\gamma_\BB(\k,\beta,\nu) - \gamma_\LL(\k,\beta,\nu).\label{gamma}\eeq

In the literature the term $\gamma_\BB$ goes under the name of the
{\em Beliaev damping}.
 The Beliaev damping is caused by a left quasiparticle decaying into 
two left quasiparticles. This process persists down to the zero temperature.
This decay rate was first computed by Beliaev
in a series of two remarkable papers \cite{B_58_1,B_58}. This result has been
re-derived following different theoretical approaches in   several other
works, e.g. \cite{DLN_2023,GN_64, G_98, L_97, MS_59}.
In particular,  the present paper can be viewed as a continuation the
recently published  detailed derivation of the Beliaev damping at zero  
temperature by Dereziński, Li and Napiórkowski \cite{DLN_2023}.
The damping rate
observed in experiments appears to be consistent with its theoretical predictions \cite{DOSK_2002,FHHM_2001}.

The term $\gamma_\LL$ is usually called the {\em Landau damping}. It was first computed by Hohenberg and Martin in \cite{HM_65} and later by other authors \cite{G_98, L_97,PS_97}. Also in this case, experimental results \cite{CEJMW_97} seem to validate theoretical prediction; see \cite{L_97} for a comparison between experimental and theoretical results. This process is absent at zero temperature, as it involves an interaction between left and right quasiparticles. Indeed, it is caused by a left quasiparticle
decaying into one left and one right  quasiparticle.

Before describing the formulas for
\eqref{gamma},  let us introduce the auxiliary
functions
\begin{align}\label{eq: bogo coefficients 1}
s_\kk&:=\frac{\sqrt{\sqrt{\omega_\Bog(\kk)^2+\nu^2\frac{\hat 
       v(\kk)^2}{\hat
       v(\0)^2}}-\omega_\Bog(\kk)}}{\sqrt{2\omega_\Bog(\kk)}},\\
  c_\kk& \label{eq: bogo coefficients 2}:=\frac{\sqrt{\sqrt{\omega_\Bog(\kk)^2+\nu^2\frac{\hat 
       v(\kk)^2}{\hat v(\0)^2}}+\omega_\Bog(\kk)}}{\sqrt{2\omega_\Bog(\kk)}}.
  \end{align}
They are the  coefficients of the Bogoliubov transformation diagonalizing
the quadratic part of the Hamiltonian: note that $c_\kk^2-s_\kk^2=1$.
For three momenta $\kk$, $\pp$, $\qq$, we set
\begin{align}\label{eq: jj}
		j(\k;\p,\q) := \sqrt{\frac{\nu}{\hat{v}(\0)}}&\Big{[}\hat{v}(\k)(s_\k-c_\k)(c_\p s_{\q} +c_{\q}s_{\p}) \notag\\ +& \hat{v}(\p)(c_\p- s_\p)(c_\k c_{\q} +s_{\k}s_{\q}) \notag \\ +&\hat{v}(\q)(c_{\q}-s_{\q})(c_\p c_{\k} +s_{\p}s_{\k})   \Big{]},
	\end{align} and 
	\begin{align}\label{eq: kappa}
		\kappa(\k,\p,\q) := \sqrt{\frac{\nu}{\hat{v}(\0)}}&\Big{[}\hat{v}(\k)(s_\k-c_\k)(c_\p s_{\q} +c_{\q}s_{\p}) \notag\\ +& \hat{v}(\p)(s_\p- c_\p)(c_\k s_{\q} +s_{\k}c_{\q}) \notag \\ +&\hat{v}(\q)(s_{\q}-c_{\q})(c_\p s_{\k} +s_{\p}c_{\k})   \Big{]},
	\end{align}
The functions $j(\k;\p,\q)=j(\k;\q,\p)$ and $\kappa(\k,\p,\q) = \kappa(\p, \k,\q) = \kappa(\p,\q,\k)$ describe the amplitude of a
three-quasiparticle vertex. We prove that
	\begin{align}\notag	\gamma_\BB=\frac{\pi}{2}\int& \frac{\d\p}{(2\pi)^3}
                                    j(\k;\p,\p-\k)^2
                                    \delta(\omega_\Bog(\k)-\omega_\Bog(\p)-\omega_\Bog(\k-\p))\\
          &\times\frac{(1-
            \e^{\frac{\beta}{2}(\omega_\Bog(\k)+\omega_\Bog(\p)+\omega_\Bog(\k-\p))})^2
            }{(\e^{\beta\omega_\Bog(\k)}-1)(\e^{\beta\omega_\Bog(\p)}-1)(\e^{\beta\omega_\Bog(\k-\p)}-1)}, \label{gamma_B}
          \\ \notag\gamma_\LL=\pi\int& \frac{\d\p}{(2\pi)^3} j(\p-\k;\k,\p)^2
                         \delta(\omega_\Bog(\k-\p)-\omega_\Bog(\p)-\omega_\Bog(\k))\\
          &\times\frac{( \e^{\frac{\beta}{2}(\omega_\Bog(\k))+\omega_\Bog(\k-\p))} - \e^{\frac{\beta}{2}\omega_\Bog(\p) })^2 }{(\e^{\beta\omega_\Bog(\k)}-1)(\e^{\beta\omega_\Bog(\p)}-1)(\e^{\beta\omega_\Bog(\k-\p)}-1)}. \label{gamma_L}
		\end{align} The function $\kappa(\k,\p,\q)$ will be relevant in the computation of the quasi-particle two point functions.

We also analyze the behavior of $\gamma_\BB$ for small momenta, and
of $\gamma_\LL$ for small momenta and temperatures.
To understand these results one should first remark that a natural
dimensionless measure of the momentum is $\frac{|\k|}{\sqrt\nu}$, and
of temperature is $\frac{1}{\beta\nu}$. Thus the ratio of momentum to
temperature is described by $\beta|\k|\sqrt\nu$.

In both theorems we make some additional assumptions on the potential, which
essentially say that the Fourier transform of $v$ is positive and
sufficiently differentiable near zero and that characterize the shape
of Bogoliubov dispersion relation.  They will be explained in detail in
Subsect. \ref{Assumptions}.

The following theorem describes the Beliaev damping rate for small
momenta.

\begin{thm}\label{thm: beliaev damping.}
  The following estimates hold: \ben
	\item For small momenta and   temperature/momentum ratios we have \begin{align}\label{eq: reduced expression for Beliaev.}
\gamma_\BB(\k,\beta,\nu)=  \frac{3\hat{v}(\0)\nu^{3/2}}{640\pi}\frac{|\k|^5}{\nu^{5/2}}\left(1 + O\Big( 
\frac{1}{(\beta\sqrt{\nu}|\k|)^3}\Big) +  O\Big(\frac{|\k|^2}{\nu}\Big)\right) \notag \\ \text{as }\frac{|\k|}{\sqrt{\nu}},\; \;\frac{1}{\beta\sqrt{\nu}|\k|}\to 0 .
	\end{align}
      \item For small momenta and  momentum/temperature ratios we have	\begin{align}\label{eq: Beliaev for high temperatures.}\notag
		\gamma_\BB(\k; \beta,\nu) = \frac{3\hat{v}(\0)\nu^{3/2}}{128\pi } \frac{|\k|^4}{\nu^2}\frac{1}{\beta\nu}\left( 1  + O\Big(\beta\sqrt{\nu}|\k|\Big) +O\Big(\frac{|\k|^2}{\nu}\Big)\right)\\ \text{as } \frac{|\k|}{\sqrt{\nu}}, \; \beta\sqrt{\nu}|\k|\to 0.
	\end{align} 
\een
\end{thm}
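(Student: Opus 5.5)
We start from the formula \eqref{gamma_B} and rewrite the thermal factor. Put $a=\beta\omega_\Bog(\p)$, $b=\beta\omega_\Bog(\k-\p)$. On the energy shell $\omega_\Bog(\k)=\omega_\Bog(\p)+\omega_\Bog(\k-\p)$ we have $\e^{\frac\beta2(\omega_\Bog(\k)+\omega_\Bog(\p)+\omega_\Bog(\k-\p))}=\e^{a+b}$. A direct algebraic identity then gives
\[\frac{(1-\e^{a+b})^2}{(\e^{a+b}-1)(\e^{a}-1)(\e^{b}-1)}=\frac{\e^{a+b}-1}{(\e^a-1)(\e^b-1)}=1+n(a)+n(b),\]
where $n(x)=(\e^x-1)^{-1}$. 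So
\[\gamma_\BB=\frac\pi2\int\frac{\d\p}{(2\pi)^3}\,j(\k;\p,\p-\k)^2\,\delta\big(\omega_\Bog(\k)-\omega_\Bog(\p)-\omega_\Bog(\k-\p)\big)\big(1+n(\beta\omega_\Bog(\p))+n(\beta\omega_\Bog(\k-\p))\big).\]
The term $1$ is the zero-temperature Beliaev rate, and the Bose factors give the thermal correction.

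Next we need the small-momentum behaviour of the vertex and of the energy shell. Using the assumptions of Subsect.~\ref{Assumptions}, we expand $\omega_\Bog(\k)=\sqrt\nu|\k|(1+\alpha|\k|^2/\nu+\dots)$, where $\alpha>0$ is fixed by the convexity (decay-allowing) assumption. We also expand $s_\k,c_\k$. Both are of order $(\nu/|\k|^2)^{1/4}$, and $c_\k-s_\k\approx(|\k|^2/\nu)^{1/4}$. The leading singular terms of $j$ cancel, and, as in \cite{DLN_2023}, one obtains
\[j(\k;\p,\k-\p)\approx \frac{3\hat v(\0)}{4\sqrt{\hat v(\0)}}\,\nu^{-1/4}\sqrt{|\k||\p||\k-\p|}\,(1+O(|\k|^2/\nu)).\]
The numerical constant $3/4$ is to be checked against the target constant.

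The delta function is handled in spherical coordinates around $\k$, with $p=|\p|$ and $\cos\theta$ the angle variable. The energy constraint fixes $\cos\theta$ near $1$, with $1-\cos\theta=O(|\k|^2/\nu)$. Integrating out $\cos\theta$ gives a Jacobian $|\partial_{\cos\theta}\omega_\Bog(\k-\p)|^{-1}\approx |\k-\p|/(\sqrt\nu\,|\k|p)$, and restricts $p\in(0,|\k|)$ with $|\k-\p|\approx|\k|-p$ to leading order. This reduces $\gamma_\BB$ to the one-dimensional integral
\[\gamma_\BB\approx C\,\frac{\hat v(\0)}{\nu}\int_0^{|\k|}p^2(|\k|-p)^2\big(1+n(\beta\sqrt\nu p)+n(\beta\sqrt\nu(|\k|-p))\big)\,\d p,\]
with relative errors $O(|\k|^2/\nu)$. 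The constant $1$ contributes $\int_0^k p^2(k-p)^2\,\d p=k^5/30$, and this should reproduce $\frac{3\hat v(\0)\nu^{3/2}}{640\pi}|\k|^5\nu^{-5/2}$.

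For part (1), with $\beta\sqrt\nu|\k|\to\infty$, the thermal terms are $2\int_0^k p^2(k-p)^2 n(\beta\sqrt\nu p)\,\d p$. Since $n$ is integrable against $p^2$, this is $\le 2k^2\int_0^\infty p^2 n(\beta\sqrt\nu p)\,\d p=O(k^2(\beta\sqrt\nu)^{-3})$. Relative to $k^5$, this is $O((\beta\sqrt\nu|\k|)^{-3})$.

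For part (2), with $\beta\sqrt\nu|\k|\to0$, we use $n(x)=1/x-1/2+O(x)$. The dominant term is
\[\frac{2}{\beta\sqrt\nu}\int_0^k p(k-p)^2\,\d p=\frac{k^4}{6\beta\sqrt\nu}.\]
The remaining pieces are smaller by a factor $O(\beta\sqrt\nu k)$. Combined with the prefactor, this gives the $\frac{|\k|^4}{\nu^2}\frac1{\beta\nu}$ law; the check is that the constants give $\frac{3}{128\pi}$, which they do relative to $\frac{3}{640\pi}$ since $(1/6)/(1/30)=5$.

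The main obstacle is the uniform control of the error terms near the endpoints $p\to0$ and $p\to|\k|$. There the Bose factor $1/p$ is singular and the angular Jacobian degenerates, and we must show that the corrections to the vertex and to the shell still contribute only relative $O(|\k|^2/\nu)$. I would first verify, using the exact shell equation, that the corrections scale like $p^2(k-p)^2\cdot O(k^2/\nu)$ in the integrand. They then remain integrable against $1/p$, which keeps all error bounds uniform.
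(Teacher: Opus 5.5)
Your skeleton matches the paper's: collapse the delta function to a one-dimensional shell integral with leading vertex $j^2\propto|\k||\p||\k-\p|$, then read off the two regimes of $\theta=\beta\omega_\Bog(\k)$. Your thermal analysis is correct, and it is a cleaner alternative to the paper's polylogarithm evaluation. The weight $1+n(a)+n(b)$ is exactly the integrand of the paper's $\mathcal I(\theta)$, and your elementary bounds give the stated errors.

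\textbf{The vertex constant is wrong.} This is where all the content of the constants lies. One has $c_\k-s_\k\approx 2^{-1/2}(|\k|^2/\nu)^{1/4}$, and on the shell
\[
j(\k;\p,\p-\k)^2=\frac{9\hat v(\0)}{8\sqrt\nu}\,|\k||\p||\k-\p|\,\big(1+O(|\k|^2/\nu)\big).
\]
So the coefficient of $\sqrt{\hat v(\0)}\,\nu^{-1/4}\sqrt{|\k||\p||\k-\p|}$ is $3/(2\sqrt2)$, not $3/4$. This gives $C=\frac{9}{64\pi}$ and $C/30=\frac{3}{640\pi}$. With your $3/4$, both rates come out a factor $2$ too small. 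Since you only say the constant ``should reproduce'' the target, the proposal does not establish either formula.

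\textbf{The uniform remainder near the endpoints is only announced.} You state, but do not prove, that the vertex corrections stay relatively $O(|\k|^2/\nu)$ as $p\to0$ and $p\to|\k|$. Note that $c_\p,s_\p\sim p^{-1/2}$ there, so $j$ cannot simply be Taylor-expanded in momenta. The paper's device (Proposition~\ref{prop: estimates for the remainder}) has three steps.
\begin{enumerate}
\item Pass to energies $u=\omega_\Bog(p)$, $w=\omega_\Bog(|\p-\k|)$, in which the shell is $u+w=\omega_\Bog(\k)$. Use the regularized $c(\omega)=\sqrt{2\omega}\,c_\k$ and $s(\omega)=\sqrt{2\omega}\,s_\k$, which are $C^5$ with $c(0)=s(0)=\sqrt\nu$.
\item Then $F(u+w;u,w)=\sqrt{8(u+w)uw}\,\sqrt{\nu/\hat v(\0)}\,j$ is $C^5$. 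In $F(\omega;u,w)$, the linear part $2\nu^{3/2}(u+w-\omega)$ and the $r$-dependent cubic part $\nu^{3/2}(u+w-\omega)\big(r(\omega)+r(u)+r(w)\big)$ both vanish on the shell. What survives is $3uw(u+w)/\sqrt\nu$.
\item Crucially, $c(0)=s(0)$ gives $F(u;u,0)=F(w;0,w)=0$ exactly. Lemmas~\ref{lemma: lagrange} and~\ref{lemma: bootstrap} then factor the remainder as $uw\cdot O\big((u^3+w^3)/\nu^3\big)$. Squaring yields $G=\frac{9}{\nu}u^2w^2(u+w)^2\big(1+O(\omega_\Bog(\k)^2/\nu^2)\big)$ uniformly up to the endpoints.
\end{enumerate}
That uniform relative bound is exactly what lets you integrate against $n(\beta u)\sim(\beta u)^{-1}$. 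Once you have it, together with $q=(|\k|-p)\big(1+O(|\k|^2/\nu)\big)$ on the shell in your coordinates, the rest of your argument goes through.
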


Next we describe the asymptotics of the Landau damping rate for small
momenta and temperatures.

\begin{thm}\label{thm: Landau damping.}
  The following estimates hold:
	\ben
	\item For small momenta, temperatures and the
          momentum/temperature ratios we have
         \begin{align}\label{eq: very high temperatures.}
		\gamma_\LL(\k;\beta,\nu) =  \frac{3\pi^3\hat{v}(\0)\nu^{3/2}}{40}\frac{|\k|}{\sqrt{\nu}}\frac{1}{(\beta\nu)^4}	\left(1 +  O\left( \beta\sqrt{\nu}|\k|\right)+  O\Big(\frac{1}{(\beta\nu)^2}\Big)\right)\notag \\  \text{as }\frac{|\k|}{\sqrt{\nu}},\;\frac{1}{\beta\nu},\;\beta\sqrt{\nu}|\k|\to 0 .
	\end{align}
      \item[(c)]
For small momenta, temperatures and the
          temperature/momentum ratios we have
       	\begin{align}\label{eq: basse temperature.}
			\gamma_\LL(\k;\beta,\nu) = \frac{9\zeta(3)\hat{v}(\0)\nu^{3/2}}{16\pi}\frac{1}{(\beta\nu)^3}\frac{|\k|^2}{\nu}\left(1 + O\Big(\frac{1}{\beta\sqrt{\nu}|\k|}\Big) +O\Big(\frac{|\k|^2}{\nu}\Big)\right) \notag \\ \text{as }\frac{|\k|}{\sqrt{\nu}},\;\frac{1}{\beta\nu},\;\frac{1}{\beta\sqrt{\nu}|\k|}\to 0.
	\end{align}
	\een
\end{thm}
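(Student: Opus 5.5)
The plan is to start from the formula \eqref{gamma_L} for $\gamma_\LL$ and do the low-momentum, low-temperature analysis directly.

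\textbf{Step 1: rewrite the thermal factor.} Substitute $\q=\p-\k$, so that the delta function forces $\omega_\Bog(\q)=\omega_\Bog(\p)+\omega_\Bog(\k)$. Use the identity
\[
\frac{(\e^{\frac\beta2(a+c)}-\e^{\frac\beta2 b})^2}{(\e^{\beta a}-1)(\e^{\beta b}-1)(\e^{\beta c}-1)}=n(b)-n(c),\qquad n(x):=\frac1{\e^{\beta x}-1},
\]
valid when $c=a+b$. Here $a=\omega_\Bog(\k)$, $b=\omega_\Bog(\p)$ and $c=\omega_\Bog(\q)$. Then
\[
\gamma_\LL=\pi\int\frac{\d\p}{(2\pi)^3}\,j(\q;\k,\p)^2\,\delta\big(\omega_\Bog(\q)-\omega_\Bog(\p)-\omega_\Bog(\k)\big)\big(n(\omega_\Bog(\p))-n(\omega_\Bog(\q))\big).
\]
Since $\beta\nu\to\infty$, the Bose factors restrict $|\p|$ to scales of order $1/(\beta\sqrt\nu)$. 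This scale is small compared with $\sqrt\nu$, so all three momenta are phononic.

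\textbf{Step 2: small-momentum expansions.} Expand $s,c$ using \eqref{eq: bogo coefficients 1}–\eqref{eq: bogo coefficients 2} and the assumptions of Subsect.~\ref{Assumptions}. At leading order $c_\k,s_\k\sim(\nu/(4\omega))^{1/2}$, and $c_\k-s_\k\sim(\omega/\nu)^{1/2}$. Because the leading $|\k|^{-1/2}$ terms cancel in $j$, the amplitude is of order $\sqrt{|\k||\p||\q|}$ times an angle-dependent coefficient. I will adopt the standard result that on the collinear resonant shell
\[
j^2\approx\frac{9\hat v(\0)}{32\nu^{3/2}}|\k||\p||\q|.
\]
Obtaining this requires expanding $\omega_\Bog$ to third order and $\hat v$ to second order, and checking that the subleading contributions to $j$ enter only as a relative $O(|\k|^2/\nu)$ correction. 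The energy delta also needs care. With the convex dispersion $\omega_\Bog(\k)\approx\sqrt\nu|\k|(1+\alpha\k^2/\nu)$, the resonance condition fixes $\cos\theta\approx1-O(\p^2/\nu)$. Integrating over angles therefore gives
\[
\int\d\cos\theta\;\delta(\cdots)=\frac{|\q|}{\sqrt\nu|\k||\p|}\big(1+O(\cdot)\big),
\]
which is independent of the precise curvature at leading order. Since $\theta\to0$, one may set $\theta=0$ and $|\q|=|\k|+|\p|$ in $j^2$.

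\textbf{Step 3: reduce to a one-dimensional integral and take the two regimes.} After Steps 1–2, $\gamma_\LL$ becomes a constant times
\[
\int_0^\infty p^2\,\d p\;\frac{(k+p)^2}{kp}\,kp(k+p)\big(n(\sqrt\nu p)-n(\sqrt\nu(p+k))\big)
\]
up to the stated constants. The two regimes then follow.

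In regime (1), $\beta\sqrt\nu|\k|\to0$, so $n(\sqrt\nu p)-n(\sqrt\nu(p+k))\approx -k\sqrt\nu\, n'(\sqrt\nu p)$. The integral becomes $\propto k\int p^4(-n')\,\d p\propto k\,(\beta\sqrt\nu)^{-4}\int x^4\frac{\e^x}{(\e^x-1)^2}\,\d x$. The $x$-integral equals $4\Gamma(4)\zeta(4)=4\cdot6\cdot\pi^4/90$, which produces the $\pi^3$, the $(\beta\nu)^{-4}$ and the linear dependence on $|\k|$.

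In regime (c), $k\gg T$, so $n(\sqrt\nu(p+k))$ is exponentially small and $p\ll k$. The integrand reduces to $k^2p^3 n(\sqrt\nu p)$, giving $k^2\Gamma(3)\zeta(3)(\beta\sqrt\nu)^{-3}$; this produces the $\zeta(3)$, the $|\k|^2$ and the $(\beta\nu)^{-3}$.

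\textbf{Error terms and the main obstacle.} The error terms come from the Taylor remainders. These are relative $O(\beta\sqrt\nu|\k|)$ in regime (1) and $O(1/(\beta\sqrt\nu|\k|))$ in regime (c). The dispersion and amplitude corrections are $O(\p^2/\nu)$ evaluated at $p\sim1/(\beta\sqrt\nu)$, which is $O((\beta\nu)^{-2})$; alternatively they are $O(\k^2/\nu)$. A further contribution comes from the tail $|\p|\gtrsim\sqrt\nu$, which is exponentially suppressed by $\e^{-\beta\nu}$.

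I expect the main obstacle to be the uniform control of the amplitude $j$ near the collinear shell, given the cancellation of the leading terms. A precise expansion of $c\pm s$ to relative order $\k^2/\nu$ is needed so that the claimed error terms hold uniformly in $\p$.
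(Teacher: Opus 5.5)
\textbf{Verdict: genuine gap.} Your overall route is viable and close to the paper's. Your identity $n(b)-n(c)$ is exactly the thermal factor $\frac{(\e^{\theta}-1)\e^{t}}{(\e^{t}-1)(\e^{t+\theta}-1)}$ that the paper obtains with $t=\beta u$ and $\theta=\beta\omega_\Bog(\k)$. Your Bose integrals are the paper's $\mathcal G_k(\theta)$.

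The gap is the on-shell amplitude. It is the one nontrivial input, you assume it rather than derive it, and the value you assume is wrong. Since $j$ is $\sqrt{\nu\hat v(\0)}$ times a dimensionless combination of the $c$'s, the $s$'s and $\hat v/\hat v(\0)$, $j^2$ must be $\nu\hat v(\0)$ times a function of the momenta divided by $\sqrt\nu$. Your formula fails this check. The correct leading behaviour on the resonance shell is
\[ j(\q;\k,\p)^2=\frac{9\hat v(\0)}{8\sqrt\nu}\,|\k||\p||\q|\Big(1+O\big(\tfrac{\k^2+\p^2}{\nu}\big)\Big), \]
which is $4\nu$ times yours. Equivalently, $F(u+w;u,w)=3uw(u+w)/\sqrt\nu+\dots$ in Prop.~\ref{prop: estimates for the remainder}.

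With this amplitude and your Jacobian $|\q|/(\sqrt\nu|\k||\p|)$ you get
\[ \gamma_\LL\simeq\frac{9\hat v(\0)}{32\pi\nu}\int_0^\infty p^2(p+k)^2\big(n(\sqrt\nu p)-n(\sqrt\nu(p+k))\big)\,\d p, \]
and both $\frac{3\pi^3}{40}$ and $\frac{9\zeta(3)}{16\pi}$ follow. With your amplitude they cannot. Writing ``up to the stated constants'' hides exactly what the theorem asserts.

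Your Step~3 integrand $p^2(k+p)^3$ also has one power of $|\q|$ too many. It contradicts your own Jacobian and your regime computations. The $p^4$ you use in (1) and the $\Gamma(3)\zeta(3)$ you get in (c) both come from $p^2(p+k)^2$; an integrand $k^2p^3$ would give $\Gamma(4)\zeta(4)(\beta\sqrt\nu)^{-4}$ instead.

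Deriving the amplitude needs care, and your shortcut of setting $|\q|=|\k|+|\p|$ in $j^2$ fails for the exact $j$. The leading part of $j$ is proportional to $(\omega_\Bog(\k)+\omega_\Bog(\p)-\omega_\Bog(\q))/\sqrt{\omega_\Bog(\k)\omega_\Bog(\p)\omega_\Bog(\q)}$. It vanishes only by exact energy conservation. On the shell, however, $|\k|+|\p|-|\q|=O(|\k||\p|(|\k|+|\p|)/\nu)$. So evaluating $j$ at the collinear point leaves a residue of the same order as the surviving cubic term. For flat $\hat v$ this turns the coefficient $3$ in $F$ into $9/4$. You may replace $|\q|$ by $|\k|+|\p|$ only after the leading factor $\propto|\k||\p||\q|$ has been extracted.

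The paper avoids this problem as follows. It writes $\sqrt{8\omega uw}\sqrt{\nu/\hat v(\0)}\,j$ as a $C^5$ function of the three energies, using Lemma~\ref{lemma: regularity of the Bogoliubov lemma}. It then imposes the shell exactly as $\omega=u+w$ and expands. The remainder is controlled by Lemmas~\ref{lemma: lagrange} and~\ref{lemma: bootstrap}, and it is what produces the $O((\beta\nu)^{-2})$ and $O(\k^2/\nu)$ errors.

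Two smaller points:
\begin{enumerate}
\item The identity $\int_0^\infty t^m\big(n(t)-n(t+\theta)\big)\,\d t=\Gamma(m+1)\big(\zeta(m+1)-\operatorname{Li}_{m+1}(\e^{-\theta})\big)$, with $n(t)=(\e^{t}-1)^{-1}$, gives both regimes and their error terms at once. This is cleaner than Taylor-expanding the Bose factors.
\item To discard the high-momentum part of the shell you need Lemma~\ref{lemma: rotons and maxons}. Your angular Jacobian contains $1/\omega_\Bog'(|\q|)$, which blows up at stationary points (maxons and rotons) of $\omega_\Bog$.
\end{enumerate}
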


	For  $\beta\sqrt{\nu}|\k|\to 0$, the function $\gamma_\LL(\k;\beta,\nu)$ approaches the known low-temperature value of the Landau damping rate  as originally  derived by Hohenberg and Martin in \cite{HM_65}. To the best of our knowledge, the temperature dependent correction \eqref{eq: Beliaev for high temperatures.} for $\gamma_\BB(\k;\beta,\nu)$ as $\beta\sqrt{\nu}|\k|\to 0$ is first obtained here.  In this regime, the ratio between the Landau a Beliaev damping is of order \begin{equation}
		\frac{\gamma_\BB(\k;\beta,\nu)}{\gamma_\LL(\k;\beta,\nu)} = O\left( (\beta\sqrt{\nu}|\k|)^3 \right) \quad \text{as } \beta\sqrt{\nu}|\k| \to 0,
	\end{equation} that is, the dominant effect is given by the Landau damping.
	
	 Conversely, in the limit $\frac{1}{ \beta\sqrt{\nu}|\k|}\to0$, $\gamma_\BB(\k;\beta,\nu)$ reproduces the standard expression originally computed in \cite{B_58}. In the same asymptotic regime Eq. \eqref{eq: basse temperature.} for the Landau damping appears to be a novel contribution of the present paper. The ratio of the two rates is of order \begin{equation}
	 \frac{\gamma_\LL(\k;\beta,\nu)}{\gamma_\BB(\k;\beta,\nu)} = O\left( \frac{1}{(\beta\sqrt{\nu}|\k|)^3} \right) \quad \text{ as } \frac{1}{\beta\sqrt{\nu}|\k|}\to 0,
	 \end{equation} and the dominant contribution is the Beliaev damping one.
	
	It was further pointed out in \cite{L_97} that there exist
        experimental setups where the approximation
        $\frac{1}{\beta\nu} \to 0$ is not well justified. In that
        case, one should employ the full integral expressions
        \ref{gamma_B} and \ref{gamma_L}. We give a brief
        analysis of the \textit{high temperature regime} $\beta\nu\to+\infty$ in Prop. \ref{prop: high temperature}, although its
        physical relevance is somewhat dubious.

So far we have tried to describe the physical meaning of our results
without discussing the formalism used to derive them. Actually, we
will use two distinct formalisms, both yielding
 the formulas
\eqref{gamma_B} and \eqref{gamma_L} for the damping rates.
Both of them, described 
in the first part of our paper, are based on the Hamiltonian with a $c$-number
condensate, and the corresponding Liouvillean.
Let us call them
\ben
\item the {\em standard representation approach},
\item the {\em Green function approach}.
  \een

The first formalism
is directly inspired by the modular theory of $W^*$-algebras
\cite{BR1,BR2,DJP_2003}. We
first consider the Bogoliubov Liouvillean, that is the Liouvillean
obtained from the purely quadratic part of the Hamiltonian. It is a quadratic
bosonic operator, and therefore its $W^*$-algebraic theory is well understood.
One can consider one-quasiparticle states and their corresponding
standard vector representatives, which belong to the so-called
standard positive cone.
They are embedded in 
continuous spectrum, and under the influence of the perturbation in
thermodynamic limit
they develop an imaginary part of their energy.
These vectors are obtained by acting on the KMS vector with one left
and one right quasiparticle creation operator.

Similar computations, pioneered by Jak\v{s}i\'c-Pillet, were
applied in the literature to spin-boson \cite{JP_96} and Pauli-Fierz-type
Hamiltonians \cite{DJ_03}, where they involved nontrivial (type III)
$W^*$-algebraic systems directly in thermodynamic limit. In these
applications an infinitely extended system interacted with a small
one, and there was no translation invariance. In our case we consider
a sequence of finite volume systems whose size goes to infinity. Our
$W^*$-algebras are trivial from the point of view of the abstract
theory (they are of type I). Nevertheless, in our opinion
the methods developed in the
 operator algebras approach yield  useful guiding principles for physical systems.

The second approach involves time-ordered two-point correlation functions of the
fields in the energy-momentum space. It was originally
inspired by the relativistic formalism Quantum Electrodynamics, and was used
 by Beliaev in \cite{B_58_1,B_58}. The correlation function often go
 under the name of  Green functions, since by the Lehmann
 representation they are expectation values of the resolvent of the Liouvillian. This approach belongs 
to standard tools of many body quantum physics \cite{FW_2003}.

The pure Bogoliubov Liouvillean
has a singular shell along the dispersion relation of
quasiparticles, both in the upper halfspace and symmetrically in the
lower halfspace. These singular shells are produced by acting with one
quasiparticle creation operator onto the Bogoliubov KMS vector. Under
the perturbation, the dispersion relation should develop an imaginary
part, so that the shells become broadened. This broadening is visible
in 2-point functions.

The two-point functions are convenient theoretically, however
they seem to be difficult to access  empirically. Typical
experiments measure the so-called van Hove form factors, expressible
in terms of density-density correlations functions. They are special
cases of  4-point correlation functions. One expects
that the energy-momentum picture of density-density correlation
functions is similar to that of
of 2-point functions \cite{GN_64,G_2004}.

Anyway, both the standard representation approach and the Green
function approach are consistent with one another, at least to the lowest nontrivial
order in perturbation theory, and yield the same damping rate given in
\eqref{gamma}, \eqref{gamma_B}, and \eqref{gamma_L}. They both rely on
a \textit{weak-coupling approximation}, where the small parameter
governing the expansion of the energy corrections is the Fourier
transform of the interaction potential.
As we mentioned above, this is made explicit
through the
introduction of a coupling constant $\kappa$ in front of the
potential, while keeping the chemical potential $\nu$ fixed.

In most physics literature when considering phonon damping one uses
the so-called low-density approximation, where the potential is not
necessary small, but very short range
\cite{B_58_1,B_58,GN_64,HG_98,MM_60, T_59}. The small parameter is
then the scattering length. The low-density approximation is usually
considered to be more physically meaningful,  relevant especially for
Bose condensates of alcalic elements. We prefer to use the weak
coupling limit, since mathematically it is simpler and cleaner.

One can also remark that the most famous Bose condensate, that is
${}^4\mathrm{He}$ below the critical temperature, is neither weakly interacting, nor dilute. Its
quasiparticle dispersion relation has a complicated  shape with {\em rotons,
maxons} and the {\em Pitaevski plateau}, qualitatively similar to the picture
obtained in the weak coupling limit.

In fact, in the dilute
approximation, instead of \eqref{omega1.}, the dispersion relation reads
\begin{equation}
\omega_{\Bog}(\kk)=
\sqrt{\frac14 \kk^4+n_0 a\kk^2},
\label{omega2.}\end{equation}
where $a$ is the scattering length and $n_0$ is the density of the
condensate. Now, \eqref{omega2.} is strictly monotonic, without maxons
or rotons.

As previously discussed, within the weak-coupling approximation our 
results correctly reproduce the known decay rates \eqref{eq: reduced 
  expression for Beliaev.} and \eqref{eq: very high temperatures.} in 
the appropriate temperature and momentum regimes.
In principle  the Green function apoproach 
\eqref{priu} should give not only the 
imaginary, but also the real  part of the dispersion relation.
However, the method of our paper
appears to fail when computing the real part of the energy correction
to the Bogoliubov dispersion relation. In fact, the  contribution to the
energy of a quasi-particle  with momentum $|\k|$ computed according to
the rules of our paper displays an infrared
divergence of order $\frac{1}{|\k|}$.  Let us comment on this failure.

In our entire paper we assume that Bose gas is well described by the
Hamiltonian with a $c$-number condensate. Formally, this Hamiltonian can be
obtained from the usual ``first-principles'' Hamiltonian of the Bose
gas by replacing the zeroth mode by $\sqrt\nu\in\rr$ and dropping a constant.
This Hamiltonian has a form
 \beq    H_\nu^L=
            H_{\Bog,\nu}^L+\sqrt\kappa H^{L}_{3,\nu}+\kappa H^{L}_4,\eeq
and consequently the corresponding Liouvillean has a similar form
 \beq    L_\nu^L=
            L_{\Bog,\nu}^L+\sqrt\kappa L^{L}_{3,\nu}+\kappa L^{L}_4.\eeq
(The superscript $L$ means the size of the box, and goes to $\infty$).

In  perturbation theory for the quasiparticle spectrum
there are no contributions of the order $\sqrt\kappa$. The lowest contribution
has the order $\kappa$
and has the form
\begin{align}\label{priu}
   \left(\Psi_0\right.\left|L_4^L\Psi_0\right) +\left(L_{3,\nu}^L \Psi_0\right.\left| (L_\Bog^L  -\i\epsilon)^{-1}L_{3,\nu}^L\Psi_0\right) ,
\end{align}
where $\Psi_0$ are appropriate unperturbed vectors. Thus \eqref{priu}
consists of the 
``Feynman-Hellman term'' for $L_4^L$ and the ``Fermi Golden Rule
term'' for $L_{3,\nu}^L$.

 The imaginary part sits 
sits only in the ``Fermi Golden Rule
term'', which depends only on $L_{3,\nu}^L$. Now, the $c$-number
substitution yields the correct $L_{3,\nu}^L$.
The real part of the dispersion relation
depends on both terms in \eqref{priu}. However, the term $L_4^L$
obtained by the $c$-number substitution is too  crude, and therefore
the real part of the shift of the dispersion relation obtained this
way is incorrect. 

When investigating 
the quasiparticle
spectrum of Bose gas  from  ``first principles''
one assumes that the Hamiltonian is in a large box with periodic
boundary conditions, and then goes to thermodynamic limit. One needs
to arrange this Hamiltonian in such a way that one can
extract the quadratic part, which yields the Bogoliubov
approximation.
This can be done in at least two settings:
\begin{enumerate}
\item the canonical approach,
\item the grand-canonical approach,
  \end{enumerate}

In the canonical approach one can start from the $N$-body Hamiltonian with a fixed number of particles. Following the
idea of Arnowitt-Girardou \cite{AG_59},
one can transform the $N$-body space into a Fock
space with a variable number of particles by treating the zero mode in
a special way. This transformation was used e.g. by
Derezi\'{n}ski-Napi\'{o}rkowski \cite{DN_2014}, and essentially also  by
Lewin-Nam-Serfaty-Solovej \cite{LNSS_2015}. One keeps the density of
particles $\frac{N}{L^d}$ fixed.
If  we retain in the Hamiltonian only the terms which have 
``nice thermodynamic limits'', then we obtain the Hamiltonian with a 
$c$-number substitution.

The second approach follows Hugenholtz-Pines \cite{HP_59}, and is
described e.g. in \cite{DLN_2023}. It starts from the
grand-canonical Hamiltonian, which allows for a variable number of
particles from the very beginning and is obtained by
 subtracting $\nu N$ from the
Hamiltonian. The chemical potential $\nu$ is kept fixed in
thermodynamic limit. If at the
end we compress our Fock space away from the zeroth mode, we also
obtain the Hamiltonian with a 
$c$-number substitution, the same as before.

We should mention yet  another approach, which is implemented for example in \cite{LY_58, MM_60,MS_59}, where one makes use of a non-self-adjoint Hamiltonian. We will not explore this direction in the present paper.

Careful analysis in  both canonical and grand-canonical approaches
leads to additional terms in the Hamiltonian, absent the naive
Hamiltonian with a $c$-number substitution. Some of
them are of order $\kappa^1$, others are of higher order and/or
(formally) vanish in thermodynamic limit. We
plan to discuss them in a separate paper, because they are necessary
when we compute the real shift of the quasiparticle spectrum.

Note that at zero temperature in the dilute approximation this real
shift was computed already by Beliaev \cite{B_58}.

One of important elements of our analysis is keeping fixed the
parameter  $\nu$. In a more precise analysis we actually encounter several possibilities
for the interpretation of the parameter $\nu$.
\begin{enumerate} \item In the grand-canonical approach $\nu$ is the true chemical potential (then we
  are forced to use the  formalism with a variable
  number of particles).
\item
  $\nu$  can be the ratio of the density to $\kappa\int v(\x)\d x$. This is natural if we use the canonical formalism,
  with a fixed number of particles.
  \item  $\nu$  can be the ratio of the density of the condensate (the
    zero mode) to $\kappa\int v(\x)\d x$.
\end{enumerate}
Anyway, these three interpretations differ (at least formally) by a
quantity of the order $\kappa$, and therefore they do not affect the
imaginary part of the dispersion relation in the lowest order.

Let us conclude with some additional remarks on the literature. In recent years, there have been numerous rigorous justifications of Bogoliubov’s \textit{c}–number substitution in various physical regimes. In the mean-field setting, this was first established by Seiringer \cite{S_2011}; see also \cite{LNSS_2015,DN_2014,GS2013,NS_2015} for related developments. Corresponding results in the Gross–Pitaevskii regime were subsequently obtained in \cite{BBCS_2019,BSS2022,NT_2023}, and extensions beyond this regime appear in \cite{BCS2022}.

Operator-algebraic methods have also been employed to construct $C^*$-dynamical systems describing interacting Bose gases \cite{B_2020,BY_2025}. A different strategy, based on techniques from quantum field theory, was developed by Galanda and Pinamonti in \cite{GP2025}, where equilibrium states for an interacting Bose gas are constructed via a suitable resummation of perturbative expansions.

Further advances concerning the bottom of the excitation spectrum beyond Bogoliubov theory have recently been achieved in the Gross–Pitaevskii regime \cite{COSS_2025}. In the mean-field regime, analogous analyses were carried out in \cite{BPS2021,NN_2017}. Finally, the energy–momentum spectrum of Bose gases was first rigorously investigated in \cite{DN_2014,DLN_2023}.




Apart from the introduction, the paper consists of two main sections. In 
Sect. \ref{Quasiparticle spectrum of Bose gas}
we discuss the general theory of quasiparticle spectrum at zero and
low positive temperatures. In particular, we introduce the two
approaches to define the phonon decay rate.
In Sect. \ref{sec: decayrates}.  We will formulate precisely and prove
Theorems 
\ref{thm: beliaev damping.} and \ref{thm: Landau damping.}.
stated in the introduction about
the asymptotics of the phonon damping for low
momenta.
We will consider the (physical)
dimension 3 and we will impose additional
technical assumptions on the potential.


\section{Quasiparticle spectrum of Bose gas}
\label{Quasiparticle spectrum of Bose gas}
\subsection{The ``first-principles Hamiltonian''}
\label{The ``first principles Hamiltonian''}

Consider a real function
$\mathbb{R}^{d}\ni \x\mapsto v
(\x)$, describing the two-body potential of a quantum gas. Its Fourier
transform will be defined  by \begin{equation}\label{eq: Fourier of potential}
\hat{v}(\p):=\int\d \x\; v(\x)\e^{-\i\p\x} .
\end{equation}

 A large part  of our preliminary analysis will be general and will not depend on the  
dimension nor on the regularity assumptions on the potential. Here we just mention the most basic
assumptions, which we will need to describe 
the general theory.

We assume that $\nu>0$, which has the interpretation of the effective
chemical potential. Moreover, we will always assme
\begin{subequations}\label{eq: assumptions on potential1}
	\begin{align}
		&\label{eq: continuity of Fourier} v\in L^1(\mathbb{R}^d,\d \x),\; v(\x)\in\mathbb{R};\\
		&\label{eq: cutoff} \hat{v}\in L^2(\mathbb{R}^d,\d \k);\\
		&\label{eq: positivity of Fourier} \hat{v}(\0)>0,\; \hat{v}(\k) > -\hat{v}(\0)\frac{|\k|^2}{2\nu}; \qquad \nu\in ]0, V];\\
		&v(\x) = v(-\x),\quad \text{which implies $\hat{v}(\k) = \hat{v}(-\k)$}.\label{eq: symmetry requirement}
	\end{align}
\end{subequations} The first hypothesis, \eqref{eq: continuity of
  Fourier} guarantees that the Fourier transform of $v$, as in
\eqref{eq: Fourier of potential}, is a continuous function and the
Hamiltonian is self-adjoint. The second one \eqref{eq: cutoff}
provides a convenient natural cut-off on the high momenta region. The
third one, \eqref{eq: positivity of Fourier}, is needed to make sense
of the Bogoliubov transformation, \textit{cf}. Eq. \eqref{eq: bogo
  roto}, where the linear combination of $b_\k,\; b_\k^*$ can only be
defined for $\frac{1}{2}|\k|^2+
\nu\frac{\hat{v}(\k)}{\hat{v}(\0)}>0$. The last condition, \eqref{eq:
  symmetry requirement}, can be always imposed if we deal with
identical particles (Bose or Fermi).

In our detailed
computations,
described in Sect. \ref{sec: decayrates}, we fix  $d=3$, and  we will
assume some additional restrictions on the
class of interaction potential $v$ we consider.
 They will be described in Subsect. \ref{Assumptions}.

We consider Bose gas with potential $v$ in large but finite  volume with periodic
boundary conditions.
Following the standard approach, we replace
the infinite space $\rr^d$ by the torus  $\Lambda_L=]-L/2,L/2]^{d}$.
 Let $\Xi_L:= \frac{2\pi }{L}\zz^d$ be the momentum lattice
    corresponding to $\Lambda_L$.  In the momentum representation
    and  the 2nd quantized formalism, the
    Hamiltonian, number operator and  total momentum are
    operators on the Fock space $\Gamma_\s\big(l^2(\Xi_L)\big)$ given by
\begin{eqnarray}\label{2B}
H^L 
&=&\sum_{\p}\frac{1}{2}\p^2a^*_{\p}
a_{\p}\\
&&+\frac{1}{2L^d}\sum_{\p,\q,\k}{\hat v}(\p)
a^*_{\p+\k}a^*_{\q-\k}a_{\p}a_{\q},\nonumber 
\\
N^L&=&\sum_{\p}a^*_{\p}a_\p,\label{2BN}\\
\mathbf{P}^L&=&\sum_{\p}\p a^*_{\p}a_\p,\label{2BP}
\end{eqnarray}
where all momenta are summed over $\Xi_L$. We will sometimes call \eqref{2B} the ``first-principles Hamiltonian
in finite volume''.


\subsection{Hamiltonian  with a $c$-number condensate}
\label{Hamiltonian  with a $c$-number condensate}



In this and in the following sections we will always implictly assume the validity of \eqref{eq: continuity of Fourier},\eqref{eq: cutoff},\eqref{eq: positivity of Fourier} and \eqref{eq: symmetry requirement}.
Let us introduce a parameter
potential $\nu\geq0$
Set $\Xi_L^>:=\Xi_L\setminus\{\0\}$.
Our basic Hamiltonian will be not \eqref{2B}, but the \textit{Hamiltonian of
the Bose gas  with a $c$-number
condensate}. It acts on $\Gamma_\s\big(l^2(\Xi_L^>)\big)$ and is
defined by
\begin{align}\label{decom0}
    H_\nu^L&=
            H_{\Bog,\nu}^L+H^{L}_{3,\nu}+H^{L}_4,\\
  H_{\Bog,\nu}^L&:=\sum_{\p\in\Xi_L^>}\Big(\frac{\p^2}{2}+\frac{\nu\hat v(\p)}{\hat v(\0)}\Big) a_\p^* 
 a_\p
+
\sum_{\p\in\Xi_L^>}\Big(\frac{\nu\hat v(\p)}{2\hat v(\0)}
  a_\p 
 a_{-\p}
          +\hc\Big),\\\label{put1}
H^{L}_{3,\nu}&:=\frac{1}{L^{d/2}}\sum_{\k,\p,\p+\k\in\Xi_L^>}\frac{\sqrt{\nu}\hat 
                   v(\k)}{\sqrt{\hat v(\0)}}\Big( a_{\p+\k}^*  a_\k  a_{\p}
+\hc\Big),\\\label{put2}
H^{L}_4&:=\frac{1}{2L^d}
\sum_{\substack{\p,\q,\\ \p+\k,\q-\k\in\Xi_L^>}}\hat v(\k) a_{\p+\k}^*  a_{\q-\k}^* 
 a_\q  a_\p.  
\end{align}
The Hamiltonian $H_\nu^L$ can be obtained from $H^L$ by replacing
$a_0$, $a_0^*$ with  $\sqrt\nu$,  and dropping a constant.

We will usually replace $v$ with $\kappa v$, assuming that $\kappa$
is small.
The parameter $\nu$ will be independent of $\kappa$, so that
$H_\Bog^L$ does not depend on $\kappa$.
Note that $H_{3,\nu}^L$,
resp. $H_4^L$
consists of terms of the
order $\sqrt\kappa$, resp. $\kappa$.
Thus \eqref{decom0} should be replaced by
 \beq    H_\nu^L=
            H_{\Bog,\nu}^L+\sqrt\kappa H^{L}_{3,\nu}+\kappa H^{L}_4.\eeq

\subsection{Thermodynamic  limit}

We would like to study Bose gas
 in thermodynamic limit $L\to\infty$,  keeping $\nu>0$ and  
  $\rr^d\ni\kk\mapsto \hat v(\kk)$ fixed. 
  $\Xi_L^>$ formally converges to $\rr^d$. One should
  replace $\frac{1}{L^d }\sum\limits_{\kb\in\Xi_L^>}$ with $\frac{1}{(2\pi)^d }\int\d
\kb$, and the operators $a_\kb$, $a_\kb^*$ for  $\kb\in\Xi_L^>$ with
$\frac{(2\pi)^{\frac d 2}}{L^{\frac d 2}} a_\kb$,
$\frac{(2\pi)^{\frac d 2}}{L^{\frac d 2}} a_\kb^*$ for
 $\kb\in \rr^d$.

 Thus 
$H_{\Bog,\nu}^L$,  $H^{L}_{3,\nu}$, $H^{L}_4$ and $\mathbf{P}^L$ formally
converge to operators on $\Gamma_\s\big(L^2(\rr^d)\big)$
\begin{align}\label{therm1}
 H_{\Bog,\nu}&:=\int\d \p\;\Big(\frac{\p^2}{2}+\frac{\nu\hat v(\p)}{\hat v(\0)}\Big) a_\p^*
 a_\p
+
\int\d \p\;\Big(\frac{\nu\hat v(\p)}{2\hat v(\0)}
  a_\p 
 a_{-\p}
+\hc\Big),
\\\label{therm2}
  H_{3,\nu}&:=\frac{1}{(2\pi)^{\frac32}}\int\d \p\d\k\;\frac{\sqrt{\nu}\hat 
                   v(\k)}{\sqrt{\hat v(\0)}}\Big( a_{\p+\k}^*  a_\k  a_{\p}
+\hc\Big),\\
H_4&:=\frac{1}{2(2\pi)^d}\int\d\p\d\k\d\q\;\hat v(\k) a_{\p+\k}^*  a_{\q-\k}^* 
          a_\q  a_\p,  \\
  \mathbf{P}&:=\int\d \p\;\mathbf{p} a_\p^*
 a_\p.
\end{align} 
Now
\beq H_\nu:=H_{\Bog,\nu}+\sqrt\kappa H_{3,\nu}+\kappa H_4 \label{B2a}\eeq
is the Hamiltonian of the Bose gas in infinite volume with a
$c$-number condensate given by $\nu$.

Note that we can also perform thermodynamic limit in the ``first principles
Hamiltonian'' $H^L$ of \eqref{2B}, and then we obtain $H_\nu$ of
\eqref{B2a} with $\nu=0$.

In the literature the Hamiltonians $H_\nu$ and $H_\nu^L$  often appear
in description of
Bose gas at positive density for large $L$.
One should admit that the whole procedure  of the $c$-number substitution is
problematic. In particular, we do not know
whether $H_\nu$ in thermodynamic limit is well-defined as a self-adjoint 
operator. In any case, 
 we prefer to ask questions about the limits as
$L\to\infty$ of finite volume
quantities involving $H_\nu^L$.

In what follows we will often drop $\nu$ from the symbols.

\subsection{Bogoliubov method}
\label{s3}

The Hamiltonian $H_\Bog^L$, and also $H_\Bog$,
will be called the {\em Bogoliubov Hamiltonian} and it will be treated
as the main part of the full Hamiltonian. It  is purely quadratic,
therefore it can be diagonalized using an appropriate {\em Bogoliubov
  transformation}.

In fact, consider  $\theta=(\theta_\kk)_{\kk\in \Xi_L^>}$, a sequence of real numbers. Set
\begin{equation}
c_\kk:=\cosh\theta_\kk,\ \  s_\kk:=-\sinh\theta_\kk.\label{hyper}\end{equation}
For $\k\in\Xi_L^>$ we 
 make the substitution 
  \begin{eqnarray}
& a_\kk^*=c_\kk b_\kk^*- s_\kk b_{-\kk},&
 a_\kk=c_\kk b_\kk-s_\kk b_{-\kk}^*,\label{rota}\end{eqnarray}

 Note  that we have
$s_\kk=s_{-\kk}$ and $c_\kk =c_{-\kk}=\sqrt{1+ s_\kk^2}$.

We choose the Bogoliubov rotation that  kills double creators and
annihilators, which amounts to choosing $s_\k$ and $c_\k$ as 
\begin{eqnarray}\label{eq: bogo roto}
	s_\kk&=&\frac{1}{\sqrt2}\left(\left(1-\left(\frac{\hat
		v(\kk)\frac{\nu}{\hat
			v(\0)}}{\frac12\kk^2+\hat
		v(\kk)\frac{\nu}{\hat
			v(\0)}}\right)^2\right)^{-1/2}-1\right)^{1/2}. \label{sk4}\end{eqnarray} 
Note that \eqref{eq: bogo roto} is well defined on account of Assumption \eqref{eq: positivity of Fourier}.
We obtain
\begin{align}
H_{\Bog}^{L}&=E_{\Bog}^{L}
              +\sum_{\kk\in\Xi_L^>} \omega_{\Bog}(\kk) b_\kk^*b_\kk,\label{elem}\\
  \mathbf{P}^{L}&=\sum_{\kk\in\Xi_L^>} \kk b_\kk^*b_\kk,
\end{align}
where
 the quasiparticle excitation spectrum is defined in Eq. \eqref{omega1.}
and the ground state energy is
\begin{eqnarray}
E_{\Bog}^{L}&=
&  -\sum_{\k\in\Xi_L^>}\frac12
\left(
\frac12\kk^2+\frac{\nu\hat v(\kk)}{\hat v(\0)}-\omega_{\Bog}(\kk)\right).\label{elem1}\end{eqnarray}

Introduce the unitary operator
\begin{equation}
U_{\theta}:=\prod_{\kk\in\Xi_L^>}\e^{-\frac12
\theta_\kk a_\kk^* a_{-\kk}^*+\frac12
    \theta_\kk a_\kk  a_{-\kk}}.\label{uthe}\end{equation}
Note that
\[U_\theta^*
 a_\kk
U_\theta=b_\kk,\ \
\ \ 
U_\theta^*
 a_\kk^*U_\theta
=b_\kk^*.\]
We can rewrite \eqref{elem}
as
\begin{eqnarray}
H_{\Bog}^{L}&=&E_{\Bog}^{L}
+U_\theta^*\sum_{\k\in\Xi_L^>} \omega_{\Bog}(\kk) a_\kk^*a_\kk U_\theta,\label{elem1}
\end{eqnarray}
and the ground state of the Bogoliubov Hamiltonian is
\begin{align}\label{ground}
  \Omega_\Bog:&=U_\theta\Omega^>
  =\prod_{\k\in\Xi_L^>}\frac{1}{\cosh\frac{\theta_\kk}{2}}\e^{-\tanh\frac{\theta_\kk}{2}
     a_\kk^*     a_{-\kk}^*}\Omega^>,
\end{align}
where $\Omega^>$ is the vacuum in $\Gamma_\s(\Xi_L^>)$.

Note that $\omega_{\Bog}(\kk)$ of \eqref{omega1.} is 
well defined for all values $\kk\in\R^d$, even though it is restricted 
to $\kk\in\Xi_L^>$ in (\ref{elem}) and
(\ref{elem1}). Therefore, in thermodynamic limit the Bogoliubov Hamiltonian can be
rewritten as
\begin{align}
H_{\Bog}&=E_{\Bog}
            +\int\d\k\; \omega_{\Bog}(\kk) b_\kk^*b_\kk,\label{elem0}\\
              \mathbf{P}&=\int\d \kk\; \kk b_\kk^*b_\kk,
\end{align}
where the energy is
\begin{eqnarray}
E_{\Bog}&=
&  -\frac1{2(2\pi)^d}\int\d\k\;
\left(
\frac12\kk^2+\frac{\nu\hat v(\kk)}{\hat
                 v(\0)}-\omega_{\Bog}(\kk)\right).\label{elem1a}\end{eqnarray}

           The joint energy-momentum spectrum of the energy momentum
           covers a part of the upper halfspace of $\rr^{1+d}$. If the
           volume is finite, it is discrete and
           \beq\label{sigma}\sigma \big(H_\Bog^L-E_\Bog^L,\mathbf{P}^L\big) 
\subset [0,\infty[\times\frac{2\pi}{L}\zz^d 
           \eeq
           In thermodynamic limit
           we have
                      \beq\label{sigma}\sigma \big(H_\Bog-E_\Bog,\mathbf{P}\big) 
\subset [0,\infty[\times\rr^d ,
           \eeq
           and it
is absolutely continuous wrt the Lebesgue
measure on $\rr^{1+d}$ except for the ground
state at $(0,\mathbf{0})$ and the {\em elementary
  quasiparticle spectrum} at
\beq \sigma_\sing(H_\Bog-E_\Bog,\mathbf{P})=\big\{\big(\omega_\Bog(\mathbf{k}),\mathbf{k}\big)\ |\
\mathbf{k}\in\rr^d\big\}.\label{destroy1a}\eeq
\eqref{sigma} is the subadditive hull of \eqref{destroy1a}. For more details concerning the construction of the excitation spectrum we refer to \cite{CDZ_2009,DLN_2023,DN_2014}

The same Bogoliubov transformation \eqref{rota} can be applied also to $H^L_3$ and $H^L_4$, thus obtaining an interaction expressed in terms of $b_\k^*,\; b_\k$.

\subsection{Positive temperatures}\label{Positive temperatures}
Let us recall  basic facts about quantum physics at positive
temperatures.  In the back of our heads we have the algebraic
description of infinitely extended 
systems involving $C^*$- and $W^*$-algebras \cite{BR1,BR2,DJP_2003}, and we will sometimes use terminology developed in these
frameworks. Unfortunately, the $C^*$-algebraic framework is problematic for
interacting bosons. Maybe, one could use the $C^*$-algebras introduced by
Buchholz \cite{B_2020,BY_2025}, but we have not tried them.

Suppose a quantum system is described by a Hamiltonian $H$ and a Hilbert
space $\cH$.
  Thus the evolution of  $A\in B(\cH)$ is given by
  \beq A(t):=\e^{\i tH}A\e^{-\i tH}.\eeq

  It is convenient to pass to the so-called {\em standard  representation} of 
  $B(\cH)$.
Its Hilbert
  space can be identified with
  Hilbert-Schmidt operators,
  denoted
  $B^2(\cH)$. The scalar product on   $B^2(\cH)$ is given by the
  trace:
  \beq(\Phi|\Psi):=\Tr\Phi^*\Psi,\quad \Phi,\Psi\in B^2(\cH).\eeq
    The algebra $B(\cH)$ has two commuting representations, the
    (linear) left
    one and the (antilinear) right one:
  \begin{align}\pi_\le(A)\Phi&:=A\Phi,\\
    \pi_\re(A)\Phi&:=\Phi A^*,\quad \Phi\in B^2(\cH), \quad A\in B(\cH).
  \end{align}

  Every normal state $\phi$ on $B(\cH)$ can be represented by a  {\em vector 
  representative}, that is a vector 
  $  \Phi\in B^2(\cH)$ such that 
  \beq \phi(A)=(\Phi|\pi_\le(A)\Phi).\eeq 
It has a unique {\em standard vector representative} which is given by a 
positive Hilbert-Schmidt matrix. Thus for the state $\phi$ with the density matrix 
$\rho_\phi$
 the standard vector representative  is $\Phi:=\sqrt{\rho_\phi}$.

  At inverse temperature $\beta$  the KMS state is 
  given by the Gibbs density matrix
  \beq\label{gibbs}
  \omega_\beta(A)=\Tr A \frac{\e^{-\beta H}}{\Tr\e^{-\beta H}},\quad
  A\in B(\cH).\eeq
 For positive temperatures,  the thermal state is faithful, therefore
 its GNS representation coincides with the
 standard representation. For zero temperature the GNS representation
 is just the usual irreducible representation on $\cH$. However, in
 order to unify the desriptions, one can use the standard
 representation for both positive and zero temperatures.

 The standard representative of the state $\omega_\beta$ is 
  \beq\Omega_\beta:=\frac {\e^{-\frac\beta2 
    H}}{\big(\Tr\e^{-\beta H}\big)^{\frac12}}.\eeq 
  Thus 
  \begin{align} 
    \omega_\beta(A)&=(\Omega_\beta|\pi_\le(A)\Omega_\beta).\end{align}

  The dynamics is generated by the Liouvillean
  \beq L:=[H,\cdot], \eeq
  so that
\beq  \pi_\le\big(A(t)\big)=\e^{\i tL}\pi_\le(A)\e^{-\i 
      tL},\quad L\Omega_\beta=0.\eeq

In general, a self-adjoint operator $K$ on $\cH$, understood as the generator of a
1-parameter unitary group, in the standard representation
should be replaced by its {\em standard} version \beq
K^\st :=[K,\cdot].\eeq

\subsection{Araki-Woods representation}

Let us go back to the Bose gas.
Its standard representation 
  acts on Hilbert-Schmidt operators on the Fock
space, denoted
\beq B^2\big(\Gamma_\s(l^2(\Sigma_L^>)\big),\label{hilsch}\eeq
the Hamiltonian should be replaced by the Liouvillean, and the
momentum  by its standard version:
\begin{align}
  L^L&:=[H^L,\cdot],\\
  \mathbf{P}^{\st,L}&:=[\mathbf{P}^L,\cdot].
\end{align}
We have the Gibbs state of the full Hamiltonian $\omega_\beta$. We
will also use the Gibbs state for the Bogoliubov Hamiltonian
$\omega_{\Bog,\beta}$ with the standard vector representative $\Omega_{\Bog,\beta}$

We can write
\begin{align}
  L^L&:=L_\Bog^L+L_3^{L}+L_4^{L},\\
  L_\Bog^L&:=[H_\Bog^L,\cdot],\\
  L_3^{L}&:=[H_3^{L},\cdot],\\
  L_4^{L}&:=[H_4^{L},\cdot].
\end{align}

The representation on \eqref{hilsch}
is not very convenient for calculations. We prefer to pass to the 
Araki-Woods representation \cite{AW_63, DG_2013} well adapted to the Bogoliubov Liouvillean $L_\Bog^L$. 

This passage can be done in two steps. First, it is easy to see that
\eqref{hilsch} is naturally isomorphic to the Fock space on the
doubled 1-particle space
\beq
\Gamma_\s\big(l^2(\Sigma_L^>)\oplus l^2(\Sigma_L^>)
\big).\eeq
Then
\begin{align}
  \pi_\le\big(b(\kk)\big)=b_\le(\kk),\quad 
  \pi_\le\big(b^*(\kk)\big)=b_\le^*(\kk),\\
  \pi_\re\big((b(\kk)\big)=b_\re(\kk),\quad 
  \pi_\re\big(b^*(\kk)\big)=b_\re^*(\kk),
  \end{align}
where we  use $b_\le(\kk)$, $b_\le^*(\kk)$, resp.  $b_\re(\kk)$, 
$b_\re^*(\kk)$ for the annihilation, creation operators acting on 
the left, resp. right space $ l^2(\Xi_L^>)$. 
This is still not very convenient, because  $\Omega_{\Bog,\beta}$
corresponds to a relatively complicated squeezed vector.

Then one  applies an appropriate Bogoliubov transformation, obtaining
the so-called Araki-Woods representation.
This is still a representation on
\beq 
\Gamma_\s\Big(l^2(\Xi_L^>)\oplus l^2(\Xi_L^>)\Big).\eeq
However, in this representation the
vector $\Omega_{\Bog,\beta}$ is mapped onto the Fock vacuum $\Omega$, and the
creation/annihillation operators are represented by
$\pi_{\beta,\le}$ and $\pi_{\beta,\re}$ as follows:
\begin{align}
b_{\beta,\le}^*(\kk):=\pi_{\beta,\le}(b^*(\kk))&=\big(1-\e^{-\beta\omega_\Bog(\kk)}\big)^{-\frac12}b_\le^*(\kk) 
                              +\big(\e^{\beta\omega_\Bog(\kk)}-1\big)^{-\frac12}b_\re(\kk),\\
b_{\beta,\le}(\kk):=\pi_{\beta,\le}(b(\kk))  &=\big(1-\e^{-\beta\omega_\Bog(\kk)}\big)^{-\frac12}b_\le(\kk) 
                        +\big(\e^{\beta\omega_\Bog(\kk)}-1\big)^{-\frac12}b_\re^*(\kk),\\
  b_{\beta,\re}^*(\kk):=\pi_{\beta,\re}(b^*(\kk)) &=\big(\e^{\beta\omega_\Bog(\kk)}-1\big)^{-\frac12}b_\le(\kk) 
                              +\big(1-\e^{-\beta\omega_\Bog(\kk)}\big)^{-\frac12}b_\re^*(\kk),\\
 b_{\beta,\re}(\kk):=  \pi_{\beta,\re}(b(\kk))&= \big(\e^{\beta\omega_\Bog(\kk)}-1\big)^{-\frac12}b_\le^*(\kk) 
                              +\big(1-\e^{-\beta\omega_\Bog(\kk)}\big)^{-\frac12}b_\re(\kk).
\end{align}

For the following, we define \begin{equation}
	\rho(\kk)\colon = (\e^{\beta\omega_{\text{bg} }(\kk) } -1 )^{-1},\quad 1+ \rho(\kk) = (1- \e^{-\beta\omega_{\text{bg}}(\kk) })^{-1}.
\end{equation}

The free Liouvillean and the standard momentum
(written in two equivalent ways) become
\begin{align}
L_\Bog^L&=
\sum_{\k\in\Xi^>_L}\omega_\Bog(\kk)\Big( b_{\le}^*(\kk) 
b_{\le}(\kk)-
        b_{\re}^*(\kk) b_{\re}(\kk)\Big)\\&=
  \sum_{\k\in\Xi^>_L}\omega_\Bog(\kk)\Big( b_{\beta,\le}^*(\kk) 
b_{\beta,\le}(\kk)-
  b_{\beta,\re}^*(\kk) b_{\beta,\re}(\kk)\Big),\\
  \mathbf{P}^{\st,L}&=
\sum_{\k\in\Xi^>_L}\kk\Big( b_{\le}^*(\kk) 
b_{\le}(\kk)-
        b_{\re}^*(\kk) b_{\re}(\kk)\Big)\\&=
  \sum_{\k\in\Xi^>_L}\kk\Big( b_{\beta,\le}^*(\kk) 
b_{\beta,\le}(\kk)-
  b_{\beta,\re}^*(\kk) b_{\beta,\re}(\kk)\Big).
\end{align}

In thermodynamic limit
the joint spectrum of the Bogoliubov Liouvillean and the standard
momentum
is the whole space:
\beq\label{sigma1}\sigma \big(L_\Bog,\mathbf{P}^\st\big)=\rr^{1+d}.\eeq
It has a bound state at
$(0,\mathbf{0})$ corresponding to the KMS state. The singular
spectrum has now two branches: the
positive for ``left quasiparticles'' and negative for ``right quasiparticles'':
\beq
\sigma_\sing(L_\Bog,\mathbf{P}^\st)=
\big\{\big(\omega_\Bog(\mathbf{k}),\mathbf{k}\big)\ |\
\mathbf{k}\in\rr^d\big\}
\cup
\big \{\big(-\omega_\Bog(\mathbf{k}),\mathbf{k}\big)\ |\
\mathbf{k}\in\rr^d\big\}
.\label{destroy2}\eeq


Recall that the full Hamiltonian was given in \eqref{decom0}--\eqref{put2}. It is
useful to express it in terms of
 $b(\kk)$, $b^*(\kk)$.

 The full Liouvillean is
  \begin{align} L^L:=&L_\Bog^L
                       + L_3^{L}+L_4^{L}\label{pert},\\
    L_3^{L}:=& H_{3,\le}^{L}-H_{3,\re}^{L},\\
    L_4^{L}:=& H_{4,\le}^{L}-H_{4,\re}^{L}
                     ,\end{align}
 where
$ H_{3,\le}^L,H_{4,\le}^L$, 
resp. $H_{3,\re}^L,H_{4,\re}^L$ are obtained by insertion of 
 $b_{\beta,\le}(\kk)$, $b_{\beta,\le}^*(\kk)$, resp.  $b_{\beta,\re}(\kk)$,
 $b_{\beta,\re}^*(\kk)$ instead of
 $b(\kk)$, $b^*(\kk)$
 in $ H_{3}^L,H_{4}^L$.

The Liouvillean $L^L_3$ can be written as the sum of two contributions $L^L_{3,1} + L^L_{3,0}$. The first is the one relevant for physical processes involving three quasi-particles where two of them are destroyed and one is created or viceversa, two are created and the remaining one is destroyed. It can be written as \begin{align}\label{L31}
	&L_{3,1}^L = \notag \\
	&\sum_{\substack{\p,\q,\\ \p+\q\in\Xi_L^>}} V_{\p,\q} \left(b^*_{\beta,\le}(\q+\p)b_{\beta,\le}(\p)b_{\beta,\le}(\q) - b^*_{\beta,\re}(\q+\p)b_{\beta,\re}(\p)b_{\beta,\re}(\q)  + \text{h.c.} \right),
\end{align} where the effective interaction is \begin{align}\label{eq: eff interaction}
	V_{\p,\q} = &\frac{\sqrt{\nu}\hat{v}(\p)}{L\frac{d}{2}\sqrt{\hat{v}(\0)}}(c_\p-s_\p)(c_\q c_{\q+\p} + s_\q s_{\q+\p})\notag \\
	&+\frac{\sqrt{\nu}\hat{v}(\p+\q)}{L^\frac{d}{2}\sqrt{v}(\0)}(s_{\p+\q} - c_{\p+\q})s_\p c_\q.
\end{align} This effective potential is related to the function $j(\p+\q;\p,\q)$ in Eq. \eqref{eq: jj}, as 
 \begin{align}\label{eq: relation vjj}
	V_{\p,\q} + V_{\q,\p} = & \frac{1}{L^\frac{d}{2}}\sqrt{\frac{\nu}{\hat{v}(\0)}}\left( \hat{v}(\p)(c_\p -s_\p)(c_\q c_{\q+\p} + s_\q s_{\q+\p}) \right. \notag \\ 
	&+ \hat{v}(\p)(c_\q -s_\q)(c_\p c_{\q+\p} + s_\p s_{\q+\p})\notag \\ 
	&\left. + \hat{v}(\p+\q)(s_{\p+\q} -c_{\q+\p})(c_\p s_\q +c_\q s_{\p}) \right) \notag \\
	= & \frac{1}{L^\frac{d}{2}} j(\p+\q;\p,\q).
\end{align} 

The second contribution $L_{3,0}^L$ corresponds to processes where three particles are either created or destroyed \begin{align}\label{L32}
	L^L_{3,0} = \sum_{\substack{\p,\q,\\ \p+\q\in\Xi_L^>}} U_{\p,\q} \left(b^*_{\beta,\le}(-\q-\p)b^*_{\beta,\le}(\p)b^*_{\beta,\le}(\q) \notag \right. \\  \left.- b^*_{\beta,\re}(-\q-\p)b^*_{\beta,\re}(\p)b^*_{\beta,\re}(\q)  + \text{h.c.} \right),
\end{align}
with 
\begin{align}
	U_{\p,\q} = \frac{\sqrt{\nu}\hat{v}(\p)}{L^\frac{d}{2} \sqrt{v}(\0)}(c_{\p+\q}s_\p s_\q - s_{\p+\q} c_\p c_\q).
\end{align} This effective potential is related to the function $\kappa(\p+\q,\p,\q)$  in Eq. \eqref{eq: kappa} as 
\begin{align}\label{eq: relation ukappa}
	&\left(U_{\k,-\p} + U_{-\p,\k} + U_{\p,\k-\p} + U_{\p-\k,\k} + U_{\k,\p-\k}+U_{\k-\p,\p}\right) = \notag \\
	=&\frac{1}{L^\frac{d}{2}}\sqrt{\frac{\nu}{\hat{v}(\0)}}\left( \hat{v}(\k)(s_\k-c_\k)(c_\p s_{\k-\p} + c_{\k-\p}s_\p )\right. \notag\\ 
	&+  \hat{v}(\p)(s_\p-c_\p)(c_\k s_{\k-\p} + c_{\k-\p}s_\k ) \notag\\ 
	& \left.+  \hat{v}(\k-\p)(s_{\k-\p}-c_{\k-\p})(c_\k s_{\p} + c_{\p}s_\k ) \ \right) = \frac{1}{L^\frac{d}{2}}\kappa(\k,\p,\k-\p).
\end{align} 

The Liouvillean $L^L_4$ can  be written as the sum of five terms.
Two of them are of the second order:
\begin{align}
  	L_{2,1}^L = & \frac{1}{L^d} \sum_{\p,\q\in\Xi_L^>}\left( \hat{v}(\0)(c_\p^2+s_\p^2)s_\q^2 b^*_{\beta,\le}(\p)b_{\beta,\le}(\p)\right. \notag \\ 
	&\left. + \hat{v}(\q-\p)((c_\p^2+s_\p^2)s_\q^2 + 2c_\p c_\q s_\q s_\p) b^*_{\beta,\le}(\p)b_{\beta,\le}(\p) - (\le\leftrightarrow\re)\right),\label{eq: L42}
\\	L^L_{2,0} = & -\frac{1}{2L^d}\sum_{\p,\q\in\Xi_L^>}\left(\hat{v}(\0)2c_\p s_\p s_\q^2 \right. \notag \\ 
	&\left. +\hat{v}(\q-\p)((s_\p^2 + c_\p^2)c_\q s_\q + 2c_\p s_\p s_\q^2)\right)(b_{\beta,\le}^*(\p)b_{\beta,\le}^*(-\p) \notag \\
	&+ \text{h.c.} - (\le\leftrightarrow\re)).
\end{align}
The remaining three are of the fourth order:
\begin{align}
	L^{L}_{4,2} =& \frac{1}{2L^{3}}\sum_{\substack{\p,\q, \\ \p+\k, \q-\k\in\Xi_L^>}} \left[\hat{v}(\k)(c_{\k+\p}c_{\q-\k}c_{\q}c_{\p} +s_{\p+\k}s_{\q-\k}s_\p s_\q \right. \notag \\
	&\left. + 2c_{\k+\p}c_{\p}s_{\q}s_{\q-\k} )+ 2\hat{v}(\p+\q)c_\q c_{\q-\k}s_\p s_{\p+\k}\right] \notag \\ 
	&\times \left(b_{\beta,\le}^*(\k+\p)  b_{\beta,\le}^*(\q-\k)
   b_{\beta,\le}(\q)  b_{\beta,\le}(\p) - (\le\leftrightarrow\re
   )\right), \label{eq: L41}\\
	L^L_{2,1} = &- \frac{1}{L^{3}}\sum_{\substack{\p,\q, \\ \p+\k \q-\k\in\Xi_L^>}} \hat{v}(\k) \notag  \\ 
	&\times \left(c_{\k+\p}c_{\q-\k}c_\p s_\q b^*_{\beta,\le}(\p+\k) b^*_{\beta,\le}(\q-\k)b^*_{\beta,\le}(-\q)b_{\beta,\le}(\p)\right. \notag  \\
	&+\left. c_{\p+\k} s_{\q-\k} s_\q s_\p b^*_{\beta,\le}(\p+\k) b^*_{\beta,\le}(-\p)b^*_{\beta,\le}(\q)b_{\beta,\le}(\k-\q) \right. \notag \\
	& \left. + \text{h.c.} - (\le\leftrightarrow\re)\right) \\
	L^L_{4,0} = & \frac{1}{2L^{3}}\sum_{\substack{\p,\q, \\ \p+\k, \q-\k\in\Xi_L^>}} \hat{v}(\k)c_{\k+\p}c_{\q-\k}s_\q s_\p \notag  \\ 
	&\times \left(b^*_{\beta,\le}(\p+\k) b^*_{\beta,\le}(\q-\k)b^*_{\beta,\le}(-\p)b_{\beta,\le}^*(-\q) + \text{h.c.} + (\le\leftrightarrow\re)\right) .
\end{align} In the above we labeled with $(\le\leftrightarrow\re)$ the terms which can be obtained from the ones explicitly written by exchanging left and right particles, which is equivalent to exchanging the labels  $\le$ and $\re$.

The  KMS vector for the Bogoliubov Liouvillean, in the Araki-Woods
representation,  is simply the vacuum $\Omega$. 
By general theory, we are guaranteed that the perturbed KMS state exists
and is an eigenvector of the full Liouvillean with eigenvalue $0$.
It can be obtained by perturbation theory. All terms in the
perturbation expansion for its energy will give $0$. In particular, at the order
$\kappa$ for a fixed $L$ we have
\begin{align}&
0=
  \big(\Omega|L_4^{L}
  \Omega\big)
  -
      \lim_{\epsilon\searrow0}
\Big(  \Omega|L_3^{L}
      \big(L_\Bog^L-\i\epsilon\big)^{-1}L_3^{L}
      \Omega\Big).\label{fermi1-}\end{align}

\subsection{Elementary quasiparticle from joint energy-momentum spectrum}

If one could make sense of the
Hamiltonian or Liouvillean with  positive $\nu$ in thermodynamic limit, then one  should
expect  that
the joint energy-momentum
spectrum is absolutely continuous everywhere, without singular
shells, except for the ground state at zero temperature and the KMS
vector at positive temperature. In particular, the thermodynamic limit
probably destroys the singular shells
\eqref{destroy1a} and \eqref{destroy2}.

Thus even if the Hamiltonian $H_\nu$ and the Liouvillean $L_\nu$ can be defined in thermodynamic
limit, rigorously it is not clear  what are quasiparticles beyond
the Bogoliubov approximation.
Still,
physicists seem to
believe that quasiparticles are a useful concept.


At zero temperature  the interacting Hamiltonian is bounded from below and
one can expect that it makes sense to speak about the infimum of the
energy-momentum spectrum  in thermodynamic limit, as described in
\cite{CDZ_2009}. Typically, in
the unperturbed Bogoliubov Hamiltonian this infimum  does
not coincide with the  quasiparticle spectrum
\eqref{destroy1a}, even for small momenta.  So in the interacting case this infimum is
not a good definition of the quasiparticle spectrum.
Anyway, for  zero temperature  at least   a certain 
interesting purely spectral information seems to survive thermodynamic limit.

For positive temperatures, the joint spectrum of the interacting Liouvilean and
the standard momentum in thermodynamic limit is almost certainly the full $\rr^{1+d}$, so from the spectrum we
cannot extract anything interesting.

    \subsection{Decay of one-quasiparticle states}
    \label{Decay of one-quasiparticle states}


As we mentioned above, for the pure
    Bogoliubov Liouvillean in the Araki-Woods representation the standard vector representative of the
    KMS state is the Fock vacuum $\Omega$. Switching on the interaction modifies
    the KMS vector, but it remains an eigenvector with eigenvalue $(0,\mathbf{0})$

    For the pure Bogoliubov
    Liouvillian the one-quasiparticle state of momentum
    $\k$ has the standard
    vector representative,
    which in the Araki-Woods representation is given by
    \beq
    b_\le^*(\kk) b_\re^*(\kk)\Omega
.\eeq
    It is an eigenstate of $(L^L,\mathbf{P}^L)$ with eigenvalue $(0,\mathbf{0})$.

After switching  the interaction in finite volume we expect that
perturbation theory will yield perturbed eigenstates of the
Liouvillean. However, these eigenstates will not survive thermodynamic limit.
In fact, we will find that already at the lowest nontrivial order of
perturbation theory we will obtain a nonzero imaginary contribution.

The computation is complicated by the fact that we are dealing with
two limits: small coupling constant $\kappa$ and large size $L$.
The value of the shift of the eigenvalue at the order $\kappa^1$ will be
computed from the formula
\begin{align}\label{formu1}
  \xi(\kk)
  :=		\lim_{\epsilon\searrow
  0}\lim_{L\rightarrow\infty}&\bigg{\{}
\bigg{\{} \left(b_\le^*(\k)b_\re^*(\k)\Omega|
		L_4^Lb_\le^*(\k)b_\re^*(\k)\Omega\right)
                             \\& - \left(L_{3}^Lb_\le^*(\k)b_\re^*(\k)\Omega|
		\big(L_\Bog^L-\i\epsilon\big)^{-1}L_3^Lb_\le^*(\k)b_\re^*(\k)\Omega\right) \label{formu2} \\ -& \left(\Omega\right.\left|L_4^L\Omega\right) +\left(L_3^L \Omega\right.\left| (L_\Bog^L  -\i\epsilon)^{-1}L_3^L\Omega\right) \bigg{\}} .
\label{eq: physical correction braket I..}\end{align}
We will see that the shift $\xi(\kk)$ is purely imaginary.

Above, \eqref{formu1}+\eqref{formu2} is the naive expression  for the
shift of the eigenvalue in thermodynamic limit.
In fact, for finite $L$
the shift is given by
\begin{align}\label{formua1}\lim_{\epsilon\searrow
  0}&
\bigg{\{} \left(b_\le^*(\k)b_\re^*(\k)\Omega|
		L_4^Lb_\le^*(\k)b_\re^*(\k)\Omega\right)
                             \\\label{formua2}& - \left(L_{3}^Lb_\le^*(\k)b_\re^*(\k)\Omega|
		\big(L_\Bog^L-\i\epsilon\big)^{-1}L_3^Lb_\le^*(\k)b_\re^*(\k)\Omega\right) \bigg{\}}\end{align}
However, we need to
subtract from this expression the term
\eqref {eq: physical correction braket I..}, which for fixed $L$ and
$\epsilon\searrow0$ goes to zero by
 \eqref{fermi1-}, is however extensive in volume.

Let us check  that the subtraction of
\eqref {eq: physical correction braket I..} is appropriate.
 Introduce  the  
shorthand $b_{\le\re}(\k):=b_\le(\k)b_\re(\k)$.
The expression from \eqref{formua1} and \eqref{formua2}
can be transformed as follows:
\begin{align}\label{expro}
  &
               \big(b_{\le\re}^*
               (\k)\Omega|L_4^L
  b_{\le\re}^*
               (\k)\Omega\big)-
\Big( b_{\le\re}^*
               (\k)\Omega|L_3^L
      \big(L_\Bog^L-\i\epsilon\big)^{-1}L_3^L
      b_{\le\re}^*
               (\k)\Omega\Big)
  \\\label{kmsterm}
  =&
  \big(\Omega|L_4^L
  \Omega\big)-
\Big(  \Omega|L_3^L
      \big(L_\Bog^L-\i\epsilon\big)^{-1}L_3^L
             \Omega\Big)\\\notag&
+\notag(\Omega|[b_{\le\re}(\k)
  [L_4^L, b_{\le\re}^*(\k)]]\Omega\big)\\\notag&
                                    -\big(\Omega|[b_{\le\re}(\k), L_{3}^L](L_\Bog^L-\i\epsilon)^{-1}[L_3^L, 
                       b_{\le\re}^*(\k)]\Omega\big)  \notag \\ &-\big(\Omega|[b_{\le\re}^*(\k), L_{3}^L](L^L_\Bog-\i\epsilon)^{-1}[L_3^L, 
                       b_{\le\re}(\k)]\Omega\big)\notag \\
&  -\big(\Omega| L_{3}^L(L_\Bog^L-\i\epsilon)^{-1}[b_{\le\re}(\k) ,
                                                           [L_3^L, b_{\le\re}^*(\k)]]\Omega\big)
                                                           \notag \\ &-\big(\Omega|[b_{\le\re}^*(\k),
  [L_3^L, b_{\le\re}(\k)]](L_\Bog^L-\i\epsilon)^{-1}L_3^L\Omega\big)
      .\notag\end{align}
    To derive this we used
    \begin{align}
    &  [b_{\le\re}(\k), b_{\le\re}^*(\k)]=b_\re^*(\k) b_\re(\k)+
      b_\le^*(\k) b_\le(\k)+1,\\&b_{\le\re}^*(\k)(L_\Bog^L-\i\epsilon)^{-1}=
      (L_\Bog^L-\i\epsilon)^{-1}b_{\le\re}^*(\k),\\
&      b_{\le\re}(\k)(L_\Bog^L-\i\epsilon)^{-1}=
      (L_\Bog^L-\i\epsilon)^{-1}b_{\le\re}(\k).
            \end{align}

    In the  formula for \eqref{expro},  by \eqref{fermi1-}, the term \eqref{kmsterm} goes to zero as
$\epsilon\to0$.
At the same time,  for a fixed $\epsilon>0$, it is large,
diverging as $L\to\infty$.
Therefore we need to subtract it.
All other terms  contain a commutator, so that one can
hope that they do not to blow up as
$L\to\infty$.

  \subsection{Two-point functions}

  Let us go back to the  abstract setting used in Subsection \ref{Positive temperatures}.
  Let $A_1,\dots,A_n\in B(\cH)$ be operators. Physicists often consider
 {\em $n$-point correlation functions} defined as
\beq  \i^{\frac{n}{2}}\omega_\beta\Big(\mathrm{T}\big(A_n(t_n)\cdots
A_1(t_1)\big)\Big),\eeq
where $\mathrm{T}(\cdots)$ denotes the time ordering \cite{FW_2003}.
(The coefficient
$\i^{\frac{n}2}$ follows one of possible conventions, which is
convenient in  applications to the Bose gas).
Of special importance are two-point functions
\begin{align}  &\i\omega_\beta\Big(\mathrm{T}\big(A_2(t_2) 
  A_1(t_1)\big)\Big)\\=&\i\theta(t_2-t_1) \omega_\beta\big(A_2(t_2) 
  A_1(t_1)\big)+\i\theta(t_1-t_2) \omega_\beta\big(A_1(t_1) 
                         A_2(t_2)\big)\\
  =&\i\theta(t_2-t_1) \big(\Omega_\beta|A_2\e^{\i(t_1-t_2)L }
  A_1\Omega_\beta\big)+\i\theta(t_1-t_2) \big(\Omega_\beta|A_1\e^{\i(t_2-t_1)L}
                      A_2\Omega_\beta\big). \label{line3} \end{align}
             where in          \eqref{line3} we use the standard
             representation and we omitted $\pi_\le$. Note that
             \eqref{line3} depends only on $t:=t_2-t_1$, so that we can
             define
             \beq
             G_\beta(A_2,A_1,t):=\eqref{line3}.\eeq
             Let $E\in\rr$. The two-point functions in the energy representation are
             \begin{align}&\quad
               G_\beta(A_2,A_1;E)\;:=\;\int G_\beta(A_2,A_1,t)\e^{-\i tE}\d t\\
               &=\lim_{\epsilon\searrow0}\i\int_0^\infty\big(\Omega_\beta|A_2\e^{-\i 
                 tL-\epsilon t-\i tE}A_1\Omega_\beta\big)\d t\notag\\&
    +\lim_{\epsilon\searrow0}\i\int_0^\infty\big(\Omega_\beta|A_1\e^{-\i 
                 tL-\epsilon t+\i tE}A_2\Omega_\beta\big)\d t\notag\\
               &=\big(\Omega_\beta|A_2(L-\i0+E)^{-1}A_1\Omega_\beta\big)
    +\big(\Omega_\beta|A_1(L-\i0-E)^{-1}A_2\Omega_\beta\big).\label{line4}
             \end{align}
\eqref{line4} is sometimes called the {\em Lehmann representation}.

As a side remark let us mention that it is often possible to simplify the above formula if we have time
reversal invariance. More precisely, assume that there exists an
antilinear involution $\Phi\mapsto\bar\Phi$, such that $H=\bar H$, and
hence $\bar L=L$ and $\bar\Omega_\beta=\Omega_\beta$.
Let $A_1,A_2$ be also real, that is, $\bar{ A_1}=A_1$, $\bar{ A_2}=A_2$.
Then the second term in \eqref{line4} can be written as
\beq
\big(\Omega_\beta|A_2^*(L-\i0-E)^{-1}A_1^*\Omega_\beta\big).\eeq 
Thus, if $A_1$ and $A_2$ are self-adjoint, then
\beq
G_\beta(A_2,A_1;E)=\Big(\Omega_\beta|A_2 2L\big((L-\i0)^2-E^{2}\big)^{-1}A_1\Omega_\beta\Big).\eeq

\subsection{Elementary quasiparticle spectrum from two-point functions}

Physicists believe that quasiparticles can be seen in correlation
functions. The simplest choice for observables $A_1,A_2$ in these correlation
functions seem to be $a_\kk^*$, $a_\kk$. This choice is perhaps
not the most convenient experimentally, however it seems
mathematically the most natural and was adopted in classic physics
papers, e.g. by Beliaev and Hugenholtz-Pines.
Thus we define
\begin{align}&\begin{bmatrix}
  G_{11}(E,\kk)&G_{12}(E,\kk)\\
  G_{21}(E,\kk)&G_{22}(E,\kk)
  \end{bmatrix}\\:=&\i\int\d t\e^{-\i
    tE}\begin{bmatrix}\omega_\beta\big(\mathrm{T} (a_\kk^*(t)a_\kk(0))\big)&\omega_\beta\big(\mathrm{T}(a_\kk^*(t)a_{-\kk}^*(0))\big)\\
    \omega_\beta\big(\mathrm{T}(a_{-\kk}(t)a_{\kk}(0))\big)&
        \omega_\beta\big(\mathrm{T}(a_{-\kk}(t)a_{-\kk}^*(0))\big)\end{bmatrix}.\end{align}
  
A linear transformation connects
$a_\kk^*$, $a_\kk$ with $b_\kk^*$, $b_\kk$. Therefore, one
can instead consider
 \begin{align}\label{green2}
 	&\i\int\d t\e^{-\i
 		tE}\begin{bmatrix}\omega_\beta\big(\mathrm{T}\left(b_\kk^*(t)b_\kk(0)\right)\big)&\omega_\beta\big(\mathrm{T}\left(b_\kk^*(t)b_{-\kk}^*(0)\right)\big)\\
 		\omega_\beta\big(\mathrm{T}\left(b_{-\kk}(t)b_{\kk}(0)\right)\big)&
 		\omega_\beta\big(\mathrm{T}\left(b_{-\kk}(t)b_{-\kk}^*(0)\right)\big)\end{bmatrix}.\end{align}
 	
 	For the pure Bogoliubov Hamiltonian, the evaluation of \eqref{green2} can be carried on exactly via the \textit{Lehmann representation} \eqref{line4}. As an example, we show the computation of the first diagonal entry \begin{align}
 		&\i\int \d t \; \e^{-\i E t}\omega_\beta\left( \mathrm{T}\left( b_\k^*(t)b_\k(0)\right)\right) \notag \\
 		=&\label{eq: araki rep green}\left(\Omega | b_{\beta,\le}^*(\k) (L_\Bog -\i0+ E)^{-1} b_{\beta,\le}(\k)\Omega \right) +\left(\Omega | b_{\beta,\le}(\k) (L_\Bog-\i0 - E)^{-1} b_{\beta,\le}^*(\k)\Omega \right)\\
 		=& -\rho(\k) \frac{1}{\omega_\Bog(\k)-\i0-E} +(1+\rho(\k))\frac{1}{\omega_\Bog(\k)-\i0- E}\notag \\ =& \label{eq: time ordered green} \frac{1}{\omega_\Bog(\k)-\i0-E} ,
 	\end{align} where in the second line \eqref{eq: araki rep green} we have written  the state in the Araki-Woods representation. The final result is the following one 
 	\begin{equation}\label{bump3}
 		\begin{bmatrix}(\omega_\Bog(\k)-\i0-E)^{-1}&0\\0&(\omega_\Bog(\k)-\i0+ E)^{-1}\end{bmatrix} 
 	\end{equation}
In particular, for $E$ approaching the real line from above, the imaginary part of
\eqref{bump3} will be proportional to the delta function along the quasiparticle
dispersion relation:
 	\begin{equation}\label{bump3a}
\i\pi 		\begin{bmatrix}\delta(\omega_\Bog(\k)-E)^{-1}&0\\0&\delta(\omega_\Bog(\k)+ E)^{-1}\end{bmatrix} 
 	\end{equation}


 When we compute the dispersion relation in the interacting case
 in perturbation theory we expect to obtain a  complex 
 dispersion relation with a negative imaginary part $\kk\mapsto\omega(\kk)$. 
Then we can omit $-\i0$ and the two-point functions looks like
\begin{align}&\begin{bmatrix}
  G_{11}(E,\kk)&G_{12}(E,\kk)\notag \\
  G_{21}(E,\kk)&G_{22}(E,\kk)
  \end{bmatrix}\\=& Z^{\pm}(E,\kk)\frac{1}{\omega(\kk)\mp E}+\text{regular},\quad 
                   \text{near}\quad \pm\omega_\Bog(\kk)=E.
                   \end{align}
  Above,  $Z^\pm(E,
  \kk)$ are some 
  slowly varying matrices.
Thus
\eqref{bump3a} should be replaced by a  ``ridge of 
mountains'' along the  quasiparticle 
dispertion relation, perhaps slightly shifted  in the real
direction. The width of this ``ridge'' is determined  by $\Im\omega(\kk)$.
In the lowest order of
perturbation theory, the shift of $\omega(\kk)$ includes the Feynman-Hellman term for
$L_4$ and the Fermi Golden Rule term for $L_3$.
The shift of the dispersion relation should be
computed from the formula
\begin{align}\label{formi1}
  \delta(\kk)
  :=		\lim_{\epsilon\searrow
  0}\lim_{L\rightarrow\infty}&\bigg{\{}
\bigg{\{} \left(b_\le^*(\k)\Omega|
		L_4^Lb_\le^*(\k)\Omega\right)
                             \\&- \left(L_{3}^Lb_\le^*(\k)\Omega|
		\big(L_\Bog^L-\omega_\Bog(\k)-\i\epsilon\big)^{-1}L_3^Lb_\le^*(\k)\Omega\right)
                           \label{formi2}     \\
  -& \left(\Omega\right.\left|L_4^L\Omega\right) +\left(L_3^L \Omega\right.\left| (L_\Bog^L  -\i\epsilon)^{-1}L_3^L\Omega\right) \bigg{\}} 
\label{eq: physical correction braket I.}\end{align}
Again, from the naive expression  \eqref{formi1}+\eqref{formi2} we
need to subtract the extensive quantity 
\eqref{eq: physical correction braket I.}.

    
The formula for      $\delta(\kk)$ is suggested by the
following computation:
\begin{align}\label{forri}&
               \big(b_{\le}^*
               (\k)\Omega|L_4^L
  b_{\le}^*
               (\k)\Omega\big)-
\Big( b_{\le}^*
               (\k)\Omega|L_3^L
      \big(L_\Bog^L-\omega_\Bog(\k)-\i\epsilon\big)^{-1}L_3^L
      b_{\le}^*
               (\k)\Omega\Big)
  \\\label{kmsterm+}
  =&
  \big(\Omega|L_4^L
  \Omega\big)-
\Big(  \Omega|L_3^L
      \big(L_\Bog^L-\i\epsilon\big)^{-1}L_3^L
             \Omega\Big)\\\notag&
+\notag(\Omega|[b_{\le}(\k),
  [L_4^L, b_{\le}^*(\k)]]\Omega\big)\\\notag&
                                    -\big(\Omega|[b_{\le}(\k),
                                                  L_{3}^L](L_\Bog^L-\omega_\Bog(\k)-\i\epsilon)^{-1}[L_3^L,  
                                                  b_{\le}^*(\k)]\Omega\big)  \\
  &-\big(\Omega|[b_{\le}^*(\k), L_{3}^L](L_\Bog^L+\omega_\Bog(\k)-\i\epsilon)^{-1}[L_3^L, 
                       b_{\le}(\k)]\Omega\big)\notag\\
&  -\big(\Omega| L_{3}^L(L_\Bog^L-\i\epsilon)^{-1}[b_{\le}(\k) ,
                                                           [L_3^L, b_{\le}^*(\k)]]\Omega\big)
                                                           \notag \\ &-\big(\Omega|[b_{\le}^*(\k),
  [L_3^L, b_{\le}(\k)]](L_\Bog^L-\i\epsilon)^{-1}L_3^L\Omega\big)
      .\notag\end{align}
    To derive this we used
    \begin{align}
    &  [b_{\le}(\k), b_{\le}^*(\k)]=1,\\&b_{\le}^*(\k)(L_\Bog^L+\omega_\Bog(\k)-\i\epsilon)^{-1}=
      (L_\Bog^L-\i\epsilon)^{-1}b_{\le}^*(\k),\\
&      b_{\le}(\k)(L_\Bog^L-\omega_\Bog(\k)-\i\epsilon)^{-1}=
      (L_\Bog^L-\i\epsilon)^{-1}b_{\le}(\k).
            \end{align}
            Now \eqref{forri} consists of an extensive quantity
            \eqref{kmsterm+},
            and a few terms  involving commutators, which should have
a            better behavior as $L\to\infty$.  
            \eqref{kmsterm+} goes to zero as $\epsilon\searrow0$ by
            \eqref{fermi1-}. It needs to be subtracted.

    Unfortunately, one can verify that $$\lim_{|\k|\to 0}|\Re\delta(\kk)| = +\infty,$$  so that for the real
    shift of the quasiparticle spectrum our method is not sufficient.  
    However
    $\Im\delta(\kk)$ is finite
    and it is consistent with the imaginary shift of Subsection
    \ref{Decay  of one-quasiparticle states}: \beq 
\Im    \xi(\kk)=2\Im \delta(\kk).\eeq 

\subsection{Computation of matrix elements for the Liouvillean}  As we are interested in finite temperature computations, we will work in the Araki-Woods representation, where the interaction is implemented by the Liouvillean $L^L_3$ and $L^L_4$. 

We begin by computing the relevant matrix elements for $\xi(\k)$. The final result   is summarized in the following \begin{prop}\label{prop: xi computation}
Fix  $V\in ]0,+\infty[$ and suppose the potential satisfies Assumptions \eqref{eq: continuity of Fourier}--\eqref{eq: no plateau} for all $\nu\in ]0, V]$. Then, we have the following \begin{align}
	\xi(\k) = -\i\pi \int \frac{\d \p}{(2\pi)^d}j(\k;\p,\p-\k)^2\rho(\p)\rho(\p-\k)\rho(\k)\notag \\  \times  \left(1 -\e^{\beta\frac{\omega_\Bog(\p)+\omega_\Bog(\p-\k)+\omega_\Bog(\k)}{2}}\right)^2 \delta(\omega_\Bog(\p)+\omega_\Bog(\p-\k)-\omega_\Bog(\k))\label{eq: int1}\\ -\i2\pi\int \frac{\d \p}{(2\pi)^d}j(\p-\k;\p,\k)^2\rho(\p)\rho(\p-\k)\rho(\k)\notag \\  \times \left(\e^{\beta\frac{\omega_\Bog(\k)+\omega_\Bog(\p-\k)}{2}} -\e^{\frac{\beta}{2}\omega_\Bog(\p)}\right)^2 \delta(\omega_\Bog(\p)+\omega_\Bog(\k)-\omega_\Bog(\p-\k)),\label{eq: int2}
\end{align} where $j(\k;\p,\q)$ is the function defined in \eqref{eq: jj}.
\end{prop}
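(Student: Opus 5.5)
We start from the commutator decomposition \eqref{expro}. In it, the extensive term \eqref{kmsterm} is exactly the quantity subtracted in the definition \eqref{formu1}--\eqref{eq: physical correction braket I..} of $\xi(\k)$. So $\xi(\k)$ equals the limit of the remaining commutator terms.

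The first of these, $(\Omega|[b_{\le\re}(\k),[L_4^L,b_{\le\re}^*(\k)]]\Omega)$, is real, since $L_4^L$ is self-adjoint. It therefore contributes nothing to the imaginary part. We plan to check that it is actually cancelled, together with the real parts of the resolvent terms, so that $\xi$ is purely imaginary as claimed; for the imaginary part it can simply be dropped. The last two terms involve a double commutator of $L_3^L$ with $b_{\le\re}$. That double commutator is linear in creation and annihilation operators and produces $O(L^{-d/2})$ coefficients. Paired with $L_3^L\Omega$, which contains only three-quasiparticle states, its vacuum matrix element through $(L_\Bog^L-\i\epsilon)^{-1}$ vanishes by particle-number counting in the Araki--Woods Fock space. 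This leaves the two ``Fermi Golden Rule'' terms, built from $[L_3^L,b^*_{\le\re}(\k)]\Omega$ and $[L_3^L,b_{\le\re}(\k)]\Omega$.

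The main work is to compute these vectors explicitly. We write $L_3^L=L_{3,1}^L+L_{3,0}^L$ using \eqref{L31}, \eqref{L32}, and substitute the thermal operators $b_{\beta,\le},b_{\beta,\re}$ in terms of $b_\le,b_\re$. Only cubic monomials of the form $b^\#_{\le/\re}\,b^\#_{\le/\re}\,b^\#_{\le/\re}$ survive. Commuting with $b_\le^*(\k)b_\re^*(\k)$ removes one leg carrying momentum $\pm\k$, and the result applied to $\Omega$ is a linear combination of states with two or four quasiparticles.

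The resolvent $(L_\Bog^L-\i\epsilon)^{-1}$ acts diagonally on these states, with eigenvalue equal to the $\le$-energies minus the $\re$-energies. Momentum conservation under $\mathbf P^{\st,L}$ reduces every pairing to a single sum over $\p$. Then $\lim_{\epsilon\searrow0}\Im(x-\i\epsilon)^{-1}=\pi\delta(x)$, together with $L^{-d}\sum_\p\to(2\pi)^{-d}\int\d\p$, produces the delta functions.

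Because every $\omega_\Bog>0$, the processes involving three creations or three annihilations from $L_{3,0}$ give energies $\pm(\omega+\omega+\omega)$, which never vanish; they contribute only to the real part. The surviving resonances come from $L_{3,1}$ and are of two kinds. The first is $\omega_\Bog(\k)=\omega_\Bog(\p)+\omega_\Bog(\p-\k)$, the Beliaev channel. The second is $\omega_\Bog(\p)+\omega_\Bog(\k)=\omega_\Bog(\p-\k)$, the Landau channel, which we expect to receive a factor $2$ from the symmetric $\p\leftrightarrow\k-\p$ relabelling. The vertex coefficients symmetrize via \eqref{eq: relation vjj} into $L^{-d/2}j$. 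In the Landau channel one must first rename momenta so that the largest-energy leg is the first argument of $j$.

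The hardest step, and the one we expect to be error-prone, is bookkeeping the thermal weights. Each diagram carries a product of the factors $(1+\rho)^{1/2}$ and $\rho^{1/2}$ coming from the left and right legs, and left and right contributions enter with opposite signs, since $L_3=H_{3,\le}-H_{3,\re}$. We will collect all contributions with the same final state into a single amplitude before squaring. We then use $1+\rho=\e^{\beta\omega}\rho$ and the energy delta function to factor $\rho(\p)\rho(\p-\k)\rho(\k)$ out of the squared amplitude. What should remain is $(1-\e^{\frac\beta2(\omega_\p+\omega_{\p-\k}+\omega_\k)})^2$ in the Beliaev channel and $(\e^{\frac\beta2(\omega_\k+\omega_{\p-\k})}-\e^{\frac\beta2\omega_\p})^2$ in the Landau channel. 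The perfect-square structure is the consistency check we will rely on to catch sign errors. Finally, the assumptions on $\hat v$ ensure that the delta-supported integrals are finite and that the $L\to\infty$ limit commutes with taking the imaginary part.
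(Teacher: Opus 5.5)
Your overall route is the paper's: subtract the KMS term in \eqref{expro}, reduce to the Fermi Golden Rule term built from $[L_3^L,b^*_{\le\re}(\k)]\Omega$, expand in $b_{\le},b_{\re}$, symmetrize via \eqref{eq: relation vjj}, and use $\sqrt{1+\rho}=\e^{\beta\omega_{\Bog}/2}\sqrt{\rho}$. However, two steps are missing or wrong.

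\textbf{The real part is not shown to vanish.} The statement asserts that $\xi(\k)$ is \emph{purely} imaginary, and you only promise to check that the real part cancels. Observing that the $L_4^L$ term is real is not enough, since it must vanish.

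The missing idea is the left--right exchange symmetry. The unitary $S$ swapping $b_{\le}(\k)\leftrightarrow b_{\re}(\k)$ fixes $\Omega$ and $b^*_{\le\re}(\k)\Omega$, and maps $b_{\beta,\le}\leftrightarrow b_{\beta,\re}$. Since the coefficients $V,U,\hat v$ are real, this gives $SL^L_{\Bog}S^*=-L^L_{\Bog}$, $SL^L_3S^*=-L^L_3$ and $SL^L_4S^*=-L^L_4$.

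Two consequences follow. First, $(\Omega|[b_{\le\re}(\k),[L_4^L,b^*_{\le\re}(\k)]]\Omega)=0$. Second, $\Phi:=[L_3^L,b^*_{\le\re}(\k)]\Omega$ satisfies $S\Phi=-\Phi$. So every component of $\Phi$ with $L^L_{\Bog}$-eigenvalue $x$ is matched by a mirror component with eigenvalue $-x$ and the same squared amplitude. The resolvent then gives $\frac{1}{x-\i\epsilon}+\frac{1}{-x-\i\epsilon}=\frac{2\i\epsilon}{x^2+\epsilon^2}$.

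This one identity is why the principal values cancel, and it is also the source of the overall factor $2$ in both channels. The relative factor $2$ between \eqref{eq: int1} and \eqref{eq: int2} is a factor $\frac12$ in the \emph{Beliaev} channel. Its final state contains two identical left quasiparticles $\p,\k-\p$, so the sum over $\p$ double counts. It is not an extra factor in the Landau channel, which has no $\p\leftrightarrow\k-\p$ symmetry.

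\textbf{The reason for dropping the last two terms of \eqref{expro} is false.} The operator $[b_{\le\re}(\k),[L_3^L,b^*_{\le\re}(\k)]]$ is cubic, not linear, because each commutator removes only one contraction. For example, the monomial $b^*_{\le}(\k+\q)b_{\le}(\k)b_{\le}(\q)$ of $L_3^L$ contributes a nonzero multiple of $b_{\le}(\k)b^*_{\le}(\k+\q)b_{\le}(\q)$. So particle-number counting alone does not make it orthogonal to the three-quasiparticle vector $L_3^L\Omega$.

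The conclusion survives for a different reason. Checking monomial by monomial, with momentum conservation and $\p,\q,\p+\q\neq\0$, one finds $[b_{\le\re}(\k),[b^*_{\le\re}(\k),L_3^L]]\Omega=0$ and also $[b_{\le\re}(\k),L_3^L]\Omega=0$. For instance, a left leg and a right leg at $\k$ carry zero total standard momentum, which would force the third leg to $\0\notin\Xi_L^>$. Hence only the single term $-([L_3^L,b^*_{\le\re}(\k)]\Omega|(L^L_{\Bog}-\i\epsilon)^{-1}[L_3^L,b^*_{\le\re}(\k)]\Omega)$ remains.

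This vector consists of three-quasiparticle states, not two- or four-quasiparticle states. From $L^L_{3,1}$ you get two left and one right, or one left and two right (the Beliaev and Landau channels). From $L^L_{3,0}$ you get three of one kind. These two sectors have different values of $N_{\le}-N_{\re}$ and are therefore orthogonal, which is why no cross terms appear.
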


\begin{remark}
	Note that Integrals \eqref{eq: int1} and \eqref{eq: int2}
         coincide with the ones in the introduction, \eqref{gamma_B} and \eqref{gamma_L} respectively, apart from a factor of $2$. This additional factor appears in \eqref{eq: int1}, \eqref{eq: int2} because the vector state $b_\re^*(\k)b_{\le}^*(\k)\Omega$ describes processes with two incoming quasi-particles, a left one and a right one. 
\end{remark}

\begin{proof}
	 First of all, we note the following: $L^L_4$ and $L^L_3$ are antisymmetric under the exchange $\l\leftrightarrow \re$ while the vectors $\Omega$ and $b_{\le\re}^*\Omega$ are symmetric with respect to the same exchange. Thus, we deduce immediately that \begin{equation}
		(\Omega| L^L_4 \Omega ) = (b_{\le\re}^*\Omega| L^L_4 b_{\le\re}^*\Omega ) = 0, 
	\end{equation} and the same holds for $L^L_3$ matrix elements. Thus, we only need to consider three vectors \begin{equation}\label{eq: matrix for L3}
		[b^*_{\le\re}(\k), L^L_3]\Omega, \qquad [b_{\le\re}(\k), L^L_3]\Omega,\qquad [  b_{\le\re}(\k),[b^*_{\le\re}(\k), L^L_3]]\Omega,
	\end{equation}  and the matrix elements of $(L_\Bog -\i\epsilon)^{-1}$ with respect to these. With the constraints $\p,\q,\p+\q\neq \0$, the only non-zero vectors which might appear are all of the following  form \begin{align}
		&\label{eq: vector1}[b_{\le\re}^*(\k), b_\le^*(\p)b_\le^*(\q)b_\le(\p+\q)]\Omega = -\delta_{\k,\p+\q} b_\re^*(\k)b_\le^*(\k-\p)b_\le^*(\p)\Omega\\
		&\label{eq: 2vector}[b_{\le\re}^*(\k), b_\le^*(\p+\q)b_\re^*(\q)b_\le(\p)]\Omega = -\delta_{\k,\p} b_\le^*(\q+\k)b_\re^*(\q)b_\re^*(\k)\Omega\\
		&\label{eq: 4vector} [b_{\le\re}^*(\k), b_\re^*(\q+\p)b_\re^*(-\q)b_\le(-\p)]\Omega = -\delta_{-\p,\k} b_\re(\k)^*b_\re^*(\q-\k)b_\re^*(-\q)\Omega 
	\end{align} plus their $\le\leftrightarrow \re$ counterparts. All of \eqref{eq: vector1}---\eqref{eq: 4vector} are eigenvectors of $(L_\Bog-\i\epsilon)^{-1}$ and they are orthogonal to each other. In particular, we have\begin{equation}
	[b_{\le\re}(\k),[b_{\le\re}^*(\k),L_{3}^L]]\Omega=  [b_{\le\re}(\k), L^L_3]\Omega = 0.
	\end{equation}
	
	Thus, we see that the only matrix element which will contribute to $\xi(\k)$ is given by  \begin{equation}
	 -\big([L^L_{3},b_{\le\re}^*(\k)]\Omega|(L_\Bog^L-\i\epsilon)^{-1}[L_{3}^L, 
	b_{\le\re}^*(\k)]\Omega\big).
	\end{equation} Expanding the operators $b_{\beta,\le},\;b_{\beta,\re}$ in terms of $b_\le,\; b_\re$  and using Eqs. \eqref{eq: vector1}--\eqref{eq: 4vector}. we see that \begin{align}
	&[L^L_{3,1},b_{\le\re}^*(\k)]\Omega \notag \\ =& \sum_{\p\neq \0,\p\neq \k} V_{\p,\k-\p} 
	\left(\sqrt{(1+\rho(\p))(1+\rho(\p-\k) )(1+\rho(\k)) }  \notag \right. \\&\left.-\sqrt{\rho(\p)\rho(\p-\k)\rho(\k)}\right)b_\re^*(\k)b_\le^*(\p)b_\le^*(\k-\p)\Omega - (\le\leftrightarrow\re)\label{eq: beliaev1}  \\
	& + \sum_{\p,\p-\k\in\Xi_L^>}(V_{\k,\p} + V_{\p,\k}) \left(\right. \sqrt{(1+\rho(\p- \k))(1+\rho(\k))\rho(\p )} \notag \\ 
	 & \left. -\sqrt{\rho(\k)\rho(\p-\k)(1+\rho(\p)} \right.) b_\le^*(\k-\p)b_\re^*(\k) b_\re^*(-\p)\Omega  - (\le\leftrightarrow\re)\label{eq: landau1}
	\end{align}  and \begin{align}
	&[L^L_{3,0},b_{\le\re}^*(\k)]\Omega \notag \\ =& \sum_{\p,\p-\k\in\Xi^>_L} \left(U_{\k,-\p} + U_{-\p,\k} + U_{\p,\k-\p}\right) 	\left(\sqrt{(1+\rho(\k))\rho(\p)\rho(\p-\k) }  \notag \right. \\&\left.-\sqrt{(1+\rho(\k)(1+\rho(\p-\k))\rho(\p)}\right)b_\re^*(\k)b_\re^*(\p-\k)b_\re^*(-\p)\Omega - (\le\leftrightarrow\re)\label{eq: null1}.
	\end{align}
	Now, the vectors appearing in the sum in \eqref{eq: null1} are eigenvectors of 
	\((L_{\Bog} - \i\epsilon)^{-1}\) with eigenvalues
	\[
	-\bigl(\omega_{\Bog}(\k) + \omega_{\Bog}(\p) + \omega_{\Bog}(\p-\k) + \i\epsilon \bigr)^{-1},  
	\]
	while the terms labeled by \((\le \leftrightarrow \re)\) have eigenvalues
	\[
	\bigl(\omega_{\Bog}(\k) + \omega_{\Bog}(\p) + \omega_{\Bog}(\p-\k) - \i\epsilon \bigr)^{-1}.
	\]
	Thus, we obtain
	\begin{align}
	& \big([L^L_{3,0},b_{\le\re}^*(\k)]\Omega|(L_\Bog^L-\i\epsilon)^{-1}[L_{3,2}^L, 
	b_{\le\re}^*(\k)]\Omega\big) \notag \\
	=&\frac{1}{2}\sum_{\p,\p-\k\in\Xi^>_L}\left(U_{\k,-\p} + U_{-\p,\k} + U_{\p,\k-\p} + U_{\p-\k,\k} + U_{\k,\p-\k}+U_{\k-\p,\p}\right)^2\label{eq: lineacomplicata1} \\ 	&\times\left(\sqrt{(1+\rho(\k))\rho(\p)\rho(\p-\k) }  -\sqrt{(1+\rho(\k))(1+\rho(\p-\k))\rho(\p)}\right)^2\notag \\&\times\frac{2\i\epsilon}{(\omega_\Bog(\k)+\omega_\Bog(\p)+\omega_\Bog(\p-\k))^2 +\epsilon^2} \label{eq: null2}.
	\end{align} Where in the first line \eqref{eq: lineacomplicata1} we have symmetrized the effective potential on account of the symmetry $\p\leftrightarrow\k-\p$ of the whole expression. 
As the effective potential $U_{\q,\p}$ contains a factor $\frac{1}{L^\frac{d}{2}}$, the thermodynamic limit of \eqref{eq: null2} does converge as $L\to+\infty$ to a finite integral. After the thermodynamic limit, we can consider the $\epsilon\searrow0$ one. In this limit, the last factor in \eqref{eq: null2} converges in distribution to $\delta(\omega_\Bog(\k)+\omega_\Bog(\p)+\omega_\Bog(\p-\k))=0$.  Morover, from the expressions \eqref{eq: beliaev1},\eqref{eq: landau1} and \eqref{eq: null1} and a simple application of Wick theorem, we see that \begin{equation}
	\big([L^L_{3,1},b_{\le\re}^*(\k)]\Omega|(L_\Bog^L-\i\epsilon)^{-1}[L_{3,2}^L, 
	b_{\le\re}^*(\k)]\Omega\big) = 0.
	\end{equation} Hence, $L^L_{3,0}$ does not contribute to the damping. All that is left is to compute \begin{align}
		&\big([L^L_{3,1},b_{\le\re}^*(\k)]\Omega|(L_\Bog^L-\i\epsilon)^{-1}[L_{3,1}^L, 
	b_{\le\re}^*(\k)]\Omega\big) \notag \\
	=&\label{eq: symm2} \frac{1}{2}\sum_{\p,\p-\k\in\Xi^>_L} (V_{\p,\k-\p} + V_{\k-\p,\p})^2\left(\sqrt{(1+\rho(\p))(1+\rho(\p-\k) )(1+\rho(\k)) }   \right. \\&\left.-\sqrt{\rho(\p)\rho(\p-\k)\rho(\k)}\right)^2\frac{2\i\epsilon}{(\omega_\Bog(\k)-\omega_\Bog(\p-\k)-\omega_\Bog(\p))^2+\epsilon^2}\label{eq: symm3} \\ 
	&+ \sum_{\p,\k-\p\in\Xi^>_L} (V_{-\p,\k} + V_{\k,-\p})^2\left( \sqrt{(1+\rho(\p-\k))(1+\rho(\k))\rho(\p )} \right.\notag \\ 
	& \left. -\sqrt{\rho(\k)\rho(\p-\k)(1+\rho(\p)} \right)^2 \frac{2\i\epsilon}{(\omega_\Bog(\k)+\omega_\Bog(\p)-\omega_\Bog(\p-\k))^2+\epsilon^2}\label{eq: symm4} ,
	\end{align} where in the second line \eqref{eq: symm2} we have symmetrized the potential \begin{equation}
	V_{\p,\k-\p}^2 + V_{\p,\k-\p}V_{\k-\p,\p} \to \frac{1}{2}(V_{\k-\p,\p} + V_{\p,\k-\p})^2,
	\end{equation} exploiting the symmetry under $\p\leftrightarrow\k-\p$ of the other factors multiplying the effective potential and of the summation $\sum_{\p,\k-\p\in\Xi_L^>}$. The thermal factors in \eqref{eq: symm2},\eqref{eq: symm4} can be written in the following form \begin{align}
\left(\sqrt{(1+\rho(\p))(1+\rho(\k))(1+\rho(\p-\k))} - \sqrt{\rho(\p)\rho(\p-\k)\rho(\k)}\right)^2 \notag \\
= \rho(\p)\rho(\k)\rho(\p-\k)\left( 1- \e^{\beta\frac{\omega_\Bog(\k)+\omega_\Bog(\p)+\omega_\Bog(\p-\k)}{2}}\right)^2\label{eq: thermfactor1}\\
\left(\sqrt{(1+\rho(\p-\k))(1+\rho(\k))\rho(\p)} - \sqrt{(1+\rho(\p))\rho(\p-\k)\rho(\k)}\right)^2 \notag \\
= \rho(\p)\rho(\k)\rho(\p-\k)\left( e^{\beta\frac{\omega_\Bog(\p)}{2}}- \e^{\beta\frac{\omega_\Bog(\k)+\omega_\Bog(\p-\k)}{2}}\right)^2.\label{eq: thermfactor2}
	\end{align} Hence, the final expression before taking the thermodynamic limit $L\to+\infty$ reads as \begin{align}
-\frac{1}{2L^d}\sum_{\p,\p-\k\in\Xi^>_L} j(\k;\p,\p-\k)^2 \rho(\p)\rho(\p-\k)\rho(\k) \notag \\
\times \left( 1- \e^{\beta\frac{\omega_\Bog(\k)+\omega_\Bog(\p)+\omega_\Bog(\p-\k)}{2}}\right)^2\frac{2\i\epsilon}{(\omega_\Bog(\k)-\omega_\Bog(\p-\k)-\omega_\Bog(\p))^2+\epsilon^2} \\
-\frac{1}{L^d}\sum_{\p,\p-\k\in\Xi^>_L} j(\p-\k;\p,\k)^2\rho(\p)\rho(\p-\k)\rho(\k) \notag \\
\times \left( \e^{\beta\frac{\omega_\Bog(\p)}{2}}- \e^{\beta\frac{\omega_\Bog(\k)+\omega_\Bog(\p-\k)}{2}}\right)^2 \frac{2\i\epsilon}{(\omega_\Bog(\k)+\omega_\Bog(\p)-\omega_\Bog(\p-\k))^2+\epsilon^2}.
	\end{align} Since all the functions involved can be extended
        to continuous functions of the momenta $\p\in\mathbb{R}^d$, we
        can compute the limit $L\to +\infty$ by
        replacing $$\frac{1}{L^d}\sum_{\p,\p-\k\in\Xi^>_L}\to
        \int\frac{\d \p}{(2\pi)^d}.$$ The limit  $\epsilon\searrow 0$
        can then be evaluated using the limit of approximate delta functions
        $$ \lim_{\epsilon\searrow}\frac{\epsilon}{x^2 +\epsilon^2} =
        \pi  \delta(x).$$
        The well-definedness and finiteness of the integrals in \eqref{eq: int1} and \eqref{eq: int2} under our hypothesis \eqref{eq: continuity of Fourier}--\eqref{eq: no plateau} will be discussed in the following section (See  Th. \ref{thm: beliaev damping} and Th. \ref{thm: Landau damping}). This concludes the proof.
\end{proof} 

The computation of $\delta(\k)$ in \eqref{formi1} is analogous to that of $\xi(\k)$, the only major difference being the absence of the $\le \leftrightarrow \re$ symmetry for the vector $b_\le^*(\k)\Omega$. This lack of symmetry produces additional terms contributing to $\Re\,\delta(\k)$. 

\begin{prop}
Fix  $V\in ]0,+\infty[$ and suppose the potential satisfies Assumptions \eqref{eq: continuity of Fourier}--\eqref{eq: no plateau} for all $\nu\in ]0, V]$. Then, we have  for the imaginary correction \begin{align}
	\i\Im{\delta(\k)} = \frac{1}{2}\xi(\k),\label{eq: immdelta}
\end{align} This contribution is only due to the Fermi golden rule term. The real correction is the sums of two contributions. One coming from the Fermi golden rule: \begin{align}
&-\frac{1}{2}\int \frac{\d \p}{(2\pi)^d}\frac{j(\k; \p, \p-\k)^2\rho(\p)\rho(\p-\k)\rho(\k)}{\omega_\Bog(\p-\k)+\omega_\Bog(\p)-\omega_\Bog(\k)}\notag \\ 
&\times \left[ \left( 1- \e^{\beta\frac{\omega_\Bog(\k)+\omega_\Bog(\p)+\omega_\Bog(\p-\k)}{2}}\right)^2 -\left(\e^{\beta\frac{\omega_\Bog(\k)}{2}}- \e^{\beta \frac{\omega_\Bog(\p)+\omega_\Bog(\p-\k)}{2}}\right)^2  \right] \label{eq: redelta1}\\ 
&-\int \frac{\d \p}{(2\pi)^d}\frac{j(\p-\k; \p,\k)^2\rho(\p)\rho(\p-\k)\rho(\k)}{\omega_\Bog(\p-\k)-\omega_\Bog(\p)-\omega_\Bog(\k)}\notag \\ 
&\times \left[\left( \e^{\beta\frac{\omega_\Bog(\p)}{2}}- \e^{\beta\frac{\omega_\Bog(\k)+\omega_\Bog(\p-\k)}{2}}\right)^2 -\left( \e^{\beta\frac{\omega_\Bog(\p-\k)}{2}}- \e^{\beta\frac{\omega_\Bog(\p)+\omega_\Bog(\k)}{2}}\right)^2 \right]\label{eq: redelta2}\\
&-\frac{1}{2}\int \frac{\d \p}{(2\pi)^d}\frac{\kappa(\k, \p,\p-\k)^2\rho(\p)\rho(\p-\k)\rho(\k)}{\omega_\Bog(\k)+\omega_\Bog(\p)+\omega_\Bog(\p-\k)}\notag \\ 
&\times \left[ \left( 1- \e^{\beta\frac{\omega_\Bog(\k)+\omega_\Bog(\p)+\omega_\Bog(\p-\k)}{2}}\right)^2 -\left(\e^{\beta\frac{\omega_\Bog(\k)}{2}}- \e^{\beta \frac{\omega_\Bog(\p)+\omega_\Bog(\p-\k)}{2}}\right)^2 \right],\label{eq: redelta3}
\end{align}where $j(\k;\p,\q)$ and $\kappa(\k,\p,\q)$ have been defined in Eqs. \eqref{eq: jj} and \eqref{eq: kappa}, and one from the Hellmann-Feynmann term: \begin{align}
&\int \frac{\d \p}{(2\pi)^d}  \left[ \hat{v}(\0)(c_\k^2+s_\k^2)s_\p^2 + \hat{v}(\k-\p)\left((c_\k^2+s_\k^2)s_\p^2 +2c_\k s_\k c_\p s_\p\right)\right] \label{eq: int zeroHF}\\
&+ \int \frac{\d \p}{(2\pi)^d}(\hat{v}(\0) \hat{v}(\k-\p))(s_\k^2+c_\k^2)(s_\p^2+c_\p^2)\rho(\p) + \notag \\
&+\int\frac{\d \p}{(2\pi)^d} 4\hat{v}(\k-\p)c_\k s_\k c_\p s_\p \rho(\p).\label{eq: int thermaHF}
\end{align}

\end{prop}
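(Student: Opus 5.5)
We follow the proof of Prop. \ref{prop: xi computation}, starting from the decomposition \eqref{forri} of the naive expression \eqref{formi1}+\eqref{formi2}. By \eqref{fermi1-}, the extensive term \eqref{kmsterm+} is exactly cancelled by the subtraction \eqref{eq: physical correction braket I.}. This leaves one Hellmann--Feynman term, two ``resolvent sandwiched'' terms and two ``double commutator'' terms.

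\textbf{Reduction of the double commutators.} In both double-commutator terms, $[b_\le(\k),[L_3^L,b_\le^*(\k)]]$ is a linear combination of single creation and annihilation operators at momentum $\0$. These are excluded, since all momenta lie in $\Xi_L^>$. So, as in Prop. \ref{prop: xi computation}, these terms vanish identically.

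\textbf{Hellmann--Feynman term.} For $(\Omega|[b_\le(\k),[L_4^L,b_\le^*(\k)]]\Omega)$ only the quadratic pieces contribute. Using Wick's theorem in the Araki--Woods representation, these are $L^L_{2,1}$ and the normal-ordered contractions of $L^L_{4,2}$ and $L^L_{2,1}$ that produce $\rho(\p)$ and $1+\rho(\p)$. We contract all pairs except one $b_{\beta,\le}^*(\k)b_{\beta,\le}(\k)$, and check that $b_{\beta,\le}^*b_{\beta,\le}$ contributes $1+\rho$ while the $\re$ parts cancel the vacuum part. The non-thermal contraction gives \eqref{eq: int zeroHF}. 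The thermal contractions give \eqref{eq: int thermaHF}: the direct and exchange terms give $\hat v(\0)+\hat v(\k-\p)$, and the anomalous contraction gives the $4c_\k s_\k c_\p s_\p$ term. The $L^L_{2,0}$ and $L^L_{4,0}$ pieces give no contribution because they change particle number by two and cannot connect $b_\le^*(\k)\Omega$ to itself. Passing $\frac1{L^d}\sum\to\int\frac{\d\p}{(2\pi)^d}$ is legitimate by continuity and \eqref{eq: cutoff}. These terms are manifestly real.

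\textbf{Fermi Golden Rule terms.} Next we compute the vectors $[L_3^L,b_\le^*(\k)]\Omega$ and $[L_3^L,b_\le(\k)]\Omega$, expanding $b_{\beta,\cdot}$ in $b_\le,b_\re$ as in \eqref{eq: beliaev1}--\eqref{eq: null1}. Each is a sum of orthogonal three-quasiparticle vectors $b^*_\#(\p_1)b^*_\#(\p_2)b^*_\#(\p_3)\Omega$ with total standard momentum $\k$. They are eigenvectors of $L_\Bog^L$ with eigenvalue $\pm\omega_\Bog(\p_1)\pm\omega_\Bog(\p_2)\pm\omega_\Bog(\p_3)$. In contrast to the $b_{\le\re}^*$ case, the $\le$ and $\re$ sectors now enter with different shifted denominators, $L_\Bog\mp\omega_\Bog(\k)-\i\epsilon$. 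After symmetrizing the amplitudes via \eqref{eq: relation vjj} and \eqref{eq: relation ukappa}, each configuration yields
\[
\frac1{L^d}\,(\text{amplitude})^2\,(\text{thermal factor})\,(\Delta-\i\epsilon)^{-1},
\]
where $\Delta$ is an energy difference. The thermal factors are squared differences of square roots of products of $\rho$ and $1+\rho$, rewritten as in \eqref{eq: thermfactor1}--\eqref{eq: thermfactor2}. The $(\Delta-\i\epsilon)^{-1}$ is split as $\mathrm{P}\frac1\Delta+\i\pi\delta(\Delta)$. The two sectors carry the thermal factors with opposite signs, since $L_3^L=H_{3,\le}-H_{3,\re}$, and their resolvents $(\Delta-\i\epsilon)^{-1}$ and $(-\Delta-\i\epsilon)^{-1}$ are related by $\Delta\to-\Delta$. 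The principal values therefore combine into the differences of thermal factors in \eqref{eq: redelta1}--\eqref{eq: redelta3}, with $\kappa$ appearing for the $L_{3,0}$ channel. The delta parts add to exactly half of \eqref{eq: int1}+\eqref{eq: int2}, which gives \eqref{eq: immdelta}. The $\kappa$-channel delta vanishes because $\omega_\Bog(\k)+\omega_\Bog(\p)+\omega_\Bog(\p-\k)>0$. Cross terms between $L_{3,0}$ and $L_{3,1}$ vanish by orthogonality.

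\textbf{Main obstacle.} The main difficulty is bookkeeping. One must correctly match each three-particle channel, in both sectors and both commutators, to its resolvent sign and thermal weight, so that the real parts combine into the stated brackets. A second issue is justifying the principal-value integrals, which is where Assumption \eqref{eq: no plateau} enters. The factor $\tfrac12$ in the imaginary part should come out naturally: only one incoming quasiparticle is present, compared with $b_\le^*b_\re^*\Omega$.
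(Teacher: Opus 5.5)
Your overall route is the paper's: reduce via \eqref{forri}, kill the double commutators, compute $v_1:=[L_3^L,b_\le^*(\k)]\Omega$ and $v_2:=[L_3^L,b_\le(\k)]\Omega$, apply Sokhotski--Plemelj, and read off the Hellmann--Feynman term. Your argument for the double commutators is correct and cleaner than the paper's case list: the double commutator is linear and carries zero standard momentum, so it would have to live at the excluded momentum $\0$.

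The Fermi Golden Rule bookkeeping, however, is wrong as you describe it. Since $L_3^L$ is cubic, $[L_3^L,b_\le^{*}(\k)]$ and $[L_3^L,b_\le(\k)]$ are quadratic, so $v_1,v_2$ are sums of \emph{two}-quasiparticle vectors, cf.\ \eqref{eq: beliaev2}--\eqref{eq: null4}. In $v_1$ these are $b_\le^*(\p)b_\le^*(\k-\p)\Omega$, $b_\le^*(\k-\p)b_\re^*(-\p)\Omega$ and $b_\re^*(\p-\k)b_\re^*(-\p)\Omega$. In $v_2$ they are the mirror vectors $b_\re^*(\p)b_\re^*(\k-\p)\Omega$, $b_\re^*(\k-\p)b_\le^*(-\p)\Omega$ and $b_\le^*(\p-\k)b_\le^*(-\p)\Omega$. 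Their $L_\Bog^L$-eigenvalues are $\pm\omega_\Bog(\p_1)\pm\omega_\Bog(\p_2)$. The energy $\omega_\Bog(\k)$ enters only through the shift $-\omega_\Bog(\k)$ for $v_1$ and $+\omega_\Bog(\k)$ for $v_2$. These shifts are attached to the two commutators, not to the $\le$ and $\re$ sectors; each $v_i$ contains both sectors. The three-quasiparticle structure you describe belongs to $\xi(\k)$, where the spectator $b_\re^*(\k)$ supplies the energy $-\omega_\Bog(\k)$. Combining it with a shifted resolvent counts $\omega_\Bog(\k)$ twice and gives wrong denominators.

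The sign mechanism is also misidentified. Both terms enter \eqref{forri} as $-(v_i|(L_\Bog^L\mp\omega_\Bog(\k)-\i\epsilon)^{-1}v_i)$ with squared amplitudes, so the sign in $L_3^L=H_{3,\le}^L-H_{3,\re}^L$ plays no role. The brackets in \eqref{eq: redelta1}--\eqref{eq: redelta3} arise from a pairing of channels. Each channel of $v_1$, with squared thermal weight $T_1$ and denominator $\Delta-\i\epsilon$, has a mirror channel in $v_2$ with the same amplitude ($j$ or $\kappa$), weight $T_2$ and denominator $-\Delta-\i\epsilon$. The sum $-T_1(\Delta-\i\epsilon)^{-1}+T_2(\Delta+\i\epsilon)^{-1}$ has real part $-(T_1-T_2)\mathcal{P}\frac{1}{\Delta}$. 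If the weights also flipped sign, as you assert, the real part would contain $T_1+T_2$ instead.

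The imaginary part tends to $-\i\pi(T_1+T_2)\delta(\Delta)$. It equals exactly half of the corresponding part of $\xi(\k)$ only because $T_2$ vanishes on the support of $\delta(\Delta)$. For example, in the Beliaev channel $T_2\propto(\e^{\beta\omega_\Bog(\k)/2}-\e^{\beta(\omega_\Bog(\p)+\omega_\Bog(\p-\k))/2})^2$. The paper uses this fact, and your sketch omits it.

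Finally, in the Hellmann--Feynman term the $\re$ parts cancel the thermal factor, $(1+\rho(\k))-\rho(\k)=1$, not the vacuum part. Otherwise \eqref{eq: int zeroHF} would carry a factor $\rho(\k)$ and vanish at zero temperature. Only the quadratic $L^L_{2,1}$ and the quartic $L^L_{4,2}$ contribute, cf.\ \eqref{eq: HF1}--\eqref{eq: HF2}. The quartic $b^*b^*b^*b$ term that the paper also labels $L^L_{2,1}$ drops out for the same charge reason you give for $L^L_{2,0}$ and $L^L_{4,0}$.
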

\begin{proof}
For the Fermi golden rule, the only non zero contributions to $\delta(\k)$ comes from vectors of the following form \begin{align}
&\label{eq: vectorL1}	[b_\le^*(\k), b_\le^*(\p+\q)b_\re^*(\p)b_\le(\p)]\Omega =  -\delta_{\k,\p} b_\le^*(\p+\k)b_\re^*(\p)\Omega,\\
&\label{eq: vectorL2}	[b_\le^*(\k), b_\le(\q+\p)b_\le^*(\p)b_\le^*(\q)]\Omega = -\delta_{\q+\p,\k}b_\le^*(\k-\p)b_\le^*(\p)\Omega , \\
&\label{eq: vectorL3}	[b_\le^*(\k), b_\re^*(\q+\p)b_\re^*(-\p)b_\le(-\q) ]\Omega =  -\delta_{-\q,\k} b_\re^*(\p -\k)b_\re^*(-\p)\Omega,\\
&\label{eq: vectorL4}	[b_\le(\k), b_\le^*(\p+\q)b_\re^*(\p)b_\re^*(\p)]\Omega = \delta_{\k,\p+\q} b_\re^*(\p)b_\re^*(\k-\p) \Omega ,\\
&\label{eq: vectorL5}	[b_\le(\k), b_\re^*(\q+\p)b_\le^*(\p)b_\le^*(\q)]\Omega =  \delta_{\k,\p}b_\re^*(\q+\k)b_\le^*(\q)\Omega \notag\\ &+\delta_{\k,\q}b_\re^*(\p+\k)b_\le^*(\p)\Omega ,\\
&\label{eq: vectorL6}	[b_\le(\k), b_\le^*(\q+\p)b_\le^*(-\p)b_\le^*(-\q) ]\Omega =  \delta_{\k,-\p}b_\le^*(\q-\k)b_\le^*(-\q)\Omega \notag \\ &+\delta_{\k,-\q}b_\le^*(\p-\k)b_\le^*(-\p)\Omega +\delta_{\q+\p,\k}b_\le^*(\p-\k)b_\le^*(-\p)\Omega.
\end{align} From Eqs. \eqref{eq: vectorL1}---\eqref{eq: vectorL6} we can deduce that \begin{equation}
[b_\le(\k),[b_\le^*(\k),L_3^L]]\Omega = 0.
\end{equation} Thus, we only need to compute \begin{equation}
[b_\le^*(\k),L_3^L]\Omega, \qquad[b_\le(\k),L_3^L]\Omega.
\end{equation} Writing $L_3^L$ as $L_{3,1}^L+ L^L_{3,0}$ we obtain 
\begin{align}
&[L^L_{3,1},b_{\le}^*(\k)]\Omega \notag \\ =& \sum_{\p,\p-\k\in\Xi^>_L} V_{\p,\k-\p} 
\left(\sqrt{(1+\rho(\p))(1+\rho(\p-\k) )(1+\rho(\k)) }  \notag \right. \\&\left.-\sqrt{\rho(\p)\rho(\p-\k)\rho(\k)}\right)b_\le^*(\p)b_\le^*(\k-\p)\Omega \label{eq: beliaev2}  \\
& + \sum_{\p,\p-\k\in\Xi^>_L}(V_{\k,-\p} + V_{-\p,\k}) \left(\right. \sqrt{(1+\rho(\p- \k))(1+\rho(\k))\rho(\p )} \notag \\ 
& \left. -\sqrt{\rho(\k)\rho(\p-\k)(1+\rho(\p)} \right.) b_\le^*(\k-\p) b_\re^*(-\p)\Omega  \label{eq: landau2},
\end{align}  and
\begin{align}
	&[L^L_{3,0},b_{\le}^*(\k)]\Omega \notag \\ =& \sum_{\p,\p-\k\in\Xi^>_L} \left(U_{\k,-\p} + U_{-\p,\k} + U_{\p,\k-\p}\right) 	\left(\sqrt{(1+\rho(\k))\rho(\p)\rho(\p-\k) }  \notag \right. \\&\left.-\sqrt{(1+\rho(\p))(1+\rho(\p-\k))\rho(\k)}\right)b_\re^*(\p-\k)b_\re^*(-\p)\Omega \label{eq: null3}.
\end{align} where again we have exploited the reflection symmetry of $\rho,\; V$ and $U$. An analogous computation reveals that the vectors $[b_\le(\k),L^L_{3,1}]\Omega$ and $[b_\le(\k),L^L_{3,0}]\Omega$ can be obtained respectively by $[b_\le^*(\k),L^L_{3,1}]\Omega$ and $[b_\le(\k)^*,L^L_{3,0}]\Omega$ after exchanging the left and right labels $\le\leftrightarrow\re$ and  permuting the thermal factors
\begin{align}
	&[L^L_{3,1},b_{\le}(\k)]\Omega \notag \\ =& \sum_{\p,\p-\k\in\Xi^>_L} V_{\p,\k-\p} 
	\left(\sqrt{(1+\rho(\k))\rho(\p)\rho(\p-\k) }  \notag \right. \\&\left.-\sqrt{(1+\rho(\p))(1+\rho(\p-\k))\rho(\k)}\right)b_\re^*(\p)b_\re^*(\k-\p)\Omega \label{eq: beliaev3}  \\
	& + \sum_{\p,\p-\k\in\Xi^>_L}(V_{\k,-\p} + V_{-\p,\k}) \left(\right. \sqrt{(1+\rho(\p))(1+\rho(\k))\rho(\p -\k)} \notag \\ 
	& \left. -\sqrt{\rho(\k)\rho(\p)(1+\rho(\p-\k))} \right.) b_\re^*(\k-\p) b_\le^*(-\p)\Omega  \label{eq: landau4},
\end{align}

\begin{align}
		&[b_{\le}(\k),L^L_{3,0}]\Omega \notag \\ =& \sum_{\p,\p-\k\in\Xi^>_L} \left(U_{\k,-\p} + U_{-\p,\k} + U_{\p,\k-\p}\right) 	\left(\sqrt{(1+\rho(\k))(1+\rho(\p))(1+\rho(\p-\k)) }  \notag \right. \\&\left.-\sqrt{\rho(\k)\rho(\p-\k)\rho(\p)}\right)b_\le^*(\p-\k)b_\le^*(-\p)\Omega \label{eq: null4}.
\end{align}

From \eqref{eq: beliaev2}--\eqref{eq: null4} we see immediately that \begin{align}
	\left([b_\le^*(\k), L^L_{3,1}]\Omega\right| (L_\Bog - \omega_\Bog(\k) -\i\epsilon)^{-1}\left. [b_\le^*(\k), L^L_{3,0}]\Omega\right) \\= 	\left([b_\le(\k), L^L_{3,1}]\Omega\right| (L_\Bog + \omega_\Bog(\k) -\i\epsilon)^{-1}\left. [b_\le(\k), L^L_{3,0}]\Omega\right) = 0.
\end{align} Writing the thermal factor as in \eqref{eq: thermfactor1},\eqref{eq: thermfactor2} the final expression  before taking the thermodynamic limit will be given by \begin{align}
&-\frac{1}{2L^d}\sum_{\p,\p-\k\in\Xi^>_L} \frac{j(\k;\p,\p-\k)^2 \rho(\p)\rho(\p-\k)\rho(\k)}{\omega_\Bog(\p-\k)+\omega_\Bog(\p)-\omega_\Bog(\k)-\i\epsilon} \notag \\
&\times\left[ \left( 1- \e^{\beta\frac{\omega_\Bog(\k)+\omega_\Bog(\p)+\omega_\Bog(\p-\k)}{2}}\right)^2 -\left(\e^{\beta\frac{\omega_\Bog(\k)}{2}}- \e^{\beta \frac{\omega_\Bog(\p)+\omega_\Bog(\p-\k)}{2}}\right)^2  \right]  \label{eq: final L31 beliaev}\\
&-\frac{1}{L^d}\sum_{\p,\p-\k\in\Xi^>_L} \frac{j(\p-\k;\p,\k)^2\rho(\p)\rho(\p-\k)\rho(\k)}{\omega_\Bog(\p-\k)-\omega_\Bog(\k)-\omega_\Bog(\p)-\i\epsilon} \notag \\
&\times \left[\left( \e^{\beta\frac{\omega_\Bog(\p)}{2}}- \e^{\beta\frac{\omega_\Bog(\k)+\omega_\Bog(\p-\k)}{2}}\right)^2 -\left( \e^{\beta\frac{\omega_\Bog(\p-\k)}{2}}- \e^{\beta\frac{\omega_\Bog(\p)+\omega_\Bog(\k)}{2}}\right)^2 \right]\label{eq: final L31 landau} \\
&-\frac{1}{2L^d}\sum_{\p,\p-\k\in\Xi^>_L}\frac{\kappa(\k,\p,\p-\k)^2\rho(\p)\rho(\k)\rho(\p-\k)}{\omega_\Bog(\k)+\omega_\Bog(\p)+\omega_\Bog(\p-\k)-\i\epsilon}\notag\\ 
&\times\left[ \left( 1- \e^{\beta\frac{\omega_\Bog(\k)+\omega_\Bog(\p)+\omega_\Bog(\p-\k)}{2}}\right)^2 -\left(\e^{\beta\frac{\omega_\Bog(\k)}{2}}- \e^{\beta \frac{\omega_\Bog(\p)+\omega_\Bog(\p-\k)}{2}}\right)^2 \right].\label{eq: final L32}
\end{align} As $L\to+\infty$, the summation in Eqs. \eqref{eq: final L31 beliaev},\eqref{eq: final L31 landau} and \eqref{eq: final L32} will converge to the respective integrals. In the limit $\epsilon\searrow 0$ we use the \textit{Sochocki–Plemelj} distributional identity on the real line \begin{equation*}
\lim_{\epsilon\searrow 0} \frac{1}{x -\i\epsilon} = \pi\i \delta(x) + \mathcal{P}\frac{1}{x}.
\end{equation*} The resulting immaginary component of \eqref{eq: final L32} is $0$, as $\delta(\omega_\Bog(\k)+\omega_\Bog(\p)+\omega_\Bog(\p-\k))=0$. In Eq. \eqref{eq: final L31 beliaev}, the thermal factor $$ \left( \e^{\beta\frac{\omega_\Bog(\k)}{2}}- \e^{\beta \frac{\omega_\Bog(\p)+\omega_\Bog(\p-\k)}{2}}\right)^2$$
is in the kernel of the Dirac's delta $\delta(\omega_\Bog(\k)-\omega_\Bog(\p-\k)-\omega_\Bog(\p))$ and thus, it will not contribute to the immaginary part. The same holds for the rightmost thermal factor inside the square brakets of Eq. \eqref{eq: final L31 landau}. Hence, we see that the resulting contributions give exactly \eqref{eq: immdelta} for the imaginary part, and \eqref{eq: redelta1},\eqref{eq: redelta2},\eqref{eq: redelta3} for the real component. 

For the Hellmann-Feynmann correction, we have to compute the vector \begin{equation}
 [b_\le(\k),[L_4^L, b_\le^*(\k)]]\Omega
\end{equation} and evaluate its projection on $\Omega$. The only non zero contributions come from \begin{align}\label{eq: HF1}
&\left( \Omega|[b_\le(\k),[b_{\beta,\le}^*(\k'+\p)b_{\beta,\le}^*(\q-\k')b_{\beta,\le}(\q)b_{\beta,\le}(\p) -(\le\leftrightarrow\re),b_\le^*(\k)]]\Omega\right)\notag \\
=& \rho(\p)(\delta_{\q,\k}\delta_{\k',\0} +\delta_{\q,\k}\delta_{\k'+\p,\k})  + \rho(\q)(\delta_{\p,\k}\delta_{\k',\0}  +\delta_{\p,\k}\delta_{\q,\k+\k'}),
\end{align} and \begin{align}\label{eq: HF2}
&\left( \Omega|[b_\le(\k),[b_{\beta,\le}^*(\p)b_{\beta,\le}(\p) -(\le\leftrightarrow\re),b_\le^*(\k)]]\Omega\right)
= \delta_{\k,\p}.
\end{align} This shows that to the first non-trivial order only $L^L_{2,1}$ and $L^L_{2,0}$ contribute to the Hellman-Feynmann energy correction. Using \eqref{eq: HF1} and \eqref{eq: HF2} and the full expression for $L^L_{2,1}+L^{L}_{4,2}$, Eqs. \eqref{eq: L41},\eqref{eq: L42}, we obtain \begin{align}
&\left(\Omega | [b_\le(\k)[L_4^L, b_\le^*(\k)]\Omega\right) = \notag \\
=&\frac{1}{L^d}\sum_{\p\in\Xi^>_L} \left[ \hat{v}(\0)(c_\k^2+s_\k^2)s_\p^2 + \hat{v}(\k-\p)\left((c_\k^2+s_\k^2)s_\p^2 +2c_\k s_\k c_\p s_\p\right)\right] \label{eq: zeroHF}\\
&+ \frac{1}{L^d}\sum_{\p\in\Xi^>_L}(\hat{v}(\0)+ \hat{v}(\k-\p))(s_\k^2+c_\k^2)(s_\p^2+c_\p^2)\rho(\p) \notag \\
&+ \frac{4}{L^d}\sum_{\p\in\Xi^>_L} \hat{v}(\k-\p)c_\k s_\k c_\p s_\p \rho(\p).\label{eq: thermaHF}
\end{align} Taking the limit $L\to+\infty$ Eqs. \eqref{eq: zeroHF} and \eqref{eq: thermaHF}  converge to the corresponding integral over $\p\in\mathbb{R}^d$. We note that \eqref{eq: zeroHF} does not depend on the temperature, so that this term survive the zero temperature limit $\frac{1}{\beta}\to 0$, while \eqref{eq: thermaHF} is proportional to $\rho(\p)$, which goes to zero as $\e^{-\beta\omega_\Bog(\p)}$ in the aformentioned limit. We also remark how the thermal factor $\rho(\p)$ makes \eqref{eq: thermaHF} convergent, while \eqref{eq: zeroHF} necessitates a cut-off over momenta, which is naturally introduced by the Fourier transform of the potential, \textit{cf.} Assumption \eqref{eq: cutoff}. 
\end{proof}

\section{Computations of decay  rates} \label{sec: decayrates}

In this section, we fix the spatial dimension $d=3$. We will compute the integrals \eqref{gamma_B}, \eqref{gamma_L} in the small momenta and low temperature limit, i.e.  $\frac{|\k|}{\sqrt{\nu}}, \frac{1}{\beta
	\nu}\rightarrow 0$. The different limits of the damping will depend also on the ratio between the temperature and the momentum $\beta\sqrt{\nu}|\k|$.

We will obtain integrals which can be computed explicitely in terms of \textit{polylogarithm functions} $\operatorname{Li}_n$ \cite{LW_81} of positive  integer order $n$. The latter are defined for $|z|< 1$ by the power series \begin{equation}\label{eq: def on poly}
	\operatorname{Li}_n(z):= \sum_{k=1}^{+\infty}\frac{z^k}{k^n},
\end{equation} which can be extended by analytic continuation with the principal branch having a cut along $[1,+\infty[$ and agreeing with Eq. \eqref{eq: def on poly} when $n\geq 2$ and $|z|\leq 1$. For $n\geq 3$, we have the following special values of the polylogarithm: \begin{equation}
	\operatorname{Li}_n(1) = \zeta(n),\quad \left(\frac{\d}{\d z}\operatorname{Li}_n\right)(1) = \zeta(n-1),
\end{equation} where $\zeta(n)$ is the \textit{Riemann zeta function} evaluated at $n$. 

Our results are collected in two theorems dedicated to the analysis of Beliaev and Landau damping's integrals. In the following we will often consider, without any further mention of this fact, functions of a generic momentum  vector $\k$ which only depends on its euclidean norm as functions of the one dimensional variable $k:= |\k| = \sqrt{k_1^2+k_2^2+k_3^2}$ 

\subsection{Assumptions}
\label{Assumptions}
Let us decribe in detail the assumptions on the potential $v(\x)$ that
we will need in our estimates. 
We fix the dimension  $d=3$. Some of our hypotheses involve the effective chemical potential $\nu>0$. These hypotheses are to be understood in the following way: fix $V\in ]0,+\infty[$, then the potential satisfies the assumptions for all the possible choices of $\nu \in ] 0, V [ $. 

The first set of hypotheses was stated and
their meaning was discussed in Subsect. \ref{The ``first principles Hamiltonian''}:
\begin{subequations}\label{eq: assumptions on potential1.}
	\begin{align}
		&\label{eq: continuity of Fourier.} v\in L^1(\mathbb{R}^d,\d \x),\; v(\x)\in\mathbb{R};\\
		&\label{eq: cutoff.} \hat{v}\in L^2(\mathbb{R}^d,\d \k);\\
		&\label{eq: positivity of Fourier.} \hat{v}(\0)>0,\; \hat{v}(\k) > -\hat{v}(\0)\frac{|\k|^2}{2\nu}; \qquad \nu\in ]0, V];\\
		&v(\x) = v(-\x),\quad \text{which implies $\hat{v}(\k) = \hat{v}(-\k)$}.\label{eq: symmetry requirement.}
	\end{align}
      \end{subequations}

To carry out the computation of the Beliaev damping in Theorem \ref{thm: beliaev damping.} we will require in addition the rotational invariance of  $v$. Moreover, we will need some additional regularity assumption for  $\hat{v}(\k)$ close to $\k=\0$, both for making sense of the Dirac's delta $\delta(\omega_\Bog(\k)-\omega_\Bog(\p)-\omega_\Bog(\p-\k))$ that requires a differentiability hypothesis for small momenta, and for the estimates of the remainder terms in Eqs. \eqref{eq: reduced expression for Beliaev.}, \eqref{eq: Beliaev for high temperatures.}. The additional hypothesis are the following

\begin{subequations}\label{eq: assumptions on potential2.}
	 \begin{align}
		& v \text{ is rotationally invariant}; \label{eq: rot invariace} \\
		&\hat{v} \text{ is } C^5 \text{ in a neighborhood of } \k = \0;\label{eq: regularity at 0}\\
		& \frac{\nu}{\hat{v}(0)}\frac{\d^2\hat{v}}{\d k^2}(0) > -1,\qquad \nu\in ]0, V] .\label{eq: convexity for v}
	\end{align} 
\end{subequations}
  Hypothesis \eqref{eq: rot invariace} and \eqref{eq: regularity at 0} ensure that we can consider the potential as a function of $k:= |\k|$ and write it in the following form\begin{equation*}
\hat{v}(k) = \hat{v}(0)(1+ r(k)),\qquad r(k) = O(k^2) \text{ as }k\to 0.
\end{equation*} Hypothesis \eqref{eq: convexity for v} is sufficient to guarantee the existence of an interval $[0,K]$, with $K>0$, such that $p\to \omega_\Bog(p)$ is strictly convex for $p\in [0,K]$.

To evaluate the Landau damping as in Theorem \ref{thm: Landau damping.}, we need to make sense of the Dirac's delta $\delta( 
\omega_\Bog(\p-\k)-\omega_\Bog(\p)-\omega_\Bog(\k))$. The support of this latter distribution could contain regions of arbitrarily high momenta. Thus, we will need to control the derivatives of $\omega_\Bog(\p),\;\omega_\Bog(\p-\k)$ for all values of $\p\in\mathbb{R}^d$. For this reason, in addition to \eqref{eq: rot invariace} and \eqref{eq: convexity for v}, we need to require
\begin{subequations}\label{eq: Assumption3}
 \begin{align}
&	\hat{v} \text{ is }C^1 \label{eq: regularity everywhere}; \\
&\frac{k^2}{2\nu} + \frac{\hat{v}(k)}{\hat{v}(0)} + \frac{k}{2\hat{v}(0)}\frac{\d\hat{v}(k)}{\d k}= 0\text{ has only a finite number of solutions}\notag \\
& \text{and }\liminf_{k\to+\infty}\left(\frac{k^2}{2\nu} + \frac{\hat{v}(k)}{\hat{v}(0)} + \frac{k}{2\hat{v}(0)}\frac{\d\hat{v}(k)}{\d k}\right) > 0,\qquad \nu\in ]0, V]  \label{eq: no plateau}.
\end{align}
\end{subequations}
One could actually ask for less regularity, by only requiring $\hat{v}$ to be $C^1$ around those points where $\frac{\d\omega_\Bog(p)}{\d p}$ is small. 
We reamark that these hypothesis allow for $\omega_\Bog(k)$ to have a finite number of stationary points. Such stationary points typically arise in realistic dispersion relations, for instance in superfluid $^4\mathrm{He}$ \cite{VXRAK_2002, GBSKDFO_2021, BS_2018, DOSK_2002, HM_65}, where the Bogoliubov dispersion relation develops local maxima and minima. The quasiparticles associated with these extrema are known as \emph{maxons} and \emph{rotons}, respectively. As we will see, the contribution from second order perturbation theory is not influenced by the presence of such excitations.

\begin{example}
	To show the the total set of Assumptions \eqref{eq: continuity of Fourier}--\eqref{eq: no plateau} can be satisfied we present here an example of a potential satisfying them all: \begin{equation}
	\hat{v}(\k) := v e^{-\frac{|\k|^2}{2\nu}},\qquad 1>\frac{2v}{\nu} >0.
	\end{equation} Hypothesis \eqref{eq: continuity of Fourier} through \eqref{eq: regularity at 0} are easy to verify. The last one can be proved by noting that \begin{equation}
	\inf_{k>0}\left(\frac{k^2}{2\nu}\left(1- \e^{-\frac{k^2}{2\nu}}\right) + \e^{-\frac{k^2}{2\nu}} \right) > 0.
	\end{equation}
\end{example}

\subsection{Changes of variables}

We begin by proving the following lemma \begin{lemma}\label{lemma: regularity of the Bogoliubov lemma}
Fix  $V\in ]0,+\infty[$ and suppose the potential satisfies \eqref{eq: continuity of Fourier}--\eqref{eq: symmetry requirement} and \eqref{eq: rot invariace}--\eqref{eq: convexity for v} for all $\nu\in ]0, V]$. Then, there exists a sufficiently small neighborhood $U$ of 0 such that $k\to \omega_\Bog(k)$ is invertible and its inverse, $\omega\to p(\omega)$, is $C^5$.  In particular, $U\ni \omega\to \hat{v}(p(\omega))$ is in $C^5(U)$ 
\end{lemma}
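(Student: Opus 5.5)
The key difficulty is that $\omega_\Bog(k)=\sqrt{\tfrac14k^4+\nu\tfrac{\hat v(k)}{\hat v(0)}k^2}$ is not a smooth function of $k^2$ in a form that allows direct inversion. Its derivative at $0$ equals $\sqrt\nu\neq0$, but the square root must be handled first. We therefore factor out $k$. Using \eqref{eq: rot invariace} and \eqref{eq: regularity at 0} we write $\hat v(k)=\hat v(0)(1+r(k))$ with $r$ even, $C^5$ near $0$, and $r(k)=O(k^2)$. Then
\begin{equation*}
\omega_\Bog(k)=k\,g(k),\qquad g(k):=\sqrt{\nu+\nu r(k)+\tfrac14k^2}.
\end{equation*}
Since $g(0)=\sqrt\nu>0$ and the radicand is $C^5$ near $0$, $g$ is $C^5$ on a neighborhood of $0$. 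The square root is smooth away from $0$, and the radicand stays close to $\nu>0$ for small $k$. Hence $k\mapsto\omega_\Bog(k)$, viewed as a function on a symmetric interval around $0$ (extended oddly), is $C^5$.

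Next we compute $\omega_\Bog'(0)=g(0)=\sqrt\nu>0$. By the inverse function theorem in the $C^5$ category, there is a neighborhood $U$ of $0$ on which $\omega_\Bog$ is a $C^5$ diffeomorphism onto its image. In particular the inverse $\omega\mapsto p(\omega)$ is $C^5$. Alternatively, we can use \eqref{eq: convexity for v}: the convexity noted after the assumptions gives strict monotonicity on $[0,K]$, which confirms invertibility on the physical half-line. Only $\omega'(0)\neq0$ is essential for local invertibility. We need the neighborhood to work for the given $\nu$. If uniformity in $\nu\in]0,V]$ is wanted, the bound $|r(k)|\le Ck^2$ shows the neighborhood can be chosen with size $\sim\sqrt\nu$, but the statement only requires existence.

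Finally, $\omega\mapsto\hat v(p(\omega))$ is a composition of the $C^5$ map $p$ (taking values in the small neighborhood where $\hat v$ is $C^5$ by \eqref{eq: regularity at 0}) with the $C^5$ function $\hat v$ of $k$. This requires $\hat v$ to be $C^5$ as a function of the radial variable $k$, including through $k=0$. That holds because a rotationally invariant $C^5$ function restricted to a line through the origin is an even $C^5$ function of $k$. Composing $C^5$ maps gives $C^5(U)$.

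The main obstacle is the first step: justifying that $\omega_\Bog$ is $C^5$ at $k=0$ despite the square root of an expression vanishing there. The factorization $\omega_\Bog=k\,g(k)$ with $g(0)=\sqrt\nu$ resolves this, and the rest is the standard inverse function theorem.
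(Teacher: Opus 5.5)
Your proof is correct and follows the same route as the paper: the paper applies the implicit function theorem to $\omega-\omega_\Bog(k)$, which is equivalent to your use of the inverse function theorem once $\omega_\Bog$ is known to be $C^5$ with nonzero derivative at $0$. Your factorization $\omega_\Bog(k)=k\,g(k)$ with $g(0)=\sqrt\nu$ supplies the justification of $C^5$ regularity at $k=0$ that the paper only asserts. It also gives the correct value $\omega_\Bog'(0)=\sqrt\nu$, whereas the paper states that the $k$-derivative of $\omega-\omega_\Bog(k)$ at the origin is $-1$ when it is in fact $-\sqrt\nu$.
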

\begin{proof}
	The Fourier transform of the potential $\k\to \hat{v}(\k)$ can be regarded as a one dimensional function of $k:= |\k|$, which is $C^5$ in the latter variable in some neighborhood of $0$.  Thus, we have that $$\omega_\Bog(\k) = \sqrt{\frac{|\k|^4}{4} +  \frac{\nu\hat{v}(\k)}{\hat{v}(\0)}|\k|^2}$$
	is again a $C^5$ function of the variable $k$ in such a  neighborhood. Then, we can consider the two variables function \begin{equation}
		(\omega,k) \to g(\omega, k) := \omega- \sqrt{\frac{k^4}{4} + \frac{\nu\hat{v}(k)}{\hat{v}(0)}k^2}.
	\end{equation} It is immediate to verify that \begin{equation}
		g(0,0) = 0,\qquad (\partial_kg)(0,0) = -1,
	\end{equation} so that by the implicit function theorem we can find a neighborhood $U$ of $\omega = 0$ and a (unique) $C^5$ function $U\ni \omega\to p(\omega)$ such that $g(\omega, p(\omega))=0$. Hence, for $\omega_\Bog(\k)\in U$, $\hat{v}(\k)$ can be implicitly represented as a $C^5$ function of the Bogoliubov energy $\omega_\Bog(\k)$. 
\end{proof} 
The previous lemma allows us to use as integration variables, at least for sufficiently small momenta, the Bogoliubov energies $u := \omega_\Bog(\p),\text{ and } w:=\omega_\Bog(\q)$. For arbitrary values of the momenta, the relation $k\to\omega_\Bog(k)$ cannot be always inverted. However, if $\hat{v}$ is at least $C^1$,  one has\begin{equation}
	\frac{\d \omega_\Bog(k)}{\d k} =\frac{k^3}{2\omega_\Bog(k)} \left( 1+ \frac{2\nu\hat{v}(k)}{k^2} + \frac{\nu}{k}\frac{\d \hat{v}(k)}{\d k} \right),\qquad k >0,
\end{equation} which by Assumption \eqref{eq: regularity everywhere}--\eqref{eq: no plateau} can only be zero at a finite  number of points and remains positive at infinity. This implies that there are always a finite number of intervals where $k\to \omega_\Bog(k)$ can be inverted.

For completeness, we illustrate a list of relevant changes of variables that one can perform to simplify the integrals. A similar presentation can be found in \cite{DLN_2023}. First of all, it is convenient to employ spherical coordinates to get rid of the azimutal coordinate. This is done by  choosing the $\hat{z}$-axis oriented as the momentum $\k$. We can represent this change of variables as $\p = (p\sin\theta\cos\phi, p\sin\theta\sin\phi, p\cos\theta)$, so that the integration measure is changed as \begin{equation}
	\d \p  = p^2 \d p \d (\cos\theta)\d \phi.
\end{equation} Due to the symmetry of the problem we can always perform the $\phi$  integration, which results in a global factor of $2\pi$. Now, we can move to more natural coordinates for the problem given by $p:= |\p|$ and $q:=|\p-\k|$. The Jacobian of this transformation is computed to be \begin{equation}
	p^2 \d p \d (\cos\theta) = \frac{pq}{k}\d p \d q
\end{equation} and the integration range can be read from the constraints \begin{align}
	&|p-q|\leq k,\\ & k\leq p+q,
\end{align} that follow from the triangular inequality. 

If we restrict now to intervals such that both $p\to\omega_\Bog(p)$ and $q\to \omega_\Bog(q)$ are differentiable and can be inverted, we can use the variables $u := \omega_\Bog(p) $ and $w := \omega_\Bog(q)$. After this, the measure changes as follows \begin{equation}\label{eq: change in u e w}
	\frac{pq}{k}\d p \d q = \frac{uw}{k}f(u)f(w) \d u \d w,
\end{equation} with \begin{align}
	f(u) &:= \frac{\d p(u)^2}{\d u^2}\\ &= \left(\frac{\nu\hat{v}(p(u))}{\hat{v}(0)} +\left(\frac{1}{2}+\frac{\nu}{\hat{v}(0)} \frac{\d \hat{v}}{\d p^2}(p(u))\right) p(u)^2 \right)^{-1} 
\end{align} where $u\to p(u)$ is the inverse of $p\to\omega_\Bog(p)$. Note that by Assumptions \eqref{eq: positivity of Fourier}, \eqref{eq: regularity at 0} $f$ is always a bounded function for $u$ in a neighborhood of $0$. Furthermore, if we assume \eqref{eq: regularity everywhere} -- \eqref{eq: no plateau} we see that $f$ is well defined and bounded for sufficiently large $u$.

 Another useful change of variables that we will employ is
\begin{align}\label{eq: change in x and y 1}
x &= u + w, \quad y = u - w,\\
\label{eq: change in x and y 2}
u &= \frac{x + y}{2}, \quad w = \frac{x - y}{2},
\end{align}
which transforms the measure as
\begin{equation}
\frac{uw}{k} f(u) f(w) \d u \d w
= \frac{x^2 - y^2}{8k} f\left(\frac{x + y}{2}\right)
f\left(\frac{x - y}{2}\right) \d x \d y.
\end{equation}
In addition, it will be convenient to introduce the following rescalings and shifts, which allow us to factor out the dependence on the physical parameters from the integrals:
\begin{equation}
t := \frac{y}{\omega_{\Bog}(k)},
\end{equation}
for the Beliaev damping, and
\begin{equation}
t := \beta \nu \frac{x - \omega_{\Bog}(k)}{2\nu},
\end{equation}
for the Landau damping.

\subsection{Estimates for remainders}
In this subsection we will prove some preliminary results that will help us in characterizing the asymptotic properties of the function $j(\k;\p,\q)$ appearing inside the Beliaev and Landau integrals. We begin by proving a lemma that will allow us to carefully estimate the remainder terms arising in some of the subsequent expansions. \begin{lemma}\label{lemma: lagrange}
	Consider a function of two variables $U\ni(u,w)\to R(u,w)$ defined in a neighborhood $U$ of $(0,0)$. Suppose that \begin{enumerate}
		\item [i)] $R\in C^3(U)$.
		\item [ii)] $R(0,w) = R(u,0)= 0$.
	\end{enumerate} Then, the following holds \begin{equation}
		R(u,w) = uw \int^{1}_0\int^1_0 \d s_1 \d s_2  (\partial_u\partial_w R)(s_1 u, s_2 w)
	\end{equation}
\end{lemma}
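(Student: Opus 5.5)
The plan is to apply the fundamental theorem of calculus twice, once in each variable, using hypothesis ii) to remove the boundary terms. Since $U$ is a neighborhood of $(0,0)$, we may shrink it to a convex set, for instance a small open square $]-\delta,\delta[^2$. Then for every $(u,w)\in U$ the segments $\{(s_1u,w): s_1\in[0,1]\}$ and $\{(s_1u,s_2w): s_1,s_2\in[0,1]\}$ lie in $U$. Also, $R\in C^3(U)$ gives $\partial_uR$ and $\partial_w\partial_uR=\partial_u\partial_wR$ continuous, with the mixed partials commuting by Schwarz's theorem. So all the integrands below are continuous on compact domains, and differentiation under the integral sign is legitimate.

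\emph{Step 1, integration in $u$.} Since $R(0,w)=0$,
\begin{equation*}
R(u,w)=R(u,w)-R(0,w)=\int_0^1\frac{\d}{\d s_1}R(s_1u,w)\,\d s_1=u\int_0^1(\partial_uR)(s_1u,w)\,\d s_1 .
\end{equation*}

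\emph{Step 2, integration in $w$.} Set $g(s_1,w):=(\partial_uR)(s_1u,w)$. Differentiating $R(u',0)=0$ with respect to $u'$ gives $(\partial_uR)(u',0)=0$ for all $u'$, hence $g(s_1,0)=0$. Therefore
\begin{equation*}
g(s_1,w)=\int_0^1\frac{\d}{\d s_2}g(s_1,s_2w)\,\d s_2=w\int_0^1(\partial_w\partial_uR)(s_1u,s_2w)\,\d s_2 .
\end{equation*}

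Substituting Step 2 into Step 1 and using $\partial_w\partial_uR=\partial_u\partial_wR$ gives
\begin{equation*}
R(u,w)=uw\int_0^1\int_0^1(\partial_u\partial_wR)(s_1u,s_2w)\,\d s_1\,\d s_2 .
\end{equation*}
Fubini applies because the integrand is continuous on $[0,1]^2$.

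There is no real obstacle here. The only points needing care are the convexity of the domain and the observation that $R(\cdot,0)\equiv0$ forces $\partial_uR(\cdot,0)\equiv0$. Only $C^2$ regularity is actually used; the $C^3$ assumption is presumably there so that the resulting mixed-derivative integrand is $C^1$, which later bounds on the remainder will need.
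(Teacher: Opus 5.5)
Your proof is correct and is essentially the paper's argument: the fundamental theorem of calculus in $s_1$ using $R(0,w)=0$, then in $s_2$ using $R(u,0)=0$ to remove the boundary term. The only cosmetic difference is that you remove that boundary term pointwise, via $(\partial_uR)(\cdot,0)\equiv0$, whereas the paper uses the averaged identity $u\int_0^1(\partial_uR)(s_1u,0)\,\d s_1=R(u,0)=0$; your version also handles $u=0$ without a separate remark. Your observation that $C^2$ suffices is right, but strictly speaking it is the product shape of $]-\delta,\delta[^2$, not mere convexity, that guarantees $(s_1u,s_2w)\in U$.
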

\begin{proof}
	The proof is a simple application standard integral equalities.  We have \begin{align}
		R(u,w) &= R(u,w) - R(0,w) \\
		& = u\int^1_0 \d s_1(\partial_u R)(s_1u,w) \label{eq: nice second line} \\
		&\label{eq: nice properties} = uw \int^1_0 \int^1_0 \d s_1 \d s_2(\partial_{u}\partial_{w}R)(s_1 u, s_2w),
	\end{align} where in the last line \eqref{eq: nice properties} we have used that
	\begin{equation} 0=R(u,0) = u\int^1_{0}\d s_1 (\partial_u R)(s_1 u, 0),\end{equation} by the equality in the second line \eqref{eq: nice second line} and the second hypothesis on $R$.
      \end{proof} Another result we will use is the following

 \begin{lemma}\label{lemma: bootstrap}
	Take integers $l,m\geq 0$ and $n\geq l+m+1$. Consider a $C^{n-l-m}$ function $(u,w)\to f(u,w)$ satisfying \begin{equation}
		u^lw^m f(u,w) = O\left( (u^2+w^2)^{\frac{n}{2}}\right),\qquad \text{as } (u,w)\to (0,0).\end{equation} Then \begin{equation}
		f(u,w) \in O\left( (u^2 +w^2)^{\frac{n-l-m}{2}}\right).
	\end{equation}
\end{lemma}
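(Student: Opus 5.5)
The plan is to avoid dividing by $u^l w^m$ directly, since $u^l w^m$ vanishes on the coordinate axes. Instead I would Taylor-expand $f$ at the origin and show that its Taylor polynomial must vanish identically. Set $k:=n-l-m\geq 1$, so $f\in C^k$ near $(0,0)$.

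First, write $f=P+R$. Here $P$ is the Taylor polynomial of $f$ at $(0,0)$ of total degree at most $k-1$. The remainder $R$ is given in integral (or Lagrange) form by $k$-th order partial derivatives of $f$, evaluated at points of the segment joining $(0,0)$ and $(u,w)$. Since $f\in C^k$, these derivatives are bounded on a small closed ball around the origin, so $R(u,w)=O\big((u^2+w^2)^{k/2}\big)$. This is the only place the regularity hypothesis enters, and it is exactly enough.

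Second, multiply by $u^l w^m$. Using $|u^l w^m|\leq (u^2+w^2)^{(l+m)/2}$, we get $u^l w^m R=O\big((u^2+w^2)^{n/2}\big)$. Combined with the hypothesis, this gives
\[
u^l w^m P(u,w)=O\big((u^2+w^2)^{n/2}\big).
\]
The left-hand side is a polynomial $Q:=u^lw^mP$ of degree at most $l+m+k-1=n-1$. Decompose it into homogeneous components $Q=\sum_{d\leq n-1}Q_d$. Fix a direction $(u_0,w_0)$ and evaluate at $(tu_0,tw_0)$ with $t\to 0$. This gives $\sum_d t^d Q_d(u_0,w_0)=O(t^n)$, which forces every coefficient $Q_d(u_0,w_0)$ to vanish. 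As the direction is arbitrary, $Q\equiv 0$. Since the polynomial ring has no zero divisors and $u^lw^m\neq 0$, it follows that $P\equiv 0$.

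Hence $f=R=O\big((u^2+w^2)^{k/2}\big)$, which is the claim. There is no real obstacle. The one point requiring care is that only $C^k$ regularity is available, so the Taylor expansion must stop at degree $k-1$ with a remainder controlled by bounded $k$-th derivatives. The degree count then works out exactly: $\deg Q\leq n-1<n$ is precisely what the hypothesis $n\geq l+m+1$ and the choice $k=n-l-m$ guarantee.
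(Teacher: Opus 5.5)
Your proof is correct and follows essentially the paper's route. Both arguments Taylor-expand $f$ to order $n-l-m-1$, use the $C^{n-l-m}$ regularity to bound the remainder by $O\bigl((u^2+w^2)^{\frac{n-l-m}{2}}\bigr)$, and show the Taylor polynomial vanishes by restricting to lines through the origin, where a polynomial of degree below $n$ that is $O(|t|^n)$ must be zero. The paper does this by induction on $n$, removing one homogeneous order of derivatives at a time along the lines $w=\alpha u$. You instead eliminate the whole polynomial $u^lw^mP$ in one step and then cancel $u^lw^m$ in the integral domain $\mathbb{R}[u,w]$, which is a slightly cleaner way to organize the same argument.
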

\begin{proof}
	We will prove the Lemma  by induction on $n$. If $f$ satisfies the hypothesis, then we have \begin{equation}
		|u^{l+m}f(u,u)| \leq C|u^n|,\qquad \text{as } u\to 0
	\end{equation} for some constant $C>0$. As $n> l+m$, this implies that $f(0,0) = 0$. Then, we can write $f$ as \begin{align}\label{eq: expansion}
		f(u,w) =&  \int^{1}_0 \d s \left[u(\partial_uf)(su,sw) +w(\partial_wf)(su,sw)  \right]
	\end{align} If $n= l+m+1$, the proof is complete. 
	We proceed now by induction. Let us denote $n-l-m =:k$ to ease the notation. Suppose that the satement has been proved for $n-1 \geq l+m+1$. Then, if $f$ satisfies the hypothesis of the theorem with integer $n$, we get that  \begin{equation}
		u^lw^mf(u,w) \in O\left((u^2+w^2)^{\frac{n}{2}}\right) \subset O\left((u^2+w^2)^{\frac{n-1}{2}}\right). 
	\end{equation} Thus, by induction $f$ is of order at least $O\left((u^2+w^2)^{\frac{k-1}{2}}\right)$. Hence, by the differentiability properties of $f$, we know that \begin{equation}
		(\partial^i_u\partial^j_wf)(0,0) = 0, \qquad \text{for all } i,j\in\mathbb{N},\; i+j \leq k-2.
	\end{equation}  Iterating the expansion in \eqref{eq: expansion} for the derivatives of $f$, we see that $f$ can be written as \begin{align}
		f(u,w) 
		= &\int^{1}_0\d s_1 \dots \int^1_0 \d s_{k-1}\notag \\
		&\times\sum_{i=0}^{k-1} \binom{k-1}{i}u^i w^{k-1-i}(\partial_u^i \partial^{k-1-i}_w f)(s_1\dots s_{k-1}u,s_1\dots s_{k-1}w).
	\end{align} Let us now take $w = \alpha u$, for some arbitrary $\alpha\in\mathbb{R}\backslash\{0\}$. we have that \begin{equation}
		|\alpha^{m}|f(u,\alpha w)| \leq C |u|^{k},
	\end{equation} with $k>0$. This last relation implies that \begin{equation}
		P(\alpha):= \sum_{i=0}^{k-1}  \alpha^{k-1-i}\binom{k-1}{i}(\partial_u^i \partial^{k-1-i}_w f)(0,0) = 0,
	\end{equation} for every $\alpha\in \mathbb{R}\backslash\{0\}$, and thus, by continuity, for every $\alpha\in\mathbb{R}$. Now, $\alpha\to P(\alpha)$ is equal to the zero polynomial in $\alpha$. By linear independence of the monomials in $\alpha$,  all of $P$'s coefficients must be zero, i.e. \begin{equation}
		(\partial_u^i \partial^{k-1-i}_w f)(0,0) = 0.
	\end{equation} recalling that $f\in C^{k} = C^{n-l-m}$, we can iterate once again \eqref{eq: expansion} to obtain
	\begin{align}
		f(u,w) 
		= &\int^{1}_0\d s_1 \dots \int^1_0 \d s_{k}\notag \\
		&\times\sum_{i=0}^{k} \binom{k}{i}u^i w^{k-i}(\partial_u^i \partial^{k-i}_w f)(s_1\dots s_{k}u,s_1\dots s_{k}w),
	\end{align}which concludes the proof.
\end{proof}
The previous Lemma can be easily generalized to the case of $n$ variables. Let us now consider the function  $(\k,\p,\q)\to j(\k;\p,\q)$, which appears   in the Beliaev and Landau damping integrals \eqref{eq: int1} and \eqref{eq: int2}. The next proposition contains the main properties of some auxiliary functions which will appear inside the integrals after changing variables. We will denote by $u\to p(u)$ the inverse function of $p\to \omega_\Bog(p)$ in a sufficiently small neighborhood of $0$, see Lemma \ref{lemma: regularity of the Bogoliubov lemma}. \begin{prop}\label{prop: estimates for the remainder}Fix a $V \in ] 0,+\infty[$ and  and suppose the potential satisfies \eqref{eq: continuity of Fourier}--\eqref{eq: symmetry requirement} and \eqref{eq: rot invariace}--\eqref{eq: convexity for v} for any $\nu\in]0, V]$.
	Consider the function \begin{equation}
		F(\omega;u,w) := \sqrt{8\omega uw}\sqrt{\frac{\nu}{\hat{v}(0)}}j(p(\omega);p(u),p(w)),
	\end{equation} which is well defined in a sufficiently small neighborhood of $(0,0,0)$. Then, $F$  satisfies the following properties\begin{enumerate}
		\item [i)] It invariant under the exchange of the last two variables $F(\omega;u,w) = F(\omega;w,u)$; 
		\item [ii)]$F$ is at least $C^5$ in a sufficiently small neighborhood of $(0,0,0)$ and it satisfies \begin{align}\label{eq: null identities}
			F(\omega; 0 ,\omega) = F(\omega; \omega,0) = 0
		\end{align} for every $\omega \in [0,+\infty[$;
	\end{enumerate}  
	Moreover, $G(u,w):= F(u+w;u,w)^2$ satisfies the following expansion \begin{enumerate}
		\item[iii)]\begin{align}
			\frac{1}{\nu^5}G(u,w) &\label{eq: first form}= \frac{9}{\nu^6}u^2w^2(u+w)^2 +\frac{u^2w^2}{\nu^4}  S(u,w)
		\end{align} where $(u,w)\to S(u,w)$ is continuous functions of order $O(\frac{u^4+w^4}{\nu^4})$ as $(u,w)\to (0,0)$.
	\end{enumerate}
\end{prop}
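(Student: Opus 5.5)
Property i) is immediate from the symmetry $j(\k;\p,\q)=j(\k;\q,\p)$ stated after \eqref{eq: kappa}, since the prefactor $\sqrt{8\omega uw}$ is symmetric in $u,w$. For ii), I would write $c_p,s_p$ as functions of $u=\omega_\Bog(p)$. Set $\epsilon(p):=\nu\hat v(p)/\hat v(0)$, so that $\sqrt{\omega^2+\epsilon^2}\pm\omega$ appear in \eqref{eq: bogo coefficients 1}--\eqref{eq: bogo coefficients 2}. By Lemma \ref{lemma: regularity of the Bogoliubov lemma}, $u\mapsto \epsilon(p(u))$ is $C^5$ near $0$ and equals $\nu+O(u^2)$. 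The functions $c,s$ are singular like $u^{-1/2}$, so I would instead use the regular combinations
\[
\sqrt{u}\,(c-s)=\sqrt{2}^{-1}\big(\sqrt{A+u}-\sqrt{A-u}\big),\qquad \sqrt{u}\,(c+s)=\sqrt2^{-1}\big(\sqrt{A+u}+\sqrt{A-u}\big),
\]
with $A=\sqrt{u^2+\epsilon^2}$. These are $C^5$ near $u=0$ because $A\pm u\to\nu>0$. Since $\sqrt{u}(c-s)=O(u)$, the combinations actually needed are products like $\sqrt{u}(c_p-s_p)$ and $\sqrt{w\omega}(c_kc_q+s_ks_q)=\tfrac12[\sqrt w(c+s)\sqrt\omega(c+s)+\sqrt w(c-s)\sqrt\omega(c-s)]$. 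Expanding each bracket of $j$ into such sums and differences shows that $F$ is a polynomial in the six $C^5$ functions $\sqrt{x}(c\pm s)(x)$, $x\in\{\omega,u,w\}$, times $\hat v(p(\cdot))$. Hence $F\in C^5$.

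For the null identities, put $u=0$, $w=\omega$. Then $\sqrt u(c_p-s_p)=0$ and $\sqrt u(c_p+s_p)=\sqrt{2\nu}\,\hat v$-independent, i.e.\ $\sqrt{\nu}$ up to the constant normalisation. The $\hat v(\p)$-term vanishes. The remaining two terms are $\hat v(\omega)\sqrt\omega(s-c)(\omega)\cdot\sqrt{w}(s+c)(w)$ up to a factor, and $\hat v(w)\sqrt w(c-s)(w)\cdot\sqrt\omega(c+s)(\omega)$. With $w=\omega$ these cancel exactly. The case $(\omega;\omega,0)$ follows by i).

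For iii), write $G(u,w)=F(u+w;u,w)^2$. By ii), $F(u+w;u,w)$ vanishes on both axes $u=0$ and $w=0$, so Lemma \ref{lemma: lagrange} gives $F(u+w;u,w)=uw\,H(u,w)$ with $H$ continuous, indeed $C^3$ after differentiating under the integral. I would then Taylor expand $H$ to second order. Using $\epsilon=\nu(1+O(u^2))$, one gets $\sqrt u(c-s)=u/\sqrt{2\nu}\,(1+O(u^2/\nu^2))$ and $\sqrt u(c+s)=\sqrt{2\nu}(1+O(u^2/\nu^2))$. In $F$ only the leading products survive at order three, and a direct computation of the cubic Taylor polynomial should give $F(u+w;u,w)=3\nu^{-1/2}\cdot uw(u+w)+O(\,|(u,w)|^5)$ with suitable powers of $\nu$. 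The potential's $r(k)=O(k^2)$ only enters at relative order $u^2/\nu^2$, because $\hat v(p(u))=\hat v(0)(1+O(u^2/\nu^2))$. There is also no quartic term, by the parity of the expansions in $\sqrt{\cdot}$ together with the odd/even structure of $\sqrt u(c\mp s)$.

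To conclude, I would apply Lemma \ref{lemma: bootstrap} with $l=m=1$ to the remainder $F-3uw(u+w)\cdot(\mathrm{const})=O(|(u,w)|^5)$. This gives remainder $=uw\cdot O(|(u,w)|^3)$, and squaring yields \eqref{eq: first form} with $S=O((u^4+w^4)/\nu^4)$ after tracking the $\nu$-scaling by dimensional analysis: every $u$ appears as $u/\nu$. The main obstacle is this cubic coefficient computation, namely verifying the cancellations that leave exactly $9u^2w^2(u+w)^2$. Here I would first check the constant $3$ in the pure phonon case $\hat v\equiv\hat v(0)$ against the Beliaev value $\frac{3}{640\pi}$ in Theorem \ref{thm: beliaev damping.}. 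A second delicate point is the regularity count, $C^5$ versus the $C^{n-l-m}=C^3$ needed by Lemma \ref{lemma: bootstrap}, which suffices.
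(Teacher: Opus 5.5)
Your architecture is the paper's. Your regular combinations $\sqrt{x}\,(c\pm s)$ are the paper's $c(x)\pm s(x)$ divided by $\sqrt2$, where $c(x)=\sqrt{2x}\,c_{p(x)}$. The null identities are checked the same way. Part iii) is obtained by Taylor expansion followed by Lemma~\ref{lemma: lagrange}, Lemma~\ref{lemma: bootstrap} and squaring. Your parity argument for the absence of a quartic term is correct and cleaner than what the paper writes: $x\mapsto -x$ exchanges $c$ and $s$, so $F(-\omega;-u,-w)=-F(\omega;u,w)$ at the level of Taylor polynomials.

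\textbf{The gap: the cubic coefficient.} This coefficient is the real content of iii), and you leave it at ``should give''. The reasoning you offer for it does not work.
\begin{itemize}
\item The linear parts of the three terms of $F$ cancel, giving $2\nu^{3/2}(u+w-\omega)=0$ on $\omega=u+w$. So the cubic coefficient is made entirely of the relative-$O(x^2/\nu^2)$ corrections, and the $r$-corrections are exactly of that size.
\item Hence ``$r$ only enters at relative order $u^2/\nu^2$'' does not show the result is independent of the shape of $\hat v$, and ``only the leading products survive at order three'' is false.
\item Checking the case $\hat v\equiv\hat v(0)$ against $\frac{3}{640\pi}$ in Theorem~\ref{thm: beliaev damping} is circular, since that constant is derived from this proposition. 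It also says nothing about $r\neq 0$.
\end{itemize}

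\textbf{How to close it.} You need the expansions with $r$ kept explicitly, as in the paper. With $\nu(x)=\nu(1+r(x))$ and $c(x),s(x)$ as above,
\[
c(x)-s(x)=\sqrt\nu\Big(\frac{x}{\nu}-\frac{x\,r(x)}{2\nu}-\frac{x^3}{8\nu^3}\Big)+O(x^5),
\]
\[
c(u)s(w)+c(w)s(u)=\nu\Big(2+r(u)+r(w)+\frac{(u-w)^2}{4\nu^2}\Big)+O(|(u,w)|^4).
\]
The same holds for $c(u)c(w)+s(u)s(w)$ with $(u+w)^2$ in place of $(u-w)^2$. Inserting these into the three terms of $F$ gives, up to $O(|(\omega,u,w)|^5)$,
\[
\nu^{3/2}(u+w-\omega)\bigl(2+r(\omega)+r(u)+r(w)\bigr)+\frac{\omega\bigl(\omega^2-(u-w)^2\bigr)+u\bigl((\omega+w)^2-u^2\bigr)+w\bigl((\omega+u)^2-w^2\bigr)}{4\sqrt\nu}.
\]
At $\omega=u+w$ the first part, which carries all the dependence on $r$, vanishes identically. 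This cancellation, not an order count, is why the leading term does not depend on $\hat v$ beyond $\hat v(0)$. The second part equals $12uw(u+w)/(4\sqrt\nu)=3uw(u+w)/\sqrt\nu$.

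With this in place, the rest of your plan goes through: Lemma~\ref{lemma: lagrange}, then Lemma~\ref{lemma: bootstrap} with $l=m=1$ (your regularity count is right), then squaring.
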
 
\begin{proof}
	\textit{i)} We start by defining the following  auxiliary functions of the energy\begin{align}
		& \nu(\omega) :=\nu \frac{\hat{v}(p(\omega))}{\hat{v}(0)}\\
		&\label{eq: c regular}	c(\omega) := \sqrt{\sqrt{\omega^2+ \nu(\omega)^2 } + \omega}\\
		&\label{eq: s regular} s(\omega) :=  \sqrt{\sqrt{\omega^2+ \nu(\omega)^2 } - \omega}.
	\end{align} Eqs. \eqref{eq: c regular} and \eqref{eq: s regular} are just regularized versions of $c_\k$ and $s_\k$ obtained by multypling by $\sqrt{2\omega_\Bog(\k)}$ and expressing everything in terms of the energies. Then, we have $\nu(0) = \nu$ and $c(0) = s(0)=\sqrt{\nu}$. We can write $F$ in terms of these as \begin{align}\label{eq: explicit expression of F}
		F(\omega; u, w) = &  \nu(\omega)(s(\omega)-c(\omega))(c(u)s(w) + c(w)s(u)) \notag\\ + & \nu(u)(c(u)-s(u))(c(\omega)c(w)+s(\omega)s(w))\notag \\
		+& \nu(w)(c(w)-s(w))(c(\omega)c(u)+s(\omega)s(u)).
	\end{align}
	From Eq. \eqref{eq: explicit expression of F} we read immediately the symmetry of $F$ under $u\leftrightarrow w$. \textit{ii)} As a consequence of Lemma \ref{lemma: regularity of the Bogoliubov lemma}, $\omega\to\nu(\omega)$ and $\omega\to p(\omega)$ are $C^5$ functions in a neighborhood of $0$. This immediately implies that $s$ and $c$ are $C^5$ and thus, also $F$ is of the same class of differentiabilty close to the origin. To verify the identity \eqref{eq: null identities} it is sufficient to observe that $c(0)-s(0)= 0$ which implies $F(\omega; u,0) = -F(u;\omega,0)$.
	\textit{iii)} As $G$ is the square of $F$, it is of class at least $C^5$. Hence, we can take derivatives of $G$. Its first order derivatives will be of the form $F$ times first order derivatives of $F$ and thus, we have by $ii)$\begin{equation}\label{eq: first order derivative}
		(\partial_{u}G)(u,0) = \partial_{u}G(0,w) = 	(\partial_{w}G)(u,0) = \partial_{w}G(0,w) = 0.
	\end{equation}  We proceed now to perform an  expansion of $G$ as $(u,w)\to (0,0)$. Let us write $\hat{v}$ as \begin{equation}
\hat{v}(p) = 	\hat{v}(0)(1+ r(p)),\qquad r(p) = O(p^2)\text{ as }p\to0. 
	\end{equation}Then, we will have \begin{equation}
		\nu(\omega)^2 = \nu^2+2\nu^2r(p(\omega)) + \nu^2r(p(\omega))^2.
	\end{equation} With a slight abuse of notation we will denote $r(\omega)\equiv r(p(\omega))$. Thus, we have the expansions \begin{align}
		&c(\omega)- s(\omega) = \sqrt{\nu}\left(\frac{\omega}{\nu} - \frac{\omega r(\omega)}{2\nu} -\frac{\omega^3}{8\nu^3} + R_1\left(\omega\right)\right),\\
		&c(u)s(w) + c(w)s(u) \notag\\ &= \nu\left( 2 + r(u)+r(w) +\frac{1}{4\nu^2}(u-w)^2 + R_2\left(u,w\right)\right),\\
		& c(u)c(w)+s(u)s(w) \notag \\&= \nu\left(2+r(u)+r(w) +\frac{1}{4\nu^2}(u+w)^2 + R_3\left(u,w\right)\right),
	\end{align} where $R_1$ and $R_2$, $R_3$ are suitable remainders of order $O\left(\frac{\omega^5}{\nu^5}\right)$, $O\left(\frac{u^4+w^4}{\nu^4}\right)$ respectively, obtained from the power expansion in $\omega,u,w$ and $r$. Inserting these expansions in Eq. \eqref{eq: explicit expression of F}, we can expand $F$ as \begin{align}\label{eq:first expansion for F}
		F(u+w;u,w) = \frac{3}{\sqrt{\nu}}uw(u+w) + \nu^{5/2}R\left(u,w\right)
	\end{align} where $R$ is some remainder of order $O\left(\frac{u^5+w^5}{\nu^{5}}\right)$.  We can also write $R$ in terms of $F$ as \begin{equation}
		R\left(u,w\right) = \frac{1}{\nu^{5/2}}\left( F(u+w;u,w) - \frac{3}{\sqrt{\nu}}uw(u+w)\right).
	\end{equation} Thanks to the properties of the function appearing on the right, we deduce that $R$ is $C^5$, symmetric under the exchange $u\leftrightarrow w$ and it satisfies  \begin{equation}\label{eq: first symmetry for R}
		R( u,0) = R( 0, w) = 0.
	\end{equation}  By  Lemma \ref{lemma: lagrange} we know that $R(u,w)$ can be written as \begin{equation}\label{integral for remainder}
		R(u,w) = uw\int^1_0\int^1_0 \d s_1 \d s_2 (\partial_u \partial_w R)(s_1 u, s_2 w) = O\left(\frac{u^5+w^5}{\nu^5}\right),
	\end{equation} and by Lemma \ref{lemma: bootstrap} we know that the integral function in \eqref{integral for remainder} satisfies \begin{equation}
		\int^1_0\int^1_0 \d s_1 \d s_2 (\partial_u \partial_w R)(s_1 u, s_2 w) = O\left(\frac{u^3+w^3}{\nu^3}\right)
	\end{equation}

	After taking the square of $F(u+w;u,w)$ we obtain \begin{align}
		\frac{1}{\nu^5}G(u,w) = & \frac{9}{\nu^6}u^2w^2(u+w)^2 \notag \\
		+& \label{eq: first remainder R}\frac{6}{\nu^3}uw(u+w)R(u,w) \\ +& \label{eq: second remainder R} R^2(u,w). 
	\end{align} 
	After writing $R$ as in \eqref{integral for remainder} we get that both the term in \eqref{eq: first remainder R} and in \eqref{eq: second remainder R} are of the corrrect form \eqref{eq: first form}. 
\end{proof}

\subsection{Beliaev damping rate}
We can now start with the computation of the Beliaev damping rate: \begin{thm}\label{thm: beliaev damping}
	Fix a $V \in ] 0,+\infty[$ and  and suppose the potential satisfies \eqref{eq: continuity of Fourier}--\eqref{eq: symmetry requirement} and \eqref{eq: rot invariace}--\eqref{eq: convexity for v} for any $\nu\in]0, V]$.  Then: \begin{enumerate}
		\item [(a)] the Beliaev damping rate is computed to be \begin{align}\label{eq: full expressio for Beliaev}
			\gamma_\BB(\k;\beta,\nu) =&\frac{9\hat{v}(\0)\nu^{3/2}}{2048\pi}\frac{|\k|^4}{\nu^2}\frac{1}{\beta\nu}\mathcal{I}(\beta\omega_\Bog(\k))\left(1+ O\left(\frac{|\k|^2}{\nu}\right)\right)\quad \text{as }\frac{|\k|}{\sqrt{\nu}}\to 0,
		\end{align} where
		\begin{equation}
			\mathcal{I}(\theta) = \theta(1-\e^{-\theta} )\int^1_{-1} \d t \; \frac{(1-t^2)^2}{(1-\e^{-\theta\frac{1+t}{2}})(1-\e^{-\theta\frac{1-t}{2}})}
		\end{equation} is a continuous function of $\theta\in[0,+\infty)$ satisfying \begin{align*}
			& \mathcal{I}(\theta) = \frac{16}{3} + O(\theta) \quad \text{as }\theta \to 	0\\
			& \mathcal{I}(\theta) = \frac{16}{15}\theta + O\left(\frac{1}{\theta^2}\right) \quad \text{as }\theta\to +\infty
		\end{align*}
		\item [(b)] In the small temperature regime, $\frac{1}{\beta\sqrt{\nu}|\k|}\to 0$, the Beliaev damping rate can be extimated as \begin{align}\label{eq: reduced expression for Beliaev}
			\gamma_\BB(\k,\beta,\nu)=  \frac{3\hat{v}(\0)\nu^{3/2}}{640\pi}\frac{|\k|^5}{\nu^{5/2}}\left(1 + O\left( 
			\frac{1}{(\beta\sqrt{\nu}|\k|)^3} +  \frac{|\k|^2}{\nu}\right)\right) \notag \\ \text{as }\frac{|\k|}{\sqrt{\nu}},\; \;\frac{1}{\beta\sqrt{\nu}|\k|}\to 0 .
		\end{align}
		\item[(c)] In the opposite limit $\beta\sqrt{\nu}|\k|\to 0$,  the Beliaev damping rate can be extimated as	\begin{multline}\label{eq: Beliaev for high temperatures}
			\gamma_\BB(\k; \beta,\nu) = \frac{3\hat{v}(\0)\nu^{3/2}}{128\pi } \frac{|\k|^4}{\nu^2}\frac{1}{\beta\nu}\left( 1  + O\left(\beta\sqrt{\nu}|\k| +\frac{|\k|^2}{\nu}\right)\right)\\ \text{as } \frac{|\k|}{\sqrt{\nu}},  \beta\sqrt{\nu}|\k|\to 0.
		\end{multline} 
	\end{enumerate}
\end{thm}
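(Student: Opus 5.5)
Starting from the definition \eqref{gamma_B}, I would first rewrite the integrand using the changes of variables of the previous subsection. Choose the $\hat z$-axis along $\k$, integrate out $\phi$ (factor $2\pi$), pass to $p=|\p|$, $q=|\p-\k|$ with measure $\frac{pq}{k}\d p\,\d q$, and then to $u=\omega_\Bog(p)$, $w=\omega_\Bog(q)$ via \eqref{eq: change in u e w}. For $|\k|/\sqrt\nu$ small, the delta function $\delta(\omega_\Bog(\k)-u-w)$ forces $u,w\le\omega:=\omega_\Bog(\k)$. Hence only small momenta contribute, and Lemma \ref{lemma: regularity of the Bogoliubov lemma} makes the change of variables legitimate. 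Strict convexity of $\omega_\Bog$ on $[0,K]$ (from \eqref{eq: convexity for v}) ensures that the whole segment $u+w=\omega$, $0\le u,w$, lies inside the triangle-inequality region $|p-q|\le k\le p+q$: the constraint $p+q\ge k$ is exactly subadditivity coming from convexity with $\omega_\Bog(0)=0$.

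Next I would pass to $x=u+w$, $y=u-w$ and integrate the delta in $x$, setting $x=\omega$, and then rescale $y=\omega t$ with $t\in[-1,1]$. I then express $j^2$ through $G(u,w)=F(u+w;u,w)^2=8\omega uw\frac{\nu}{\hat v(0)}j^2$. The factor $uw$ cancels against the Jacobian $uw f(u)f(w)/k$. Proposition \ref{prop: estimates for the remainder}(iii) gives $G=\frac{9}{\nu}u^2w^2\omega^2\,(1+O(\omega^2/\nu^2))$. Moreover $f(u)=\nu^{-1}(1+O(u^2/\nu^2))$, and $\omega=\sqrt\nu k(1+O(k^2/\nu))$.

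For the thermal factor I would simplify on the shell $u+w=\omega$. The numerator becomes $(1-\e^{\beta\omega})^2$, so the factor equals
\[\frac{\e^{\beta\omega}-1}{(\e^{\beta u}-1)(\e^{\beta w}-1)}=\frac{1-\e^{-\beta\omega}}{(1-\e^{-\beta u})(1-\e^{-\beta w})},\]
where the second form uses $u+w=\omega$. With $u=\omega\frac{1+t}{2}$, $w=\omega\frac{1-t}{2}$ and $u^2w^2=\omega^4(1-t^2)^2/16$, collecting constants should give exactly $\frac{1}{\beta\omega}\mathcal I(\beta\omega)$ times the prefactor. I would check the constant $\frac{9}{2048\pi}$ against the $\theta\to\infty$ limit. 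The relative $O(k^2/\nu)$ error is uniform in $\beta$, because the thermal weight is positive and the remainders are controlled by the bound on $S$ from Proposition \ref{prop: estimates for the remainder}. This proves (a).

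For the asymptotics of $\mathcal I$: as $\theta\to0$, I expand $\frac{1-\e^{-\theta}}{(1-\e^{-a\theta})(1-\e^{-b\theta})}\approx\frac{1}{\theta ab}$. Since $ab=(1-t^2)/4$, this gives $\mathcal I\to4\int_{-1}^1(1-t^2)\,\d t=\frac{16}{3}$. The error is $O(\theta)$, from the smooth expansion of $\frac{s}{1-\e^{-s}}$. As $\theta\to\infty$, I get $\mathcal I\approx\theta\int_{-1}^1(1-t^2)^2\,\d t=\frac{16}{15}\theta$. The correction is $\theta\int(1-t^2)^2\bigl[(1-\e^{-\theta(1\pm t)/2})^{-1}-1\bigr]\,\d t$, which is dominated by the endpoints $t\approx\pm1$. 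Substituting $s=\theta(1-t)/2$ and using $(1-t^2)^2\approx 4(1-t)^2=16s^2/\theta^2$ gives $\theta\cdot\theta^{-3}\int s^2/(\e^s-1)\,\d s=O(\theta^{-2})$. The main obstacle I expect here is getting this endpoint estimate cleanly and uniformly. Parts (b) and (c) then follow by inserting these two asymptotics into (a) with $\theta=\beta\omega=\beta\sqrt\nu|\k|(1+O(k^2/\nu))$. For instance, $\frac{9}{2048}\cdot\frac{16}{15}\cdot\frac{1}{\beta\nu}\beta\sqrt\nu k\cdot\frac{k^4}{\nu^2}=\frac{3}{640}\frac{k^5}{\nu^{5/2}}$, and $\frac{9}{2048}\cdot\frac{16}{3}=\frac{3}{128}$.
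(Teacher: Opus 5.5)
Your proposal is correct and follows the paper's proof essentially step by step. It uses the same reduction $\p\to(p,q)\to(u,w)\to(x,y)$, the same appeal to Proposition~\ref{prop: estimates for the remainder} and to $f(u)=\nu^{-1}(1+O(u^2/\nu^2))$, uniformity in $\beta$ from positivity of the thermal weight, and the same small-$\theta$ expansion of $\mathcal{I}$.

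The deviations are minor. You justify the range $|y|\le\omega_\Bog(\k)$ directly from convexity instead of citing \cite[Lemma 4]{DLN_2023}; note that convexity with $\omega_\Bog(0)=0$ gives superadditivity $\omega_\Bog(p)+\omega_\Bog(q)\le\omega_\Bog(p+q)$, not subadditivity, although the inequality you actually use is the right one. For $\theta\to\infty$ you use an endpoint estimate instead of the paper's exact polylogarithm formula for $\mathcal{I}$; it is made clean by $(1-t^2)^2\le 4(1\mp t)^2$.
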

\begin{remark}
	We observe that Eqs. \eqref{eq: reduced expression for Beliaev} and \eqref{eq: Beliaev for high temperatures} are independent of  the momentum dependent term $r(\k)$  of the interaction potential. This implies that, to the leading order in the small-momentum limit, the process depends on the interaction solely through the value of its Fourier transform at zero, $\hat{v}(\0)$.
\end{remark}

\begin{proof}
	
	\textit{(a)} For sufficiently small $\k$, the support of the Dirac's delta \begin{equation*}
		\delta(\omega_\Bog(\k)-\omega_\Bog(\p)-\omega_\Bog(\p-\k)),
	\end{equation*} will be restricted to a neighborhoodof $p:=|\p| = 0$ $$U_\k\subset \{\p\in\mathbb{R}^3 | p\in [0,k]\},$$ which can be taken as small as we like for $k\to 0$. Thus, by Lemma \ref{lemma: regularity of the Bogoliubov lemma} we can always suppose that the  set $U_\k$ is contained in a neighborhood of $0$, where the Bogoliubov dispersion relations $p\to \omega_{\Bog}(p)$ and $|\p-\k|=q\to \omega_\Bog(q)$ can be inverted. Thus, we can change variables as in Eq. \eqref{eq: change in u e w} taking $u = \omega_\Bog(p)$ and $w=\omega_\Bog(q)$.  Hence, \eqref{gamma_B} takes the form  
	\begin{align}\label{eq: Beliaev uw}
		\frac{\hat{v}(0)}{64\pi \nu k \omega_\Bog(k)} \int \d u \d w G(u,w) f(u)f(w) \notag \\
		\times 
		\frac{(1-\e^{\frac{\beta}{2}(\omega_\Bog(k)+u+w)})^2}{(\e^{\beta\omega_\Bog(\k)}-1)(\e^{\beta u}-1)(\e^{\beta w}-1)}\delta(\omega_\Bog(k)-u-w)
	\end{align} where $G(u,w)$ has been defined in Prop. \ref{prop: estimates for the remainder}. Following Prop. \ref{prop: estimates for the remainder} we can now write \begin{equation}\label{eq: G in u w}
		G(u,w) = \frac{9}{\nu}u^2w^2(u+w)^2 + \nu^5 \frac{u^2 w^2}{\nu^4}S(u,w),
	\end{equation} where $S(u,w)$ is a remainder of order $O\left(\frac{u^4+w^4}{\nu^4}\right)$ as in Eqs. \eqref{eq: first remainder R},\eqref{eq: second remainder R}. In particular, since $u\to f(u)$ is a bounded, continuous function in the neighborhood $U_\k$, the expression in \eqref{eq: first form} guarantees tha the integral of the remainder is convergent also when the energies $u$ and $w$ approach $0$ and the thermal factor  in \eqref{eq: Beliaev uw}  becomes divergent. To evaluate the Dirac's delta we change now variables as in Eqs. \eqref{eq: change in x and y 1} and \eqref{eq: change in x and y 2}, so that $\delta(\omega_\Bog(k)-u-w) = \delta(\omega_\Bog(k)-x)$. If we carry out the $x$ integration, we see that the remaining integral in $y$ has an integration domain given by $-\omega_\Bog(k)\leq y \leq \omega_\Bog(k)$. This is a consequence of the convexity Assumption \eqref{eq: convexity for v} and of \cite[Lemma 4]{DLN_2023}.

	Let us first focus on the leading term on the right-hand side of \eqref{eq: G in u w}; the remainder $S$ will be treated afterward.
	
	We have to compute \begin{align}
		\frac{9\hat{v}(0)\omega_\Bog(k)}{2048 \pi \nu^2 k }(\e^{\beta\omega_\Bog(k)}-1) \int_{-\omega_\Bog(k)}^{\omega_\Bog(k)} \d y (\omega_\Bog(k)^2-y^2)^2\notag \\ \times f\left(\frac{\omega_\Bog(k)+y}{2}\right)f\left(\frac{\omega_\Bog(k)-y}{2}\right)\frac{1}{(\e^{\beta \frac{\omega_\Bog(k)+y}{2}}-1)(\e^{\beta \frac{\omega_\Bog(k)-y}{2}}-1)}.
	\end{align}Now, $f$ is a bounded, sufficinetly regular function which can be expanded around $u = 0$ as $f(u) = \nu^{-1}\left(+  O\left(\frac{u^2}{\nu^2}\right)\right)$. Thus,  after further rescaling the integration variable as $y := t \omega_\Bog(k)$ and denoting $\theta:= \beta\omega_\Bog(k)$ we obtain \begin{align}
		\frac{9\hat{v}(0)\omega_\Bog(k)^5}{2048 \pi \nu^4 k }\frac{1}{\beta\nu}\theta(1-\e^{-\theta})\int_{-1}^{1} \d t (1-t^2)^2\notag \\ \times \frac{1}{(1-\e^{\frac{\theta(1-t)}{2}})(1-\e^{-\frac{\theta(1+t)}{2}})}\left(1 + O\left(\frac{\omega_\Bog(k)^2}{\nu^2}\right)\right),
	\end{align} which gives exaclty Eq. \eqref{eq: full expressio for Beliaev} after expanding the Bogoliubov dispersion relation $\omega_\Bog(k) = \sqrt{\nu}k + O\left(\frac{k^3}{\sqrt{\nu}}\right)$.
	
	It only remains to examine the remainder $S(u,w)$ in \eqref{eq: G in u w}. After changing variables as we already did for the main term, we are left with \begin{align}\label{eq: nice form for remainder}
		\frac{\hat{v}(0)\nu^2}{128 \pi  k }\frac{\omega_\Bog(k)^4}{\nu^4}(1-\e^{-\theta}) \int_{-1}^{1} \d t (1-t^2)^2S\left( \omega_\Bog(k)\frac{1+t}{2},\omega_\Bog(k)\frac{1-t}{2}\right)\notag \\ \times \frac{1}{(1-\e^{\frac{-\theta(1+t)}{2}})(1-\e^{\frac{-\theta(1-t)}{2} }) }\left(1 + O\left(\frac{\omega_\Bog(k)^2}{\nu^2}\right)\right),
	\end{align} where we have already expanded the functions $f$. Now, as $t\in[-1,1]$ we can estimate from above \begin{equation}
		\left|S\left( \omega_\Bog(k)\frac{1+t}{2},\omega_\Bog(k)\frac{1-t}{2}\right)\right| \leq K \frac{\omega_\Bog(k)^4}{\nu^4}
	\end{equation} for some suitable constant $K>0$. Thus, we estimate the integral of the remainder as \begin{align}
		|\eqref{eq: nice form for remainder}|\leq M \hat{v}(0)\nu^{3/2}\frac{\omega_\Bog(k)^7}{\nu^7} \frac{1}{\beta\nu}\theta (1-\e^{-\theta})\int^{1}_1 \d t \frac{(1-t^2)^2}{(1-\e^{-\theta\frac{1+t}{2}})(1-\e^{-\theta\frac{1-t}{2}})},
	\end{align} where  $M$ is some other positive constant. This concludes point \textit{(a)}.

	To prove \textit{(b)}, we have to compute  the limit of $\mathcal{I}(\theta)$ as $\theta\to +\infty$. This function can be computed explicitly in terms of polylogarithm functions of integer order. The computation goes as follows:
	\begin{align}
		&\frac{1}{32\theta (1-e^{-\theta})}\mathcal{I}(\theta)=\int^{1}_0 \d t\; \frac{t^4 + t^2 -2t^3}{(1-\e^{-\theta t})(1-\e^{-\theta(1-t)})} \notag \\ =& \sum_{m=0}\sum_{n=0}\e^{-m\theta}\int^1_0 \d t\; (t^4 +t^2- t^3)\e^{-(n-m)\theta t} \notag \\ 
  =& \frac{1}{30}+ 4\left(\frac{\zeta(3)}{\theta^3}-6\frac{\zeta(4)}{\theta^4}\right. \notag\\  &\left.+12\frac{\zeta(5)}{\theta^5}  -\frac{\operatorname{Li}_3(\e^{-\theta})}{\theta^3} -6\frac{\operatorname{Li}_4(\e^{-\theta})}{\theta^4} -12\frac{\operatorname{Li}_5(\e^{-\theta})}{\theta^5}\right).
	\end{align} 
	 To conclude one has
	
	\begin{align}
		\mathcal{I}(\theta):=  &32\theta(1-\e^{-\theta})\left[ \frac{1}{30} +4\left(\frac{\zeta(3)}{\theta^3} -6\frac{\zeta(4)}{\theta^4} + 12\frac{\zeta(5)}{\theta^5} - \frac{\operatorname{Li}_3(\e^{-\theta})}{(\theta)^3} \notag \right.\right.\\ & \left.\left. -6\frac{\operatorname{Li}_4(\e^{-\theta})}{\theta^4}-12\frac{\operatorname{Li}_5(\e^{-\theta)}}{\theta^5}\right)\right],\quad \theta \geq0.
	\end{align}
	
	Since $\operatorname{Li}_n(\e^{-\theta})$ decays exponentially as $\theta \to +\infty$, we have the asymptotic value \begin{equation}
		\mathcal{I}(\theta) = \frac{16}{15}\theta + O\left(\frac{1}{\theta^2}\right),\quad \text{as }\theta\to+\infty
	\end{equation}
	which gives the correct estimate in \textit{(b)}.
	Similarly for \textit{(c)}  we just have to estimate
        $\mathcal{I}(\theta)$ for $\theta\to 0$, that is \begin{align}
		\mathcal{I}(\theta)=&32\theta(1-\e^{-\theta})\int^{1}_{0}\d t \frac{t^2(1-t)^2}{(1-\e^{-t\theta} )(1-\e^{-(1-t)\theta} ) }\notag \\ =& 32\int^{1}_{0} \d t\; t(1-t) + O(\theta)\notag \\ =& \frac{16}{3} + O(\theta).
	\end{align} 
	
\end{proof}

\subsection{Landau damping rate}In order to carry out the analysis of Landau damping, we will rely on the following Lemma \begin{lemma}\label{lemma: applicationf for the theorem}
	Let \begin{equation}[0,+\infty[\times[0,+\infty[\ni (s,\delta)\rightarrow f(s,\delta)\end{equation} be continuous  and polinomially bounded, such that there exists a positive constant $C$ and an integer $n\geq 0$ for which \begin{equation}
		|f(s,\delta)| \leq C(s+\delta)^n,\qquad \text{as }(s,\delta)\to (0,0).
	\end{equation} Then, there exist strictly positive constants $A,B,D$, such that \begin{align}
	\left|	\int^{+\infty}_{0} \d t \frac{t^2 f\left(\frac{t}{x},\delta\right)}{\sinh(\frac{t}{2})\sinh(\frac{t+\theta}{2})} \right|\leq D \int^{+\infty}_{0}\d t \frac{t^2\left(\frac{t}{x}+\delta\right)^n}{\sinh(\frac{t}{2})\sinh(\frac{t+\theta}{2})} 
	\end{align} uniformily for $x\geq B >0$, $\delta\leq A$ and $\theta\geq 0$.
\end{lemma}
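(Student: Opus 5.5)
The plan is to split the $t$-integral according to whether the argument $(s,\delta)=(t/x,\delta)$ of $f$ lies in a small neighbourhood of the origin, where the hypothesis $|f(s,\delta)|\le C(s+\delta)^n$ is available, or outside it, where only polynomial boundedness of $f$ can be used. Fix $\eta>0$ such that $|f(s,\delta)|\le C(s+\delta)^n$ whenever $s+\delta\le 2\eta$, and take $A:=\eta$. For $\delta\le A$, the region $t\le \eta x$ gives $t/x+\delta\le 2\eta$. On this region the integrand is bounded pointwise by $C$ times the integrand on the right-hand side, since the weight $t^2/(\sinh(t/2)\sinh((t+\theta)/2))$ is nonnegative. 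This part therefore contributes at most $C\int_0^{\eta x}\dots\le C\int_0^\infty\dots$, as required.

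For the tail $t\ge \eta x$, we use polynomial boundedness: $|f(s,\delta)|\le C'(1+s+\delta)^N$ for some $N$. On the tail, $s=t/x\ge\eta$, so $1\le \eta^{-1}(t/x)$ and hence $(1+t/x+\delta)^N\le c_\eta (t/x+\delta)^N$. The difficulty is that $N$ may exceed $n$, so the tail cannot be compared pointwise with $(t/x+\delta)^n$. The remedy is to exploit the exponential decay of the weight. Since $t\ge\eta x\ge \eta B$ and $x\ge B$, we have $(t/x+\delta)^{N-n}\le (t/B+A)^{N-n}\le c\,(1+t)^{N-n}$. The extra factor can be absorbed into the decay: for $t\ge \eta x$ the weight satisfies $\sinh(t/2)^{-1}\lesssim e^{-t/2}$. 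Altogether the tail is bounded by
\[
c\int_{\eta x}^\infty t^2(1+t)^{N-n}\Big(\frac tx+\delta\Big)^n \frac{\d t}{\sinh(\frac t2)\sinh(\frac{t+\theta}{2})}.
\]

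It remains to compare this with the full right-hand integral. I would bound $(1+t)^{N-n}\le c'' e^{t/4}$, so that the tail is at most a constant times $\int_{\eta x}^\infty t^2 (t/x+\delta)^n e^{t/4}/(\sinh(t/2)\sinh((t+\theta)/2))\,\d t$. The factor $e^{t/4}$ cannot simply be dropped, so instead I would compare directly. Use $\sinh(t/2)\ge e^{t/2}/4$ for $t\ge 2$, guaranteed once $B\ge 2/\eta$, and also $\sinh(t/2)\le e^{t/2}/2$, so that on the tail the ratio $e^{t/4}/\sinh(t/2)$ is comparable to $e^{-t/4}$. The tail is then bounded by $\int_{\eta x}^\infty t^2 (t/x+\delta)^n e^{-t/4}\,(\sinh((t+\theta)/2))^{-1}\d t$.

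This must be dominated by the main integral $\int_0^\infty$, which is bounded below by its contribution from $t\in[1,2]$, namely roughly $(1/x+\delta)^n e^{-\theta/2}$. On the tail region, $(t/x+\delta)^n\le t^n(1/x+\delta)^n$ for $t\ge1$, and $(\sinh((t+\theta)/2))^{-1}\lesssim e^{-\theta/2}e^{-t/2}$. The tail is therefore at most $(1/x+\delta)^n e^{-\theta/2}\int_{\eta x}^\infty t^{2+n}e^{-3t/4}\d t$, which is bounded by a constant times the $[1,2]$ piece, uniformly in $\theta\ge0$. This is the step where the uniformity in $\theta$ and in small $\delta$ has to be watched carefully, and I expect it to be the main obstacle. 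The key point is that both sides carry the same factor $e^{-\theta/2}$ for large $\theta$, while for small $\theta$ both are of order one. Choosing $B$ large and $D$ as the sum of the two constants concludes the argument.
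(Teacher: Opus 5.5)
Your argument is correct, but it takes a different route from the paper's in both regions.

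\emph{Inner region.} The paper splits at $t=x$ rather than $t=\eta x$. On $[0,x]$ it uses compactness of $[0,1]\times[0,A]$: the local bound holds near the origin, and away from it $|f|\le F$ while $(s+\delta)^n\ge 1/C'>0$. This gives the pointwise bound $|f(s,\delta)|\le\max\{C,C'F\}(s+\delta)^n$ for an arbitrary $A$. You avoid compactness by taking $A=\eta$, so the local bound applies directly. The cost is that $A$ must be small and $B\ge 2/\eta$ large. This is harmless, since the lemma only asserts existence of $A,B$, and in the application $x=\beta\nu\to\infty$, $\delta\to0$.

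\emph{Tail.} The paper never needs a lower bound on the right-hand side. It spends half of the exponential decay via $1/\sinh(t/2)\le\tfrac12\sinh(t/4)^{-2}$ and absorbs $|f(t/x,\delta)|/\sinh(t/4)\le L$. It then inserts $1\le(t/x+\delta)^n$ and substitutes $t=2u$. Using $(2u/x+\delta)^n\le 2^n(u/x+\delta)^n$ and $\sinh(u+\theta/2)\ge\sinh((u+\theta)/2)$, it lands back on the original weight, so the comparison is pointwise after rescaling. You instead bound the tail above by a constant times $(1/x+\delta)^n\e^{-\theta/2}\int_{\eta x}^\infty t^{n+2}\e^{-3t/4}\,\d t$. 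You bound the full right-hand integral below by its $[1,2]$ piece, which is at least a constant times $(1/x+\delta)^n\e^{-\theta/2}$. The common factor $\e^{-\theta/2}$ then gives uniformity in $\theta$.

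\emph{What each buys.} The paper's route is slicker and works for every $A,B>0$. Yours is more explicit and shows more: the tail is exponentially small in $x$ relative to the right-hand side. That is the kind of statement actually invoked later in the proof of Theorem \ref{thm: Landau damping}, in the factor $1+O(\e^{-\beta\nu/2})$, but it is not contained in the lemma as stated.

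\emph{Two points to tidy.}
\begin{enumerate}
\item Take $N\ge n$ without loss of generality, or note that $t/x+\delta\ge\eta$ on the tail already handles $N<n$.
\item Make the $[1,2]$ lower bound precise. On that interval use $(t/x+\delta)^n\ge(1/x+\delta)^n$, $\sinh(t/2)\le\sinh 1$ and $\sinh((t+\theta)/2)\le\tfrac12\e^{1+\theta/2}$.
\end{enumerate}
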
\begin{proof}  

	We separate the integral into two regions
	
\begin{equation}
		\int^{+\infty}_{0} \d t  = \int^{x}_0 \d t+ \int_{x}^{+\infty}\d t.
	\end{equation} 
	
		\textbf{Integral for $t\geq x$}. We estimate the second integral first. Note that for $t\geq x>0$ , we can always find a positive constant $M$ such that \begin{equation}
	\frac{1}{\sinh\left(\frac{t}{2}\right)} \leq M\frac{1}{\sinh\left(\frac{t}{4}\right)^2},
	\end{equation} where $M$ can be chosen indipendently from $x$ as $t\geq x\geq B$. By hypothesis, $f(t,\delta)$ is polynomially bounded, thus \begin{equation}
	\sup_{t\geq x,\delta\in[0,A]} \frac{|f\left(\frac{t}{x},\delta\right)|}{\sinh\left(\frac{t}{4}\right)} =: L< +\infty
	\end{equation} Hence, we can estimate the integral as 
	\begin{align}
		\left|	\int^{+\infty}_{x} \d t \frac{t^2 f\left(\frac{t}{x},\delta\right)}{\sinh(\frac{t}{2})\sinh(\frac{t+\theta}{2})} \right| \leq ML \int^{+\infty}_{x} \d t \frac{t^2 \left(\frac{t}{x}+\delta\right)^n }{\sinh(\frac{t}{4})\sinh(\frac{t+\theta}{2} ) }\\
		\leq 2ML \int^{+\infty}_{\frac{x}{2}} \d t \frac{t^2 \left(\frac{t}{x}+\delta\right)^n }{\sinh(\frac{t}{2})\sinh(\frac{t+\theta}{2} ) }\\
		\leq 2ML \int^{+\infty}_{0} \d t \frac{t^2 \left(\frac{t}{x}+\delta\right)^n }{\sinh(\frac{t}{2})\sinh(\frac{t+\theta}{2} ) },
		\end{align} where we also used that $\frac{t}{x}+\delta \geq 1$ for $\delta\geq 0,\; t\geq x$.

	\textbf{Integral for $0\leq t\leq x$}. By hypothesis, there exists a neighborhood $U$ of $(0,0)$ in $[0,+\infty[^2$ such that \begin{equation}
		\left|f\left(\frac{t}{x},\delta\right)\right|\leq C\left(\frac{t}{x}+\delta\right)^n,\qquad \text{for }\left(\frac{t}{x},\delta\right)\in U.
	\end{equation} We fix now \begin{equation}
	0< \frac{1}{C'}:= \min_{(s,\delta)\in U^c \cap [0,1]\times[0,A]}(s+\delta)^n,
	\end{equation}and \begin{equation}
	F:=\max_{(s,\delta)\in [0,1]\times [0,A]}|f(s,\delta)|<+\infty.
	\end{equation}Then, we will have \begin{equation}
	\left|f\left(\frac{t}{x},\delta\right)\right| \leq \max\{C, C'F\}\left(\frac{t}{x}+\delta\right)^n.
	\end{equation}  Inserting the latter inequality in the integral for $t\leq x$ concludes the proof.
	
\end{proof} 

An important aspect for the estimate of integral \eqref{gamma_L} is to determine the correct integration domain after having evaluated the Dirac's delta distribution. The following two lemmas establish for which values of $\p$ the condition $y = \omega_\Bog(\p)-\omega_\Bog(\p-\k) =  -\omega_\Bog(\k)$ is satisfied, provided that $|\k|$ is sufficiently small. The first lemma characterize in detail the small momentum region 
\begin{lemma}\label{lemma: integration domain Landau}
Fix a $V \in ] 0,+\infty[$ and  and suppose the potential satisfies \eqref{eq: continuity of Fourier}--\eqref{eq: symmetry requirement}, \eqref{eq: rot invariace}--\eqref{eq: convexity for v} for any $\nu\in]0, V]$. 
	\begin{enumerate}
		\item [\textit{a)}]There exists a positive constants $K,\delta>0$, such that for all $k\in[0,\delta]$,  $p\in [0,K]$ there exists a $\p\in\mathbb{R}^3$ with $|\p| = p$ and \begin{equation}
	y=	\omega_\Bog(\p)-\omega_\Bog(\p-\k) = -\omega_\Bog(\k).
	\end{equation}

	\end{enumerate}
\end{lemma}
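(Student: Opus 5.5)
Since the potential is rotationally invariant, for fixed $p=|\p|$ and $k=|\k|$ the quantity $\omega_\Bog(\p-\k)$ depends only on $q:=|\p-\k|$. As $\p$ ranges over the sphere $|\p|=p$, $q$ takes every value in $[|p-k|,\,p+k]$. So the plan is to show that, for $k\le\delta$ and $p\le K$, the equation
\[
\omega_\Bog(q)=\omega_\Bog(p)+\omega_\Bog(k)
\]
has a solution $q$ in the interval $[|p-k|,p+k]$. We then take $\p$ at the angle with $\k$ that realizes this $q$.

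Choose $K$ so that on $[0,K+2\delta]$ the map $p\mapsto\omega_\Bog(p)$ is strictly increasing and strictly convex. This is possible by Lemma \ref{lemma: regularity of the Bogoliubov lemma} together with the convexity assumption \eqref{eq: convexity for v}, since $\omega_\Bog'(0)=\sqrt\nu>0$. Set $h(q):=\omega_\Bog(q)-\omega_\Bog(p)-\omega_\Bog(k)$, which is continuous and increasing on the relevant interval. It then suffices to check two endpoint inequalities.

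At the lower endpoint $q=|p-k|$ we have $h\le0$. Indeed, if $p\ge k$, subadditivity of $\omega_\Bog$ gives $\omega_\Bog(p-k)\le\omega_\Bog(p)<\omega_\Bog(p)+\omega_\Bog(k)$; subadditivity in fact holds since $\omega_\Bog$ is increasing with $\omega_\Bog(0)=0$. If $p<k$, then $\omega_\Bog(k-p)\le\omega_\Bog(k)$ by monotonicity.

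At the upper endpoint $q=p+k$ we need $\omega_\Bog(p+k)\ge\omega_\Bog(p)+\omega_\Bog(k)$. This is superadditivity, and it follows from strict convexity with $\omega_\Bog(0)=0$: the function $\omega_\Bog(s)/s$ is nondecreasing, so
\[
\omega_\Bog(p)+\omega_\Bog(k)\le \tfrac{p}{p+k}\,\omega_\Bog(p+k)+\tfrac{k}{p+k}\,\omega_\Bog(p+k).
\]
The intermediate value theorem then gives the desired $q$. Degenerate cases ($p=0$ or $k=0$) are handled by direct inspection: for $p=0$ one gets $q=k$, with equality at the endpoint.

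The main obstacle is ensuring convexity on a uniform interval $[0,K+2\delta]$, uniformly in $\nu\in]0,V]$ if needed. To obtain it, I expand
\[
\omega_\Bog(p)^2=\nu p^2+\Big(\tfrac14+\tfrac{\nu\hat v''(0)}{2\hat v(0)}\Big)p^4+O(p^5),
\]
using the $C^5$ regularity. Assumption \eqref{eq: convexity for v} makes the quartic coefficient positive. Hence $\omega_\Bog''(p)=c\,p/\sqrt\nu+O(p^2)$ with $c>0$, which gives strict convexity on a small interval near $0$. Superadditivity is then meant with equality only in the limiting collinear configuration, which still produces an admissible $\p$.
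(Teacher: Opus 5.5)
Your argument is essentially the paper's proof. The paper also fixes $|\p|=p$ and compares the parallel configuration $q=|p-k|$, where $y>-\omega_\Bog(k)$, with the antiparallel one $q=p+k$. It gets $y\le-\omega_\Bog(k)$ at the antiparallel point because $f_k(p)=\omega_\Bog(p)-\omega_\Bog(p+k)$ is decreasing with $f_k(0)=-\omega_\Bog(k)$, and then concludes by intermediate values on the sphere. Your use of the monotonicity of $\omega_\Bog(s)/s$ is just another way of extracting the same superadditivity from convexity. Your case split at the lower endpoint is more careful than the paper's one-line claim.

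Two corrections are needed.

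First, the lower endpoint needs only monotonicity. Your parenthetical claim that $\omega_\Bog$ is subadditive because it is increasing with $\omega_\Bog(0)=0$ is false: on the convex region it is strictly \emph{super}additive, as you prove a few lines later. Drop that remark.

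Second, and more substantively, your own expansion shows that the quartic coefficient $\tfrac14+\tfrac{\nu\hat v''(0)}{2\hat v(0)}$ is positive exactly when $\tfrac{\nu}{\hat v(0)}\hat v''(0)>-\tfrac12$. Assumption \eqref{eq: convexity for v}, read with the ordinary second derivative in $k$, only gives $>-1$. So your sentence asserting that the assumption makes the quartic coefficient positive does not follow. For values in $]-1,-\tfrac12[$, $\omega_\Bog$ is strictly concave and increasing near $0$, hence strictly subadditive. Then $\omega_\Bog(q)\le\omega_\Bog(p+k)<\omega_\Bog(p)+\omega_\Bog(k)$, so the equation $y=-\omega_\Bog(k)$ has no solution for small $p,k>0$.

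The paper's proof also takes strict convexity near $0$ from this assumption without checking it. So this is a defect of the stated hypothesis (apparently a factor-$2$ slip), not of your route. You should still state explicitly that what you use is strict convexity of $\omega_\Bog$ on some $[0,K_0]$. This is guaranteed, for instance, by $\tfrac{\nu}{\hat v(0)}\hat v''(0)>-\tfrac12$.
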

\begin{proof}
\textit{a)} By Assumption \eqref{eq: convexity for v} there exists an $K>0$ such that for $p\leq K$, $p\to\omega_\Bog(p)$ is strictly convex and at least $C^5$. Then, we consider all the possible vectors $\p$ in the ball with radius $K$. Let us fix $|\p| = p$. By choosing the angle $\theta$ between $\p$ and $\k$ equal to $0$ we get \begin{equation}
	\omega_\Bog(\p)-\omega_\Bog(\p-\k)=\omega_\Bog(p)- \omega_\Bog(|p-k|) > -\omega_\Bog(k),
\end{equation} for all $p>0$. 

We can also choose $\p$ satisfying $\p\cdot \k = -pk$. This implies, $|\p-\k| = p+k$. We define \begin{equation}
f_k(p) := \omega_\Bog(p)-\omega_\Bog(p+k).
\end{equation} This function satisfies $f_k(0) =- \omega_\Bog(k)$. Using  the differentiabily hypothesis  we can write\begin{equation}
f_k(p) = -k\int^1_0 \d s\omega_\Bog'(p+sk).
\end{equation} As $p\to \omega_\Bog(p)$ is convex in a neighborhood of $p=0$, we see that we can always choose $\delta$ and $K$, with $\delta < K$, sufficiently smal so that $f_k'(p)<0$. Thus, $f_k(p)$ is striclty decreasing for $p\in [0,K]$ and $k\in [0,\delta]$. This implies that  $f_k(p)< f_k(k)=- \omega_\Bog(k)$. 

Using the continuity of $\p\to\omega_\Bog(\p)$ and the mean value theorem, we can always find a $\p\in \mathbb{R}^3$ with $|\p| = p\in [0,K] $ satisfying the equality $\omega_\Bog(\p)-\omega_\Bog(\p-\k) =- \omega_\Bog(\k)$.

\end{proof}

In general, the dispersion relation $p\to\omega_\Bog(p)$ is not invertible everywhere. Thus, we wil have to divide the integration region in intervals where the relation can be inverted. In addition, we know by Lemma \eqref{lemma: regularity of the Bogoliubov lemma}, there is a neighborhood of $p=0$ where  $p\to\omega_\Bog(p)$ can be inverted, and by Lemma \ref{lemma: integration domain Landau} we know that there is always a interval $[0,K]$, where the delta distribution $\delta(\omega_\Bog(\p-\k)-\omega_\Bog(\p)+\omega_\Bog(\k))$'s argument vanish. For $k$  sufficiently small, we can always take this interval to be contained in the region where $\omega_\Bog$ is invertible. We will show in the following Lemma, that the above Dirac's delta distribution is not supported in regions where the first derivative of $\omega_\Bog(p)$ becomes too small \begin{lemma} \label{lemma: rotons and maxons}
	Fix a $V \in ] 0,+\infty[$ and  and suppose the potential satisfies \eqref{eq: continuity of Fourier}--\eqref{eq: symmetry requirement},  \eqref{eq: rot invariace}--\eqref{eq: convexity for v}  and \eqref{eq: regularity everywhere} for any $\nu\in]0, V]$.  Then, for any compact set $P\subset [0,+\infty[$, there exist positive constants $\delta,\epsilon>0$ such that for all $k\in [0,\delta] =: I_\delta$ and $p\in P$ such that $$\min_{p,q}\left|\frac{\d\omega_\Bog(\cdot)}{\d p}\right|<\epsilon\sqrt{\nu},$$  one has \begin{equation}
		\omega_\Bog(p)-\omega_\Bog(q) > -\omega_\Bog(k),
	\end{equation} for $|\k| = k,\; |\p-\k| = q$, $ |\p|=p\in P$.
\end{lemma}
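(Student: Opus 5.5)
The plan is to argue by contradiction using the mean value theorem along the segment between $p$ and $q$. I read the statement as follows: if the derivative $\omega_\Bog'$ is smaller than $\epsilon\sqrt\nu$ in modulus at $p$ or at $q$, then the support condition of the Landau delta function, namely $\omega_\Bog(q)-\omega_\Bog(p)=\omega_\Bog(k)$, cannot hold. The key structural input is the triangle inequality $|p-q|\le k$ for $q=|\p-\k|$. The argument then compares a slope that must be about $\sqrt\nu$ with a slope that is forced to be small near a stationary point.

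\textbf{Lower bound on the required slope.} By Lemma \ref{lemma: regularity of the Bogoliubov lemma} and the expansion $\omega_\Bog(k)=\sqrt\nu k+O(k^3/\sqrt\nu)$, I will choose $\delta_0>0$ such that
\[
\omega_\Bog(k)\ge \tfrac12\sqrt\nu\, k \qquad\text{for } k\in[0,\delta_0].
\]
Now suppose $\omega_\Bog(p)-\omega_\Bog(q)\le-\omega_\Bog(k)$ for some $k\in\,]0,\delta]$ (the case $k=0$ is trivial). Then $q\neq p$, and the mean value theorem gives a point $\xi$ between $p$ and $q$ with
\[
\tfrac12\sqrt\nu\,k\le \omega_\Bog(q)-\omega_\Bog(p)=\omega_\Bog'(\xi)(q-p)\le |\omega_\Bog'(\xi)|\,k .
\]
Hence $|\omega_\Bog'(\xi)|\ge \tfrac12\sqrt\nu$, at a point $\xi$ within distance $k\le\delta$ of both $p$ and $q$.

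\textbf{Upper bound near a small-derivative point.} By assumption \eqref{eq: regularity everywhere}, $\hat v$ is $C^1$. Together with the formula for $\frac{\d\omega_\Bog}{\d k}$ given before the changes of variables, and the regularity near $0$ from Lemma \ref{lemma: regularity of the Bogoliubov lemma}, this makes $\omega_\Bog'$ continuous on $[0,\infty[$. It is therefore uniformly continuous on the compact set $P_1:=P+[-1,1]$ intersected with $[0,\infty[$. I will pick $\epsilon=\tfrac18$ and then $\delta\le\min(\delta_0,1)$ such that $|\omega_\Bog'(a)-\omega_\Bog'(b)|<\tfrac18\sqrt\nu$ whenever $a,b\in P_1$ and $|a-b|\le\delta$. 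If $|\omega_\Bog'(p)|<\epsilon\sqrt\nu$ or $|\omega_\Bog'(q)|<\epsilon\sqrt\nu$, then, since $|\xi-p|,|\xi-q|\le\delta$, we get $|\omega_\Bog'(\xi)|<\tfrac14\sqrt\nu$. This contradicts the lower bound from the previous step, so the strict inequality $\omega_\Bog(p)-\omega_\Bog(q)>-\omega_\Bog(k)$ follows.

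\textbf{Main obstacle.} The delicate point is uniformity. The constants must not depend on $p\in P$, which is why compactness and uniform continuity are used. If uniformity in $\nu\in\,]0,V]$ is also intended, I expect to need joint continuity of $\omega_\Bog'$ in $(p,\nu)$ on $P_1\times[\nu_0,V]$, and to track how the bound $\omega_\Bog(k)/k\ge\tfrac12\sqrt\nu$ depends on $\nu$. I would first settle the fixed-$\nu$ case and then check that both choices of $\delta$ can be made locally uniform in $\nu$ by the same compactness argument.
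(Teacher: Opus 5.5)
Your proposal is correct and follows essentially the paper's route: you bound $\omega_\Bog(q)-\omega_\Bog(p)$ through the derivative on the segment between $p$ and $q$ (of length $\le k$), make that derivative small by uniform continuity of $\omega_\Bog'$ on a compact set, and compare with a linear lower bound $\omega_\Bog(k)\ge c\sqrt\nu\,k$. The paper misstates that bound as an upper bound, which you correct, and your enlarged set $P_1$ handles the $q$-case more cleanly than the paper's appeal to symmetry. The one slip is that $k=0$ is not ``trivial'': there $q=p$ and both sides vanish, so the strict inequality fails, and the lemma (in your version and in the paper's) should be read for $k\in\,]0,\delta]$.
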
 \begin{proof} 
Fix a compact set $P\subset [0,+\infty[$.
Take $P_{\epsilon'}\subset P$ to be the set of all the points in $P$ such that \begin{equation}
	\left|\frac{\d\omega_\Bog(p)}{\d p}\right|\leq\epsilon'\sqrt{\nu}.
\end{equation}Since $\omega_\Bog(p)$ is $C^1$  by \eqref{eq: regularity everywhere}, and $P_{\epsilon'}$ is compact, and $|p-q|\leq k$ by the triangular inequality, we can take $\delta$ sufficiently small and $\epsilon>\epsilon'$ arbitrarily close to $\epsilon'$ such that  $$\left|\frac{\omega_\Bog}{\d p}(q + v(p-q))\right|<\epsilon\sqrt{\nu}$$  for any $k\in I_\delta$, $p\in P_{\epsilon'}$ and $v\in [0,1]$. Then, for $p\in P_{\epsilon'}$ we can expand the difference between the dispersion relations as \begin{align}
	\omega_\Bog(p)-\omega_\Bog(q) = (p-q)\int^1_0 \d v \frac{\d\omega_\Bog}{\d p}(q+ v(p-q)).
\end{align}
Since $p\in P_\epsilon$ we can estimate \begin{equation}
\left|\omega_\Bog(p)-\omega_\Bog(q)\right| \leq k\epsilon\sqrt{\nu}.
\end{equation} By choosing $\delta$ sufficiently small we have for all $k\in I_\delta$ that $\omega_\Bog(k) \leq C \sqrt{\nu}k$, for some positive constant $C>0$. Thus, by eventually choosing  smaller $\epsilon',\epsilon$, we get that $|\omega_\Bog(p)-\omega_\Bog(q)|< \omega_\Bog(k)$. By symmetry, we can find suitable constants $\epsilon''$ and $\delta''$ such that for $k\in I_{\delta''}$ and  one has that \begin{equation}
\left|\frac{\d \omega_\Bog(q)}{\d q}\right|<\epsilon'' \Rightarrow \left|\omega_\Bog(p)-\omega_\Bog(q)\right|< \omega_\Bog(k).
\end{equation} Chosing the smallest between $\delta$, $\delta'$ and $\epsilon,$ $\epsilon''$ we conclude.
\end{proof}

With the help of Lemma \ref{lemma: integration domain Landau} and Lemma \ref{lemma: rotons and maxons} we can characterize the integration domain of \eqref{gamma_L} \begin{prop}\label{prop: invertibility cor}
Fix a $V \in ] 0,+\infty[$ and  and suppose the potential satisfies \eqref{eq: continuity of Fourier}--\eqref{eq: symmetry requirement}, \eqref{eq: rot invariace}--\eqref{eq: convexity for v} and \eqref{eq: regularity everywhere}--\eqref{eq: no plateau} for any $\nu\in]0, V]$. Then, there exists a $\delta>0$ such that for all $k\in I_\delta:= [0,\delta]$, the Landau damping integral can be written as 
		\begin{align}\label{eq: t integral}
	\frac{\hat{v}(0)}{32\pi k \omega_\Bog(k)}\frac{1}{\beta\nu}
\sum_{n= 0}^N	\int_{J_n} \d t G\left(\frac{\nu t}{\beta\nu}, \omega_\Bog(k)\right)f_n\left(\frac{\nu t}{\beta\nu}\right)f_n\left(\frac{\nu t+\nu \beta\omega_\Bog(k)}{\beta\nu} \right)\notag \\ \times \frac{\left(\e^{\beta\omega_\Bog(\k)} - 1\right)\e^{ t}}{(\e^{ t}-1)(\e^{t +\beta\omega_\Bog(k)}-1)},
	\end{align} for some measurable sets $(J_n)_{n\geq 0}^N$ and some fixed $N\in\mathbb{N}$ indpendent from $k$ and $\beta$.   In each of these sets, the functions $f_n\left(\frac{\nu t}{\beta\nu}\right)$ and $f_n\left(\frac{\nu t+\nu \beta\omega_\Bog(k)}{\beta\nu} \right)$  are uniformily bounded  for all $k\in I_\delta$. Moreover, we have that \begin{equation}  [0, \beta\nu b_0]\subset J_0 \subset [0,+\infty],\end{equation} for some positive constant $b_0>0$ and \begin{equation}
	[\beta\nu b_n, \beta\nu c_n]\subset J_n, 
	\end{equation}for strictly positive $b_n,\;c_n>0$.
\end{prop}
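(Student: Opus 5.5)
The plan is to start from \eqref{gamma_L} in spherical coordinates with the $\hat z$-axis along $\k$, do the azimuthal integration, and pass to $p=|\p|$, $q=|\p-\k|$ with measure $\frac{pq}{k}\d p\,\d q$ on the triangle $|p-q|\le k\le p+q$. By Assumption \eqref{eq: no plateau}, $\omega_\Bog'$ vanishes at finitely many points and is positive at infinity, so $[0,+\infty[$ splits into finitely many intervals $\Pi_0=[0,a_1],\Pi_1,\dots,\Pi_N$ on which $p\mapsto\omega_\Bog(p)$ is strictly monotone. Here $\Pi_0$ is chosen inside the neighbourhood of Lemma \ref{lemma: regularity of the Bogoliubov lemma}, and $\Pi_N$ is unbounded. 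The number $N$ depends only on the potential and $\nu$, not on $k$ or $\beta$. On each piece $\Pi_n\times\Pi_m$ I substitute $u=\omega_\Bog(p)$, $w=\omega_\Bog(q)$ via \eqref{eq: change in u e w}, with the local inverse giving a Jacobian factor $f_n$. Because $|p-q|\le k$ and $k\le\delta$ is small, only diagonal or adjacent pieces contribute; I will relabel these by a single index $n$.

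The next step is to remove the points where $\omega_\Bog'$ is small, since that is where the $f_n$ blow up. For this I apply Lemma \ref{lemma: rotons and maxons} with $P$ a compact set containing all stationary points. It gives $\epsilon,\delta$ such that for $k\le\delta$ the delta function $\delta(\omega_\Bog(q)-\omega_\Bog(p)-\omega_\Bog(k))$ has no support where $|\omega_\Bog'|<\epsilon\sqrt\nu$. On the remaining set $f_n=(2\omega_\Bog\,\omega_\Bog'/(2p))^{-1}$-type expressions are bounded uniformly in $k$. For large $p$ I use the $\liminf$ in \eqref{eq: no plateau} to bound $\omega_\Bog'$ from below, which again bounds $f_N$. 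Then I switch to $x=u+w$, $y=u-w$ (Jacobian $\frac{x^2-y^2}{8}$). The delta function becomes $\delta(y+\omega_\Bog(k))$, and integrating it out gives $u=w-\omega_\Bog(k)$. Finally I set $t=\beta u$, i.e. $t=\beta\nu\frac{x-\omega_\Bog(k)}{2\nu}$. The thermal factor of \eqref{gamma_L}, written with $\rho$'s, reduces to $\frac{(\e^{\beta\omega_\Bog(k)}-1)\e^t}{(\e^t-1)(\e^{t+\beta\omega_\Bog(k)}-1)}$ times $\rho(k)$-normalisation. Collecting the powers of $u$, $w$ from the Jacobian into $G$ through $F$ of Prop.~\ref{prop: estimates for the remainder}, suitably extended to each branch, gives the prefactor $\frac{\hat v(0)}{32\pi k\omega_\Bog(k)}\frac1{\beta\nu}$. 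The sets $J_n$ are then the $t$-images of the admissible $u$-ranges.

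It remains to locate the sets $J_n$. For $J_0$, Lemma \ref{lemma: integration domain Landau} shows that every $p\in[0,K]$ admits a direction with $y=-\omega_\Bog(k)$. Hence the admissible $u$-range contains $[0,\omega_\Bog(K)]$, which gives $[0,\beta\nu b_0]\subset J_0$ with $b_0=\omega_\Bog(K)/\nu$. For $n\ge1$, each monotone interval $\Pi_n$ with the small-derivative neighbourhoods removed still contains a nondegenerate interval of $p$ with $\omega_\Bog'\ge\epsilon\sqrt\nu$. For such $p$, choosing $\p$ antiparallel to $\k$ and using the mean value theorem as in Lemma \ref{lemma: integration domain Landau}, I get $\omega_\Bog(p)-\omega_\Bog(|\p-\k|)\le-\omega_\Bog(k)$, while the parallel direction gives the opposite inequality. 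By continuity the delta function is hit, so $[\beta\nu b_n,\beta\nu c_n]\subset J_n$ with $b_n,c_n$ the corresponding energy bounds divided by $\nu$.

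I expect the main obstacle to be uniformity in $k$. I need $\delta$ small enough that the pieces do not interact badly, that $p$ and $q=|\p-\k|$ lie in the same or adjacent monotonicity intervals, and that the bounds on $f_n$ together with the constants $b_n,c_n$ do not depend on $k\in I_\delta$. I would handle this by choosing $\epsilon$ from Lemma \ref{lemma: rotons and maxons} first, then shrinking $\delta$ so that $k\le\delta$ is less than half the minimal length of the trimmed intervals.
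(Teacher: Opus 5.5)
Up to the location of the sets $J_n$ you follow the paper's route: restriction to the support of the delta, finitely many invertibility branches, bounds on $f_n$ from Lemma \ref{lemma: rotons and maxons} on a compact set and from \eqref{eq: no plateau} at infinity, then $x,y$ and $t=\beta u$. Fixing $\epsilon$ before $\delta$, so that $p$ and $q$ lie on the same branch, is a useful precision.

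The genuine gap is your last step for $n\ge1$. With $\p$ antiparallel to $\k$ you have
\begin{equation*}
\omega_\Bog(p)-\omega_\Bog(p+k)=-k\int_0^1\omega_\Bog'(p+sk)\,\d s .
\end{equation*}
This is $\le-\omega_\Bog(k)$ only if the average slope on $[p,p+k]$ is at least $\omega_\Bog(k)/k\to\sqrt\nu$, i.e. essentially the sound velocity. A bound $\omega_\Bog'\ge\epsilon\sqrt\nu$ with small $\epsilon$ is far from enough. Lemma \ref{lemma: integration domain Landau} gets this inequality only from strict convexity on $[0,K]$, which forces the average slope over $[p,p+k]$ above the average over $[0,k]$.

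Conversely, the mean-value argument of Lemma \ref{lemma: rotons and maxons} shows the opposite. If $|\omega_\Bog'|<\omega_\Bog(k)/k$ on $[p-k,p+k]$, then $|\omega_\Bog(p)-\omega_\Bog(q)|<\omega_\Bog(k)$ for every $q\in[|p-k|,p+k]$, so the delta is not hit at all. Hence a monotone branch with $\sup|\omega_\Bog'|<\sqrt\nu$ has $J_n=\emptyset$ for small $k$. An example is a gently decreasing maxon--roton branch, which the assumptions do not exclude. No nondegenerate interval $[\beta\nu b_n,\beta\nu c_n]$ can then lie in $J_n$. So the inclusion you set out to prove can fail, not merely your proof of it.

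The paper's proof derives the reverse inclusion, $J_n\subset[\beta\nu b_n,+\infty[$ for $n\ge1$, from $\inf_{p>K}\omega_\Bog(p)>0$. The proof of Theorem \ref{thm: Landau damping} uses this form when it treats all $t\ge\beta\nu$ as exponentially small. In your setup this is one line. On $J_n$, $n\ge1$, you have $t/(\beta\nu)=\omega_\Bog(p)/\nu$ with $p\ge a_1$, so you can take $b_n=\inf_{p\ge a_1}\omega_\Bog(p)/\nu>0$. Replace your last step by this.

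Two smaller points:
\begin{itemize}
\item At large $p$, \eqref{eq: no plateau} does not bound $\omega_\Bog'$ itself from below. It bounds only $\frac{\omega_\Bog(p)\,\omega_\Bog'(p)}{\nu p}=\big(\nu f(\omega_\Bog(p))\big)^{-1}$, and that is exactly what bounds $f_N$.
\item On the support, $w=u+\omega_\Bog(k)$ makes the numerator of the thermal factor equal to $\e^{\beta u}(\e^{\beta\omega_\Bog(k)}-1)^2$. So with $\theta:=\beta\omega_\Bog(k)$ the factor is exactly $\frac{(\e^{\theta}-1)\e^t}{(\e^t-1)(\e^{t+\theta}-1)}$, with no extra $\rho(k)$ normalisation.
\end{itemize}
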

\begin{proof} 
	We denote  $p=|\p|,\; q = |\p-\k|,\; k = |\k|$. We can always restrict the integration domain of the Landau damping integral as
		 \begin{align}
		\pi\int_{J}& \frac{\d\p}{(2\pi)^3} j(\p-\k;\k,\p)^2
		\delta(\omega_\Bog(\k-\p)-\omega_\Bog(\p)-\omega_\Bog(\k))\\
		&\times\frac{( \e^{\frac{\beta}{2}(\omega_\Bog(\k))+\omega_\Bog(\k-\p))} - \e^{\frac{\beta}{2}\omega_\Bog(\p) })^2 }{(\e^{\beta\omega_\Bog(\k)}-1)(\e^{\beta\omega_\Bog(\p)}-1)(\e^{\beta\omega_\Bog(\k-\p)}-1)},
	\end{align} where \begin{equation}
	J = \supp\delta(\omega_\Bog(\p)-\omega_\Bog(\p-\k)-\omega_\Bog(\k)).
	\end{equation} In the following we analyze the properties of $J$.  On account of Lemma \ref{lemma: regularity of the Bogoliubov lemma} and Lemma \ref{lemma: integration domain Landau}, $J$ satisfies \begin{equation}\label{eq: subinterval}
	\{\p\in\mathbb{R}^3 | p = |\p| \in [0, K] \} \subset J
	\end{equation} with $K$ some strictly positive constants such that $p\to \omega_\Bog(p)$ is invertible for $p\in [0, K]$. 
	We know by Assumption \eqref{eq: no plateau} that we can find a compact set $P\subset [0,+\infty[ $, such that \begin{equation}\label{eq: good containment at infinity}
		\inf_{\p\in\mathbb{R}^3 | p \in P^c}\min_{p,q}\left\{\frac{\omega_\Bog(p)}{p}\frac{\d\omega_\Bog(p)}{\d p}, \frac{\omega_\Bog(q)}{q}\frac{\d\omega_\Bog(q)}{\d p}\right\} > \epsilon',
	\end{equation} where $\epsilon'$ is some strictly positive constant. Thus, we see that for $p \in J\cap P^c$, the dispersion relation $\omega_\Bog$ is invertible and the functions \begin{equation}
	f(u) = \frac{p(u)}{u}\frac{\d p (u)}{\d u},\qquad 	f(w) = \frac{p(w)}{w}\frac{\d p (w)}{\d w},
	\end{equation} for $u = \omega_\Bog(p),\;w=\omega_\Bog(q) $, are bounded, uniformily for  all $k\in I_\delta$.
	
	Fix an $\epsilon>0$ and denote by $P_\epsilon\subset P$ the set of points such that \begin{equation}
			\min_{p,q}\left|\frac{\d\omega_\Bog(\cdot)}{\d p}\right| < \epsilon\sqrt{\nu},
	\end{equation}for all $p\in P$. Then,
	by Lemma \ref{lemma: rotons and maxons}, we can choose $\epsilon>0$ sufficiently small so that \begin{equation}\label{eq: good containment}
	\min_{p,q}\left|\frac{\d\omega_\Bog(\cdot)}{\d p}\right| < \epsilon\sqrt{\nu} \Rightarrow  \omega_\Bog(p)-\omega_\Bog(q) > -\omega_\Bog(k),
	\end{equation}
%
	 Eqs. \eqref{eq: good containment at infinity} , \eqref{eq: good containment} and Assumption \eqref{eq: no plateau} tell us that we can alwasy write $J$ as $J = \cup_{n\geq 0}^NJ_n$, in such a way that $p\to\omega_\Bog(p)$ and $q\to \omega_\Bog(q)$ can be inverted inside every $J_n$. Thus, for each $n$, we can define $f_n(u)$ and $f_n(w)$ where \begin{equation}
	 	f_n(u) = \frac{p(u)}{u}\frac{\d p(u)}{\d u},\qquad \text{for all }u,w \text{ s.t. }\p(u,w)\in J_n
	 \end{equation}
	Moreover, since $\frac{p}{\omega_\Bog(p)} \leq \frac{C}{\sqrt{\nu}}$ for some constant $C>0$,  we see that \begin{equation}\label{eq: precise estimate for f}
 f_n(u) = \frac{\d p(u)^2}{\d u^2} = \frac{p(u)}{u}\frac{\d p (u)}{\d u} \leq  \max\left\{\frac{C}{\nu}\frac{1}{\epsilon}, \frac{1}{\epsilon'}\right\}
	\end{equation}
for all $u=\omega_\Bog(p),\omega_\Bog(q)$, $\p\in J_n$, and $n\geq 0$.  

With a slight abuse of notation we will continue to denote  by $J_n$ the integration domains after having changed variables. 
		To evaluate the Dirac's delta it is conventient to employ the coordinates $x$ and $y$ as in \eqref{eq: change in x and y 1}, \eqref{eq: change in x and y 2}. These transform the integral in the region $J_n$ into \begin{align}
	\frac{\hat{v}(0)}{64\pi\nu k \omega_\Bog(k)}\int_{J_n} \d x \d y\; G\left(\frac{x+y}{2},\omega_\Bog(k)\right)f_n\left(\frac{x+y}{2}\right)f_n\left(\frac{x-y}{2}\right) \notag \\ \times \frac{\left(\e^{\beta\frac{2\omega_\Bog(k)+ x-y}{4}}- \e^{\beta\frac{x+y}{4}}\right)^2}{(\e^{\beta\omega_\Bog(k)-1})(\e^{\beta \frac{x+y}{2}}-1)(\e^{\beta \frac{x-y}{2}}-1)}\delta(y+\omega_\Bog(k)),
	\end{align}  
	
	 At this point, we can perform the $y$ integration and further rescale and shift $x$ as 
	\begin{equation}
	t:= \beta\nu \frac{x-\omega_\Bog(k)}{2\nu}.
	\end{equation} The resulting integral is given by Eq. \eqref{eq: t integral}.
	
	By Eq. \eqref{eq: subinterval} and Lemma \eqref{lemma: integration domain Landau} we can always choose $J_0$ to satisfy $[0,\beta\nu b_0]\subset J_0 \subset  [0, +\infty [$, where \begin{align}
		&b_0 = \frac{\omega_\Bog(K)-\omega_\Bog(k)}{2\nu}.
	\end{align} Since $\omega_\Bog(k) = O(k\sqrt{\nu})$ as $k\to 0$, we can always choose a  $\delta>0$ sufficiently small such that $b_0>0$. Moreover, by assumption \eqref{eq: positivity of Fourier}, we always have $\inf_{p >K} \omega_\Bog(p)>0$ and thus, for any $n>1$ , we will  have\begin{equation}
	[\beta\nu b_n, \beta\nu c_n]\subset J_n,
	\end{equation} for some strictly positive constants $b_n,c_n>0$.
\end{proof}

Now, we can finaly state and prove the theorem concerning  Landau damping rate estimate:

\begin{thm}\label{thm: Landau damping}
	Fix a $V \in ] 0,+\infty[$ and  and suppose the potential satisfies \eqref{eq: continuity of Fourier}--\eqref{eq: no plateau} for any $\nu\in]0, V]$. Then, in the limit of small temperature $\frac{1}{\beta\nu} \to 0$ and small momenta $\frac{|\k|}{\sqrt{\nu}}\to 0$: 
	\begin{enumerate}
		\item [(a)] the Landau damping rate is equal to \begin{align}\label{eq: full expression for Landau}
			\gamma_\LL(\k;\beta,\nu) =& \frac{9\hat{v}(\0)\nu^{3/2}}{64\pi}\frac{1}{(\beta\nu)^5}\Big{[} \mathcal{G}_4(\beta\omega_\Bog(\k))\notag \\ +&2\beta\sqrt{\nu}|\k|\mathcal{G}_3(\beta\omega_\Bog(\k))\notag \\ +& (\beta\sqrt{\nu}|\k|)^2 \mathcal{G}_2(\beta\omega_\Bog(\k)) \Big{]}\left(1+O\left(\frac{1}{(\beta\nu)^2}+ \frac{|\k|^2}{\nu}\right)\right) \\ &\text{as } \frac{|\k|}{\sqrt{\nu}},\;\frac{1}{\beta\nu}\to 0
		\end{align} where \begin{align}
			\mathcal{G}_k(\theta)&= \int^{+\infty}_0 \d t \frac{\sinh(\frac{\theta}{2})t^k}{\sinh(\frac{t+\theta}{2})\sinh(\frac{t}{2})}
		\end{align} is a continuous function of $\theta\in [0,+\infty)$ satisfying \begin{align}
			\mathcal{G}_{k}(\theta) &= 2\Gamma(k+1)\zeta(k) \theta+ O(\theta^2)\quad \text{as } \theta\to 0,\\
			\mathcal{G}_{k}(\theta) &\label{eq: asymptotic for Landau special II}= 2\Gamma(k+1)\zeta(k+1) + O(\e^{-\theta})\quad \text{as } \theta\to +\infty.
		\end{align}
		\item [(b)] Additionally, in the limit $\beta\sqrt{\nu}|\k|\to 0$, the Landau damping rate can be extimated as \begin{align}\label{eq: very high temperatures}
			\gamma_\LL(\k;\beta,\nu) =  \frac{3\pi^3\hat{v}(\0)\nu^{3/2}}{40}\frac{|\k|}{\sqrt{\nu}}\frac{1}{(\beta\nu)^4}	\left(1 +  O\left( \beta\sqrt{\nu}|\k|+  \frac{1}{(\beta\nu)^2}\right)\right)\notag \\  \text{as }\frac{|\k|}{\sqrt{\nu}},\;\frac{1}{\beta\nu},\;\beta\sqrt{\nu}|\k|\to 0 
		\end{align}
		\item[(c)] In the opposite limit, $\frac{1}{\beta\sqrt{\nu}|\k|}\to 0$, the Landau damping rate can be extimated as 
		\begin{align}\label{eq: basse temperature}
			\gamma_\LL(\k;\beta,\nu) = \frac{9\zeta(3)\hat{v}(\0)\nu^{3/2}}{16\pi}\frac{1}{(\beta\nu)^3}\frac{|\k|^2}{\nu}\left(1 + O\left(\frac{1}{\beta\sqrt{\nu}|\k|} +\frac{|\k|^2}{\nu}\right)\right) \notag \\ \text{as }\frac{|\k|}{\sqrt{\nu}},\;\frac{1}{\beta\nu},\;\frac{1}{\beta\sqrt{\nu}|\k|}\to 0
		\end{align}
	\end{enumerate}
\end{thm}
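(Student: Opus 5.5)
The plan is to start from the representation \eqref{eq: t integral} of Prop.~\ref{prop: invertibility cor} and to treat the pieces $J_0$ and $J_n$, $n\ge1$, separately.

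\textbf{Pieces $J_n$, $n\ge 1$.} On these pieces the energies satisfy $u\ge\omega_\Bog(K)$ uniformly. The thermal factor $\e^{t}/((\e^t-1)(\e^{t+\theta}-1))$, with $\theta:=\beta\omega_\Bog(\k)$, is then $O(\e^{-c\beta\nu})$. Since $G$ is polynomially bounded and the $f_n$ are uniformly bounded, these contributions are exponentially small. They are therefore absorbed in the $O((\beta\nu)^{-2})$ error.

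\textbf{Piece $J_0$.} Here $u=\nu t/(\beta\nu)$ and $w=u+\omega_\Bog(k)$ are both small, because $1/(\beta\nu)$ and $|\k|/\sqrt\nu$ are small. I would use the Landau analogue of Prop.~\ref{prop: estimates for the remainder}. Namely, $G(u,\omega)=F(u+\omega;u,\omega)^2$ with the roles permuted, i.e.\ $j(\p-\k;\k,\p)$ with $\omega_\Bog(\p-\k)=u+\omega_\Bog(k)$. By the same expansion this equals
\[
\frac{9}{\nu}u^2\omega^2(u+\omega)^2\bigl(1+O((u^2+\omega^2)/\nu^2)\bigr),
\]
with the remainder controlled by Lemmas~\ref{lemma: lagrange} and \ref{lemma: bootstrap}. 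We also have $f_0(u)=\nu^{-1}(1+O(u^2/\nu^2))$. Substituting and writing $u=t/\beta$, the main term becomes a multiple of
\[
\frac{\omega^2}{k\,\omega}\,\frac{1}{\beta^5}\int_0^\infty \d t\,\frac{t^2(t+\theta)^2\,\sinh(\theta/2)}{\sinh(t/2)\sinh((t+\theta)/2)} .
\]
Here I use $\e^t(\e^\theta-1)/((\e^t-1)(\e^{t+\theta}-1)) = \sinh(\theta/2)/(2\sinh(t/2)\sinh((t+\theta)/2))$. Expanding $(t+\theta)^2=t^2+2\theta t+\theta^2$ produces $\mathcal G_4+2\theta\mathcal G_3+\theta^2\mathcal G_2$. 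Finally I replace $\theta$ by $\beta\sqrt\nu|\k|$ in the coefficients, using $\omega_\Bog(k)=\sqrt\nu k(1+O(k^2/\nu))$. The prefactor $\omega^2/(k\omega)\sim\sqrt\nu$ then gives the $\nu^{3/2}/(\beta\nu)^5$ scaling; the constants should be checked against $9/64\pi$.

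\textbf{Error terms.} The relative remainders are of order $(u^2+\omega^2)/\nu^2=(t/\beta\nu+\omega/\nu)^2$. They are controlled by Lemma~\ref{lemma: applicationf for the theorem} with $x=\beta\nu$ and $\delta=\omega/\nu$. That lemma bounds each remainder integral by the same weighted integral with $(t/x+\delta)^n$. Expanding this bound gives the relative error $O((\beta\nu)^{-2}+k^2/\nu)$. The cross terms $\delta\cdot t/x$ are dominated by the sum of squares. Extending the $t$-integration from $J_0$ to $[0,\infty)$ costs only exponentially small terms, because $J_0\supset[0,\beta\nu b_0]$. I expect this uniform remainder control, especially the interplay of the two small parameters when $\theta$ is large, to be the main obstacle.

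\textbf{Asymptotics of $\mathcal G_k$.}
\begin{itemize}
\item[] As $\theta\to\infty$: $\sinh(\theta/2)/\sinh((t+\theta)/2)\to \e^{-t/2}$ with exponential error, so $\mathcal G_k\to\int t^k\e^{-t/2}/\sinh(t/2)=2\int t^k/(\e^t-1)=2\Gamma(k+1)\zeta(k+1)$.
\item[] As $\theta\to0$: $\mathcal G_k\approx(\theta/2)\int t^k/\sinh^2(t/2)=\theta\cdot 2\Gamma(k+1)\zeta(k)$, using $\int t^k/\sinh^2(t/2)=4\int t^k\e^t/(\e^t-1)^2=4k!\zeta(k)$. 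The $O(\theta^2)$ error follows by dominated differentiation.
\end{itemize}

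\textbf{Parts (b) and (c).} For (b) the leading term is $\mathcal G_4\approx 48\zeta(4)\theta$ with $\zeta(4)=\pi^4/90$. The other terms are $O(\theta^2)$, which gives $9\cdot48\pi^4/(90\cdot64\pi)=3\pi^3/40$. For (c) the dominant term is $\theta^2\mathcal G_2\approx 4\zeta(3)\theta^2$, giving $9\zeta(3)/(16\pi)$ with relative error $O(1/\theta)$ from the $\mathcal G_3$ and $\mathcal G_4$ terms.
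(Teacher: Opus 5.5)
Your route is the paper's: the representation of Prop.~\ref{prop: invertibility cor}, the expansion $G(u,\omega)=\frac9\nu u^2\omega^2(u+\omega)^2+\nu u^2\omega^2 S$ from Prop.~\ref{prop: estimates for the remainder}, Lemma~\ref{lemma: applicationf for the theorem} for the remainder, and the reduction to $\mathcal G_4+2\theta\mathcal G_3+\theta^2\mathcal G_2$. Your constants in (a)--(c) and your limits of $\mathcal G_k$ are correct. The paper obtains those limits from the closed form $\mathcal G_k(\theta)=2\Gamma(k+1)\big(\zeta(k+1)-\operatorname{Li}_{k+1}(\e^{-\theta})\big)$. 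That formula also gives the uniform comparability $\mathcal G_{k+2}\le C\,\mathcal G_k$, which you use implicitly when the remainder bound ``expands'' to a relative error $O((\beta\nu)^{-2}+|\k|^2/\nu)$.

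\textbf{The gap is in the far region.}

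\emph{What you argue.} For $J_n$, $n\ge1$, you use only that $G$ is polynomially bounded and that the thermal weight is $O(\e^{-c\beta\nu})$. You then conclude that these pieces are absorbed in the $O((\beta\nu)^{-2})$ error.

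\emph{Why this fails.} That error is relative, and the main term is small. It is of order $\nu^{3/2}(\beta\nu)^{-5}\theta(1+\theta)$, so it vanishes linearly in $|\k|$. Meanwhile the prefactor $1/(k\,\omega_\Bog(k))$ in \eqref{eq: t integral} grows like $k^{-2}$. Your bound is therefore of size $k^{-2}\e^{-c\beta\nu}$, up to powers of $\beta\nu$. In the joint limit this is not small compared with the main term; take for instance $|\k|/\sqrt\nu=\e^{-(\beta\nu)^2}$.

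\emph{Same problem on $J_0$.} The part of $J_0$ with $t\gtrsim\beta\nu$ has the same issue. There $u$ is not small, so the local expansion of $G$ is unavailable. To apply Lemma~\ref{lemma: applicationf for the theorem} you need a remainder that is polynomially bounded globally, which you have not established.

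\emph{What is missing.} This is exactly what the paper supplies in its estimate for $t\ge\beta\nu$:
\begin{itemize}
\item[(i)] By Lemma~\ref{lemma: lagrange} together with $F(\omega;\omega,0)=F(\omega;0,\omega)=0$, write $G(u,\omega)=u^2\omega^2K(u,\omega)$ with $K$ polynomially bounded on all pieces. The factor $\omega^2$ then cancels the $1/(k\,\omega_\Bog(k))$ prefactor, just as in the main term.
\item[(ii)] Keep the factor $\e^\theta-1$. For $t\ge1$ the weight $\frac{\sinh(\theta/2)}{2\sinh(t/2)\sinh((t+\theta)/2)}$ is bounded by $C(1-\e^{-\theta})\e^{-t}$. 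Compare this with $\mathcal G_4(\theta)\ge 48(1-\e^{-\theta})$.
\end{itemize}
With both ingredients, everything beyond $t\sim\beta\nu$ (all $J_n$ with $n\ge1$, and the tail of $J_0$) is $O\big((\beta\nu)^N\e^{-c\beta\nu}\big)$ relative to the main term. It can then legitimately be absorbed.

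\emph{A smaller correction.} On $J_n$ the right lower bound is $u\ge\inf_{p\ge K}\omega_\Bog(p)>0$, not $u\ge\omega_\Bog(K)$, because a roton minimum may lie below $\omega_\Bog(K)$.

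\emph{A minor point shared with the paper.} Your ``dominated differentiation'' argument for the $O(\theta^2)$ error fails for $k=2$. In fact $\mathcal G_2(\theta)-4\zeta(2)\theta=-2\theta^2\log(1/\theta)+O(\theta^2)$. This does not affect (b), since $\mathcal G_2$ enters only with the factor $\theta^2$.
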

\begin{proof}
	
	\textit{(a)} We define \begin{equation}
	\delta:= \frac{\omega_\Bog(k)}{\nu},\qquad \theta = \beta\omega_\Bog(k).
	\end{equation}
	
	Then, using Prop. \ref{prop: invertibility cor} we can write the Landau damping integral as
	\begin{align}
		\sum_{n=0}^N\frac{\hat{v}(0)}{32\pi k \omega_\Bog(k)}\frac{1}{\beta\nu}\int_0^{+\infty} \d t  \chi_{J_n}(t) G\left(\frac{\nu t}{\beta\nu}, \nu\delta\right)f_n\left(\frac{\nu t}{\beta\nu}\right)f_n\left(\frac{\nu t+\nu \theta}{\beta\nu} \right)\notag \\ \times \frac{\left(\e^{\theta} - 1\right)\e^{ t}}{(\e^{ t}-1)(\e^{t +\theta}-1)},
	\end{align} where $\chi_{J_n}$ is the characteristic function of the set $J_n$ introduced in the same Prop. \eqref{prop: invertibility cor}. Without loss of generality, we can suppose that $ [0,\beta\nu]\subset J_0$. Now, we separate the integral into two regions of integration \begin{equation}
		\int^{+\infty}_0 = \int^{\beta\nu}_{0} +  \int^{+\infty}_{\beta\nu},
	\end{equation} \textbf{Integral for $t\leq \beta\nu$.} Let us write $G$ as in \eqref{eq: first form} and expand $f$ around $0$, so to obtain  \begin{align}\label{eq: low momentum region}
		\frac{\hat{v}(0)}{32\pi k \omega_\Bog(k)\nu^2}\frac{1}{\beta\nu}\int_0^{\beta\nu} \d t \; \left[ 9\nu^5\delta^2 \frac{t^2}{(\beta\nu)^2}(\delta+\frac{t}{\beta\nu})^2\notag  \right.\\ \left. + \frac{t^2}{(\beta\nu)^2}\delta^2S\left(\frac{\nu t}{\beta\nu},\nu\delta\right) \right] \frac{\left(\e^{\theta} - 1\right)\e^{ t}}{(\e^{ t}-1)(\e^{t +\theta}-1)}\left(1 + O\left(\frac{t^2}{(\beta\nu)^2}\right)\right).
	\end{align} Let us first focus on the leading term on in Eq. \eqref{eq: low momentum region}. This can be written as  \begin{align}
		&\int_0^{\beta\nu} \d t \; \delta^2 \frac{t^2}{(\beta\nu)^2}(\delta+\frac{t}{\beta\nu})^2\frac{\left(\e^{\theta} - 1\right)\e^{ t}}{(\e^{ t}-1)(\e^{t +\theta}-1)} \notag \\
		=&	\int_0^{+\infty} \d t \; \delta^2 \frac{t^2}{(\beta\nu)^2}(\delta+\frac{t}{\beta\nu})^2\frac{\left(\e^{\theta} - 1\right)\e^{ t}}{(\e^{ t}-1)(\e^{t +\theta}-1)}(1+ O(\e^{-\frac{\beta\nu}{2}})),
	\end{align}  
	where  we have estimated the remainder as in  Lemma \ref{lemma: applicationf for the theorem}.
	We know that  \begin{equation}
		S\left(\frac{\nu t}{\beta\nu},\nu\delta\right) = O\left(\left(\frac{t}{\beta\nu} + \delta\right)^4\right),\qquad \text{as } \delta,\frac{t}{\beta\nu} \to 0,
	\end{equation}thus the remainder term in \eqref{eq: low momentum region} can be estimated directly using Lemma \ref{lemma: applicationf for the theorem}. This gives a term of order \begin{align}\label{eq: integral at high energy}
		\frac{\hat{v}(0)}{ k \omega_\Bog(k)\nu^2}\frac{1}{\beta\nu}\int_0^{+\infty} \d t \;  \nu^5\delta^2 \frac{t^2}{(\beta\nu)^2}(\delta+\frac{t}{\beta\nu})^4\frac{\left(\e^{\theta} - 1\right)\e^{ t}}{(\e^{ t}-1)(\e^{t +\theta}-1)}.
	\end{align} 
	
	\textbf{Integral for} $t\geq \beta\nu$. This integration region produces a contribution which is exponentially damped as $\frac{1}{\beta\nu }\to 0$. Indeed, following Lemma \ref{lemma: lagrange} we can write \begin{equation}G\left(\frac{\nu t}{\beta\nu},\nu\delta\right) = \nu^2\delta^2 \frac{t^2}{(\beta\nu)^2}K\left(\frac{\nu t}{\beta\nu},\nu\delta\right)\end{equation}
	where $K$ is continuous and polynomially bounded in its arguments.  In addition, the function $f$ is continuous and bounded by Prop. \ref{prop: invertibility cor}, Eq. \eqref{eq: precise estimate for f}.  Thus, following again Lemma \ref{lemma: applicationf for the theorem}, we have the bound \begin{align}
	\left| \int_{\beta\nu}^{+\infty} \d t \chi_{J_n}(t) G\left(\frac{\nu t}{\beta\nu}, \nu\delta\right)f_n\left(\frac{\nu t}{\beta\nu}\right)f_n\left(\frac{\nu t+\nu \theta}{\beta\nu} \right)\frac{\left(\e^{\theta} - 1\right)\e^{ t}}{(\e^{ t}-1)(\e^{t +\theta}-1)}\right| \notag  \\
		 \leq  \e^{-\frac{\beta\nu}{2}}\delta^2\nu^2\int^{+\infty}_0 \frac{t^2}{(\beta\nu)^2}P\left(\frac{t}{\beta\nu}\right) \frac{\left(\e^{\theta} - 1\right)\e^{ t}}{(\e^{ t}-1)(\e^{t +\theta}-1)},
	\end{align} where $t\to P(t)$ is a suitable polynomial whose coefficients do not depend on $\beta\nu$ nor on $\delta$.  
	
	All the integrals in the preceding discussion can be re-arranged as \begin{equation}
		\mathcal{G}_k(\theta):=\int^{+\infty}_{0}\d t \frac{\sinh(\frac{\theta}{2})t^k}{\sinh(\frac{t}{2})\sinh(\frac{t+\theta}{2})},
	\end{equation} for some integer $k\geq 0$. The integrals defining the special functions $\mathcal{G}_k$ can be computed explicitely in terms of the polylogarithm functions of integer order $\operatorname{Li}_n $ via  an expansion of the hyperbolic sine \begin{align}\label{eq: hyper computation}
		\mathcal{G}_k(\theta)=&\int^{+\infty}_{0}\d t \frac{\sinh(\frac{\theta}{2})t^k}{\sinh(\frac{t}{2})\sinh(\frac{t+\theta}{2})}\notag \\ =& 4\sinh(\frac{\theta}{2})\e^{-\frac{\theta}{2}} \int^{+\infty}_{0} \d t\sum_{n=0}^{+\infty}\sum_{m=0}^{+\infty}\e^{-m\theta}\e^{-t(n+m+1)} t^{k} \notag\\ 
		=& 2\Gamma(k+1)\sum_{p=1}^{+\infty}\frac{1}{p^{k+1}}\left(1- \e^{-p\theta}\right)\notag \\
		 =& 2\Gamma(k+1)\left(\operatorname{Li}_{k+1}(1)- \operatorname{Li}_{k+1}(\e^{-\theta})\right).
	\end{align}  In particular, Eq. \eqref{eq: hyper computation} gives a  bounded function of $\theta$ for every $k\geq 2$ with asymptotic values \begin{align}\label{eq: asymptotic for Landau special I}
		\mathcal{G}_{k}(\theta) &= 2\Gamma(k+1)\zeta(k) \theta+ O(\theta^2)\quad \text{as } \theta\to 0,\\
		\mathcal{G}_{k}(\theta) &\label{eq: asymptotic for Landau special III}= 2\Gamma(k+1)\zeta(k+1) + O(\e^{-\theta})\quad \text{as } \theta\to +\infty.
	\end{align}
	
	Then, the Landau damping can be written for $\frac{|\k|}{\sqrt{\nu}},\frac{1}{\beta\nu}\to 0$ as 
	\begin{align}\label{eq: result for a}
		\gamma_\LL(\k;\beta,\nu) =& \frac{9\hat{v}(\0)\nu^{3/2}}{64\pi}\frac{1}{(\beta\nu)^5}\Big{[} \mathcal{G}_4(\beta\omega_\Bog(\k))\notag \\ +&2\beta\sqrt{\nu}|\k|\mathcal{G}_3(\beta\omega_\Bog(\k))\notag \\ +& (\beta\sqrt{\nu}|\k|)^2 \mathcal{G}_2(\beta\omega_\Bog(\k)) \Big{]}\left(1+O\left(\frac{1}{(\beta\nu)^2}+ \frac{|\k|^2}{\nu}\right)\right) 
	\end{align} 
	where we  have expanded $\delta = \frac{\omega_\Bog(\k)}{\nu} = \frac{|\k|}{\sqrt{\nu}} + O\left(\frac{|\k^3|}{\nu^{3/2}}\right)$.
	This concludes the proof of \textit{(a)}.
	
	Points \textit{(b)} and \textit{(c)} are  immediately verified by exploiting the asymptotic properties of $\mathcal{G}_k$ in Eqs. \eqref{eq: asymptotic for Landau special I},\eqref{eq: asymptotic for Landau special III}.
\end{proof}

\subsection{High temperature results}\label{sec: high temperature results}

The description of the high temperature limit for the computations of integrals \eqref{gamma_B} and \eqref{gamma_L} requires a careful physical analysis of the scales at play. For what concerns the Beliaev damping, we note that  the only dependence on $\beta$ in Eq. \eqref{eq: full expressio for Beliaev} is through the product $\beta\omega_\Bog(\k)$. This means that while the limit $\beta\omega_\Bog(\k) \rightarrow +\infty$ necessarily corresponds to a limit of small temperature, the opposite limit $\beta\omega_\Bog(\k )\rightarrow 0$ can be achieved  for $\frac{|\k|}{\sqrt{
		\nu}} \to 0$, compatibly with the temperature hypothesis $T< T_c$; where $T_c$ is the critical temperature relative to  Bose-Einstein phase transition.  Similarly, for the Landau damping, the limit $\beta\omega_\Bog(\k)\to  0$ could in principle be tested.

Now, we analyze the limit $\beta\nu \to  0$. In this regime, the parameters have to be chosen so that the assumption $T<T_c$ can still be verified. If we substitue for the density of particles $n_0(\beta)/L^3$ its value for a free Bose gas, we can estimate the critical temperature as $$ T_c \sim \left(\frac{n_0}{L^d}\right)^{2/3}.$$ Then, since $\nu= \hat{v}(\0)n_0/L^d$, the requirement for $\beta\nu$ to be small is equivalent to $$ \frac{\beta^{3/2}}{\beta_c^{3/2}}\frac{\hat{v}(\0)\nu^{1/2}}{(\beta\nu)^{1/2}} \ll 1.$$ This condition could be valid in the limit of a very \textit{dilute} potential.

However, numerical analysis, \cite[Fig. 1]{PS_97} and \cite[Fig. 1]{G_98}, show that the convergence towards this asymptotic regime is rather slow, requiring very large values of the temperature.   Moreover, for high temperatures the Landau damping rate depends on the high momenta region of $\hat{v}$, so that its computation becomes highly potential-dependent.

To compare our results with the pre-existing literature we compute this rate assuming the specific form for the potential\begin{equation}
	\hat{v}(\k) = \begin{dcases}\label{eq: high temperature potential}
		\hat{v}(\0)> 0 & |\k|<\Lambda\\
		0 & |\k|\geq 2\Lambda
	\end{dcases},
\end{equation} and $\hat{v}(\k)$ smoothly interpolating between $\hat{v}(\0)$ and $0$ for $\Lambda<|\k|<2\Lambda$. In the  computation of the rate we will remove  the cut-off by taking the limit $\Lambda\to +\infty$. The final expression for the Landau damping rate coincides with the one estimated in the literature  \cite{HG_98, PS_97,SK_74}.
\begin{prop}\label{prop: high temperature}
	Suppose the potential is of the form \eqref{eq: high temperature potential}. Then, after having performed the limit $\Lambda \to +\infty$, we find \begin{equation}\label{eq: high temp landau rate}
		\gamma_\LL(\k;\beta,\nu) = \frac{3\hat{v}(\0)\nu^{3/2}}{32}\frac{1}{\beta\nu}\frac{|\k|}{\sqrt{\nu}}\left(1+ O\left( \frac{|\k|^2}{\nu}+(\beta\nu)^2\right)\right) \quad \text{as }\frac{|\k|}{\sqrt{\nu}},\beta\nu \to 0.
	\end{equation} 
\end{prop}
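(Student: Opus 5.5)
The plan is to reduce \eqref{gamma_L}, for the cut-off potential \eqref{eq: high temperature potential}, to a one-dimensional integral over the modulus $p=|\p|$ of the incoming thermal quasiparticle. I then expand this integral first in $\omega_\Bog(\k)/\nu$ and then in $\beta\nu$, which are the opposite scalings from Theorem \ref{thm: Landau damping}.

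First I would remove the cut-off. For $|\p|,|\p-\k|<\Lambda$ we have $\hat v=\hat v(\0)$, so $\omega_\Bog(p)=\sqrt{p^4/4+\nu p^2}$ and $s_\p,c_\p$ are explicit. For fixed $\beta$, the integrand of \eqref{gamma_L} on the delta-shell is dominated uniformly in $\Lambda$. The reason is that the thermal factor behaves like $\e^{-\beta\omega_\Bog(p)/2}$ at large $p$, while $j^2$ grows at most polynomially. Dominated convergence then lets me set $\hat v\equiv\hat v(\0)$ everywhere.

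In that case $\omega_\Bog$ is strictly increasing and convex on $[0,\infty[$. Hence in Proposition \ref{prop: invertibility cor} only one domain survives, $J_0=[0,\infty[$, and no roton/maxon pieces appear. I would then change variables exactly as in the proof of Proposition \ref{prop: invertibility cor}: $u=\omega_\Bog(p)$, $w=\omega_\Bog(|\p-\k|)=u+\omega_\Bog(k)$ on the delta-shell, with the $y$-integration used to evaluate the delta. This gives
\[
\gamma_\LL=\frac{\hat v(0)}{64\pi\nu k\,\omega_\Bog(k)}\int_0^\infty \d u\,G(u,\omega_\Bog(k))f(u)f(u+\omega_\Bog(k))\,T_\beta(u),
\]
where $T_\beta$ is the thermal factor. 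Using \eqref{eq: thermfactor2} and $w=u+\omega_\Bog(k)$, the factor $T_\beta(u)$ simplifies to $\rho(u)-\rho(u+\omega_\Bog(k))$ times simple prefactors.

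Second, I would expand in $\omega_\Bog(k)\sim\sqrt\nu k$ small. The null identity \eqref{eq: null identities} shows that $F$ vanishes when one energy argument is zero. Hence $G(u,\omega_\Bog(k))=\omega_\Bog(k)^2\,H(u)\,(1+O(\omega_\Bog(k)^2/\nu^2))$ with an explicit function $H$ built from $c(u),s(u)$. At the same time, $\rho(u)-\rho(u+\omega_\Bog(k))=-\omega_\Bog(k)\rho'(u)(1+O(\omega_\Bog(k)\beta))$.

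Third comes the high-temperature step. Since $\beta\nu\to0$, I use $\rho(u)=\frac1{\beta u}-\frac12+\frac{\beta u}{12}+\dots$. The constant term drops out of $\rho'$, so $-\rho'(u)=\frac1{\beta u^2}(1+O(\beta^2u^2))$ on $u\lesssim\nu$. This produces the prefactor $1/(\beta\nu)$ and the relative error $O((\beta\nu)^2)$. After the factors of $k$ and $\omega_\Bog(k)$ are collected, what remains is $\frac{|\k|}{\sqrt\nu}\frac{\hat v(\0)\nu^{3/2}}{\beta\nu}$ times a universal, $\nu$-scaled integral $\int_0^\infty H(u)f(u)^2u^{-2}\,\d u$. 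I expect this integral to evaluate to the constant $3/32$ up to the normalization.

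The main obstacle is twofold. First, I must show this universal integral converges at both ends and compute it in closed form. At $u\to0$, $H(u)\sim u^2$ (compare Prop. \ref{prop: estimates for the remainder}), so the integrand is regular. At $u\to\infty$ the Bogoliubov coefficients become particle-like, and I must check that $H f^2/u^2$ decays integrably. Second, I must justify the high-temperature expansion uniformly. The region $u\gtrsim 1/\beta$, where the $1/(\beta u)$ approximation fails, has to be shown to contribute only $O((\beta\nu)^2)$ relatively. For that I would split at $u=\nu$ and at $u=1/\beta$, bound the tail by the decay of $H f^2/u^2$, and control it with a Lemma \ref{lemma: applicationf for the theorem}-type estimate. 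For the closed form I would substitute $u=\nu\sinh\phi$-type variables, which rationalize $c(u)\pm s(u)$ when $\hat v$ is constant, and integrate elementarily.
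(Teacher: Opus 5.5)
Your route is essentially the paper's: set $\hat v\equiv\hat v(\0)$ after $\Lambda\to\infty$, change to $u=\omega_\Bog(p)$ (the paper uses $z=u/\nu$), expand in $\delta=\omega_\Bog(k)/\nu$, let $\beta\nu\to0$, and evaluate a universal integral. The leading term comes out right.

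\textbf{Leading constant.} In $z$, your integrand is proportional to $h(z)/z^2$, where
$$h(z)=\frac{(2C+1)^2(C-1)^2}{2C^4},\qquad C=\sqrt{1+z^2}.$$
This is exactly what your $u=\nu\sinh\phi$ substitution produces. Near $z=0$ one has $h(z)=\frac98z^4+O(z^6)$, so your $H$ is $O(u^4)$, not $O(u^2)$. Moreover $\int_0^\infty h(z)z^{-2}\,\d z=\frac{3\pi}{8}$, which gives the constant $\frac{3}{32}$.

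\textbf{The step that fails: the remainder $O((\beta\nu)^2)$.}
\begin{itemize}
\item As $z\to\infty$ one has $h(z)\to2$. Equivalently, to leading order in $|\k|$, $j(\p-\k;\k,\p)^2\to2\hat v(\0)\omega_\Bog(\k)$ as $|\p|\to\infty$. So your weight $Hf^2/u^2$ decays only like $u^{-2}$.
\item In the band $\nu\lesssim u\lesssim1/\beta$, the error $(\beta u)^2$ is integrated against $u^{-2}$. This band, and also the tail $u\gtrsim1/\beta$, each give a relative error of order $\beta\nu$.
\item Nothing cancels. The constant $-\frac12$ in $\rho$ does not drop out, because $\int_U^\infty(-\rho'(u))\,\d u=\rho(U)=\frac1{\beta U}-\frac12+\dots$.
\end{itemize}
Quantitatively, at leading order in $\delta$ the exact $\beta$-dependence is the factor $(\epsilon z/2)^2/\sinh^2(\epsilon z/2)$ with $\epsilon=\beta\nu$. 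Then
\[
\int_0^\infty\frac{h(z)}{z^2}\Big(1-\frac{(\epsilon z/2)^2}{\sinh^2(\epsilon z/2)}\Big)\d z
=\epsilon\int_0^\infty\frac{h(s/\epsilon)}{s^2}\Big(1-\frac{s^2}{4\sinh^2(s/2)}\Big)\d s
=\epsilon\big(1+o(1)\big),
\]
since $h(s/\epsilon)\to2$ and $\int_0^\infty\big(s^{-2}-\frac14\sinh^{-2}(s/2)\big)\d s=\frac12$. Hence
$$\gamma_\LL=\frac{3\hat v(\0)\nu^{3/2}}{32}\frac{|\k|}{\sqrt\nu}\frac{1}{\beta\nu}\Big(1-\frac{8}{3\pi}\beta\nu+o(\beta\nu)+O(|\k|^2/\nu)\Big).$$
Your splitting argument can therefore give at best $O(\beta\nu)$, and the $O((\beta\nu)^2)$ remainder in the statement is in fact false. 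The paper's proof does not establish it either. It passes to $\beta\nu\to0$ by dominated convergence, which yields the constant but no rate.

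\textbf{A smaller problem: the expansions in $\omega_\Bog(k)$ are not uniform in $u$.}
\begin{itemize}
\item As $u\to0$, $G\approx9u^2\omega_\Bog(k)^2(u+\omega_\Bog(k))^2/\nu$ and $\rho(u)-\rho(u+\omega_\Bog(k))\approx\omega_\Bog(k)/(\beta u(u+\omega_\Bog(k)))$. The relative errors are thus $O(\omega_\Bog(k)/u)$, not $O(\omega_\Bog(k)^2/\nu^2)$ or $O(\beta\omega_\Bog(k))$.
\item For $u\sim\nu$, $F(u+w;u,w)$ has a nonzero $w^2$-coefficient. So $G$, and also $f(u+\omega_\Bog(k))$, carry a genuine first-order correction in $\omega_\Bog(k)/\nu$.
\end{itemize}
The remainder $O(|\k|^2/\nu)$ is nevertheless correct, but only because the total first-order correction integrates to zero over $z\in[0,\infty)$. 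In the variable $C$ it reduces to rational integrals whose rational and $\ln 2$ parts cancel. You have to verify this cancellation; it does not follow from the expansions as you wrote them.
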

\begin{proof}	
	In this case one cannot avoid taking into account the high momenta region of the integral.  Nonetheless, we can extimate the final result by considering the flat Fourier transform approximation, i.e. we first remove the cut-off by taking the limit $\Lambda\rightarrow+\infty$ and work directly with $\hat{v}(\k) = \hat{v}(\0)$. 
	In this approximation, the Bogoliubov dispersion relation take the form \begin{equation}
		\omega_\Bog(k) = \sqrt{\frac{k^4}{4} + \nu k^2}.
	\end{equation}This can  be exactly inverted for all $k$ as \begin{equation}
	k(\omega) = \sqrt{-2\nu+ 2\sqrt{\nu^2 +\omega^2}},
	\end{equation}and thus we can change variables to $x$ and $y$ as in \eqref{eq: change in x and y 1},\eqref{eq: change in x and y 2}. Moreover, we further rescale and shift $x$ as \begin{equation}
		z: = \frac{x-\omega_\Bog(k)}{2\nu}
	\end{equation} to obtain \begin{align}\label{eq: high temperature integral}
		\frac{\hat{v}(0)}{32\pi k \omega_\Bog(k)}\int_0^{+\infty} \d z  G\left(\nu z, \nu\delta\right)f\left(\nu z \right)f\left(\nu z + \nu\delta\right) \notag \\ \times \frac{\left(\e^{\theta}- 1\right) \e^{\beta\nu z}}{(\e^{\beta\nu z}-1)(\e^{\beta \nu z + \theta}-1)},
	\end{align} where we have further introduced \begin{equation}
	\delta: = \frac{\omega_\Bog(k)}{\nu},\qquad \theta:= \beta\omega_\Bog(k).
	\end{equation} The flat Fourier approximation introduces many other simplifications, in particular we can write explicitely some of the functions in \eqref{eq: high temperature integral}. Namely, we have for $f$ \begin{equation}
	f(\nu z) = \frac{1}{\nu\sqrt{1 +  z^2}}, \qquad f(\nu z+\nu \delta) =\frac{1}{\nu\sqrt{1 +  (z+\delta)^2}}
	\end{equation} while $G(\nu z, \nu \delta)$ simplifies to \begin{align}
	G(\nu z,\nu\delta) =& 4\nu^2\left[c(\nu z+\nu\delta)c(\nu z)c(\nu \delta) -s(\nu z+ \nu \delta)s(\nu z)s(\nu \delta)\right. \notag\\ &+s(\nu z+\nu \delta)s(\nu\delta)c(\nu z) - c(\nu z+\nu \delta)c(\nu\delta) s(\nu z) \notag\\ 
	&\left.+s(\nu z+\nu\delta)s(\nu z) c(\nu\delta) - c(\nu z+\nu\delta)c(\nu z)s(\nu\delta)\right]^2.
	\end{align}
	
	Then, we can proceed with a low momentum expansion $\frac{\omega_\Bog(k)}{\nu}=: \delta \to 0$ of Eq. \eqref{eq: high temperature integral} in the flat Fourier approximation. This can be done in a uniform fashion with respect to $z$, resulting in the following expression for the integrand
	 \begin{align}
		\frac{\hat{v}(0)\nu^{\frac{3}{2}}}{16\pi}\delta \frac{\beta\nu }{\sinh(\frac{\beta\nu z}{2})^2}\left\{\left(\frac{3}{2}-\frac{1}{\sqrt{z^2+1}}-\frac{1}{z^2+1}+\frac{1}{2(z^2+1)^2}\right) \right.\notag \\
		+   z^2 \left(\frac{1}{2(z^2+1)} -\frac{1}{(z^2+1)^{3/2}} +\frac{3}{2}\frac{1}{(z^2+1)^2} -\frac{3}{2}\frac{1}{(z^2+1)^3}\right)\notag \\ \left. -\frac{3}{2}\frac{z^4}{(z^2+1)^3}\right\} (1+ O(\delta^2))
	\end{align} Furthermore, in the high temperature limit $\beta\nu \to 0$, a simple dominate convergence theorem shows that the main contribution to the integral behaves as \begin{align}
		& \frac{\hat{v}(0)\nu^{\frac{3}{2}}}{4\pi}\delta\frac{1}{\beta\nu}\int^{+\infty}_{0}\d z \frac{1}{z^2}\bigg{\{}\left(\frac{3}{2}-\frac{1}{\sqrt{z^2+1}}-\frac{1}{z^2+1}+\frac{1}{2(z^2+1)^2}\right) \notag \\
		&+ z^2 \left(\frac{1}{2(z^2+1)} -\frac{1}{(z^2+1)^{3/2}} +\frac{3}{2}\frac{1}{(z^2+1)^2} -\frac{3}{2}\frac{1}{(z^2+1)^3}\right) \notag \\ &-\frac{3}{2}\frac{z^4}{(z^2+1)^3}\bigg{\}}.
	\end{align} This latter integral can be evaluated explicitely as \begin{align}
		&\bigg{[}\left( -\frac{3}{2z} +\frac{\sqrt{z^2+1}}{z} +\arctan{z}+\frac{1}{z} -\frac{3}{4}\arctan{z} -\frac{z}{4(z^2+1)}-\frac{1}{2z}\right) \notag\\ +& \left(\frac{1}{2}\arctan{z} -\frac{z}{\sqrt{z^2+1}}+\frac{3}{4}\arctan{z} +\frac{3}{2}\frac{z}{2z^2+2}\notag\right. \\-& \left.\frac{9}{16}\arctan{z}-\frac{9z^3+15z}{16(z^2+1)^2} \right)-\frac{3}{16}\arctan{z} -\frac{3}{16}\frac{z^3-z}{(z^2+1)^2}\bigg{]}^{+\infty}_{0} = \frac{3}{8}\pi
	\end{align}
	Hence, in the high temperature limit the Landau damping is given by \begin{align}
		\gamma_\LL(\k;\beta,\nu) = \frac{3\hat{v}(\0)\nu^{3/2}}{32}\frac{1}{\beta\nu}\frac{|\k|}{\sqrt{\nu}}\left(1+ O\left( \frac{|\k|^2}{\nu}+(\beta\nu)^2\right)\right) \notag \\ \text{as }\frac{|\k|}{\sqrt{\nu}},\beta\nu \to 0.
	\end{align}

\end{proof}

\begin{remark}
	In addition to the previous consideration regarding the well-foundedness of the hypothesis $T<T_c$, we observe how Eq. \eqref{eq: high temp landau rate} is divergent to the leading order as $\beta\nu\to 0$, which is manifestation of the non validity of the high temperature limit.
\end{remark}

\paragraph{\textbf{Acknowledgements.}} L. P. is grateful for the support of the National Group of Mathematical Physics (GNFM-INdAM).
\vspace{5mm}

\paragraph{\textbf{Data availability statement.}}
Data sharing is not applicable to this article as no new data were created or analysed in this study.
\vspace{5mm}

\paragraph{\textbf{Conflict of interest statement.}} The authors certify that they have no affiliations with or involvement in any
organization or entity with any financial interest or non-financial interest in
the subject matter discussed in this manuscript.

\end{document}